\newcommand{\set}[2]{\{\,#1 \mid #2\,\}}
\newtheorem{defn}{Definition}
\newtheorem{thm}{Theorem}
\newtheorem{oldtheorem}{Theorem}
\newtheorem{lm}{Lemma}
\newtheorem{ex}{Example}
\newtheorem{conjecture}{Conjecture}
\newtheorem{cor}{Corollary}
\newtheorem{rem}{Remark}
\DeclareMathOperator{\poly}{poly}
\DeclareMathOperator{\rat}{rat}
\DeclareMathOperator{\Dual}{asSeries}
\DeclareMathOperator{\fin}{end}
\DeclareMathOperator{\final}{final}
\DeclareMathOperator{\Left}{Left}
\DeclareMathOperator{\Right}{Right}
\DeclareMathOperator{\Center}{Center}
\newcommand{\N}{\mathbb{N}}
\newcommand{\Q}{\mathbb{Q}}
\newcommand{\F}{\mathbb{F}_2}
\newcommand{\A}{\mathcal{A}}
\newcommand{\B}{\mathcal{B}}
\newcommand{\C}{\mathcal{C}}
\begin{document}

\title{Bounded Languages Described by GF(2)-grammars}
\author{Vladislav Makarov\thanks{This work was presented at the DLT 2021 conference held
	in Porto, Portugal on 16--20 August 2021, and its shorter
	version appeared in the conference proceedings:
	N. Moreira, R. Reis (Eds.), \emph{Developments in Language Theory},
	LNCS 12811, pp. 279--290.} \\ Saint Petersburg State University}

\maketitle

\begin{abstract}
GF(2)-grammars are a (somewhat) recently introduced grammar family that have some unusual 
algebraic properties and are closely connected to the family of unambiguous context-free grammars.
By using the method of formal power series, we establish strong conditions
that are necessary for a bounded language (a language is called bounded if it is a subset of $w_1^* w_2^* \ldots w_k^*$ for some positive integer $k$ and some strings $w_1$, $w_2$, $\ldots$, $w_k$) to be described by a GF(2)-grammar. 
By further applying the established results, we settle the long-standing open question of
proving the inherent ambiguity of the language $\set{a^n b^m c^\ell}{n \neq m \text{ or } m \neq \ell}$,
as well as give a new, purely algebraic, proof of the inherent ambiguity of the language $\set{a^n b^m c^\ell}{n = m \text{ or } m = \ell}$.

\textbf{Keywords:} Formal grammars, finite fields, bounded languages, unambiguous grammars, inherent ambiguity.

\end{abstract}

\sloppy

\section{Introduction}\label{intro}

GF(2)-grammars, recently introduced by Bakinova et al.~\cite{gf2},
and further studied by Makarov and Okhotin~\cite{grammars-gf2},
are a variant of ordinary context-free grammars (or just \emph{ordinary grammars}, 
as I will call them later in the text),
in which the disjunction is replaced by the exclusive OR,
whereas the classical concatenation
is replaced by a new operation called GF(2)-concatenation:
$K \odot L$ is the set of all strings
with an odd number of partitions
into a concatenation of a string in $K$ and a string in $L$.

There are several reasons for studying GF(2)-grammars. Firstly, they are a class of grammars
with better algebraic properties, compared to ordinary grammars and similar grammar families,
because the underlying boolean semiring logic is replaced by the logic of the field with two elements.  As we will
see later in the paper, that makes GF(2)-grammars lend themselves very well to algebraic manipulations.

Secondly, GF(2)-grammars provide a new way of looking at unambiguous grammars.
For example, instead of proving that some language is inherently ambiguous, one can prove that no GF(2)-grammar describes it.
While the latter condition is, strictly speaking, stronger, it may turn out to be easier to prove, 
because the family of languages defined by GF(2)-grammars has good algebraic properties
and is closed under symmetric difference.

Finally, GF(2)-grammars generalize the notion of parity nondeterminism to grammars.
Recall that the most common types of nondeterminism that are considered in complexity theory are classical
nondeterminism, which corresponds to the \emph{existence} of an accepting computation,
unambiguous nondeterminism, which corresponds to the existence of a \emph{unique} accepting computation
and parity nondeterminism, which corresponds to the \emph{number of accepting computations being odd}.

In a similar way,
classical and parity nondeterminism can be seen as two different generalisations of unambiguous nondeterminism: if the number 
of accepting computations is in the set $\{0,1\}$, then it is positive (classical case) if and only if it is odd (parity case); the same is not true for larger numbers, of course.

The main result of this paper is Theorem~\ref{Main_many}, which establishes a 
strong necessary conditions on subsets of $a_1^* a_2^* \cdots a_k^*$ that are described by GF(2)-grammars. Theorem~\ref{Main_abc}, a special case of Theorem~\ref{Main_many}, 
implies that there are no GF(2)-grammars for the languages 
$L_1 \coloneqq \set{a^n b^m c^\ell}{n = m \text{ or } m = \ell}$ and $L_2 \coloneqq \set{a^n b^m c^\ell}{n \neq m \text{ or } m \neq \ell}$.

As a consequence, both languages are inherently ambiguous. For $L_1$, all previously known arguments establishing its inherent ambiguity
were combinatorial, mainly based on Ogden's lemma. 

Proving the inherent ambiguity of $L_2$ was a long-standing open question due to Autebert et al.~\cite[p. 375]{autebert}. 
There is an interesting detail here: back in 1966, Ginsburg and Ullian fully characterized bounded
languages described by unambiguous grammars in terms of semi-linear sets~\cite[Theorems 5.1 and 6.1]{unamb-class}. However, most natural ways to apply this characterization suffer
from the same limitation: they mainly rely on strings that \emph{are not} in the language and much less on the strings
that \emph{are}. Hence, ``dense'' languages like $L_2$ leave them with almost nothing to work with. Moreover, $L_2$ has an algebraic generating function, meaning that a naive application of analytic methods cannot tackle it either. In fact, Flajolet~\cite{flajolet}, in his seminal work on analytic
methods for proving grammar ambiguity, refers to the inherent ambiguity of $L_2$ as to a question that is still open (see page 286).

Also, there are some other important results proved in this paper, both concerned with the special case of the subsets of $a^* b^*$.
Firstly, it is shown that there is a general family of subsets of $a^* b^*$ that can assuredly be described by a GF(2)-grammar. The
description of the said family is very similar to the upper bound on the describable subsets. Secondly, an example of a describable
subset of $a^* b^*$ is given that \emph{is not} a part of the given family, refuting the natural hypothesis that the obtained family
is in fact exactly the family of the subsets of $a^* b^*$ that can be described by GF(2)-grammars.

Finally, it is shown that the results on the subsets of $a_1^* a_2^* \ldots a_k^*$ (the so-called \emph{letter-bounded} languages)
can be extended to the general case of \emph{bounded} languages. Bounded languages are defined in the following way: a language $L \subseteq \Sigma^*$ is called bounded, if there exists a positive integer $k$ and $k$ non-empty strings $w_1$, $w_2$, $\ldots$, $w_k$ over the alphabet $\Sigma$, such that $L$ is a subset of $w_1^* w_2^* \ldots w_k^*$. Hence, the letter-bounded languages are a special case
of the general bounded languages; specifically, the case when each $w_i$ is actually a single character (and all $w_i$ are pairwise different).

It would be intuitive to expect that the structure of the general bounded languages described by GF(2)-grammars is much more complicated then the structure of the letter-bounded languages described by GF(2)-grammars. After all, the definition of a bounded language does not deal in any way with the
possibility of a string from $L$ having multiple different representations of the form $w_1^{\ell_1} w_2^{\ell_2} \ldots w_k^{\ell_k}$, where all $\ell_i$ are nonnegative. On the other hand, GF(2)-grammars are highly sensitive to ambiguity issues. However, somewhat surprisingly, Theorem~\ref{full-bounded} shows that the case of the general bounded languages is not much harder than the case of letter-bounded languages.

\section{Basics}\label{section_basic}

The proofs that we will see later make heavy use of algebraic methods.
For the algebraic parts, the exposition strives to be as elementary and self-contained as possible. Hence, I will prove a lot of lemmas that are by no way original and may be considered trivial by someone
with good knowledge of commutative algebra. This is the intended effect; if you consider something
to be trivial, you can skip reading the proof. If, on the other hand, you have some basic knowledge
of algebra, but still find some of the parts to be unclear, you may contact me and I will try to 
find a better wording. The intended ``theoretical minimum'' is being at least somewhat
familiar with concepts of polynomials, rational functions and formal power series.

Let us recall the definition and the basic properties of GF(2)-grammars first.
This section is completely based on already published work: the original paper about GF(2)-operations by Bakinova et al.~\cite{gf2} and the paper about basic properties of GF(2)-grammars by Makarov and Okhotin~\cite{grammars-gf2}. Hence, all the proofs are omitted; for proofs and 
more thorough commentary on definitions refer to the aforementioned papers. If you are already familiar with both of them, you may skip straight to the next section.
 
GF(2)-grammars are built upon GF(2)-operations~\cite{gf2}: symmetric difference and a new
operation called GF(2)-concatenation:
 \begin{align*}
	K \odot L
		= 
	\set{w}{\text{the number of partitions } w=uv,
		\text{ with } u \in K
		\text{ and } v \in L, \text{ is odd}}
\end{align*}
 
Syntactically, GF(2)-grammars do not differ from ordinary grammars.
However, in the right-hand sides of the rules, 
the normal concatenation is replaced with GF(2)-concatenation, whereas multiple
rules for the same nonterminal correspond to the symmetric difference of given conditions, instead of their disjunction.

\begin{defn}[\cite{gf2}]
A GF(2)-grammar is a quadruple $G=(\Sigma, N, R, S)$,
where:
\begin{itemize}
\item
	$\Sigma$ is the alphabet of the language;
\item
	$N$ is the set of nonterminal symbols;
\item
	every rule in $R$ is of the form
	$A \to X_1 \odot \ldots \odot X_\ell$,
	with $\ell \geqslant 0$ and $X_1, \ldots X_\ell \in \Sigma \cup N$,
	which represents all strings that have an odd number of partitions
	into $w_1 \ldots w_\ell$, with each $w_i$ representable as $X_i$;
\item
	$S \in N$ is the initial symbol.
\end{itemize}
The grammar must satisfy the following condition.
Let $\widehat{G}=(\Sigma, N, \widehat{R}, S)$ be the corresponding ordinary grammar,
with
$\widehat{R}=\set{A \to X_1 \ldots X_\ell}{A \to X_1 \odot \ldots \odot X_\ell \in R}$.
It is assumed that, for every string $w \in \Sigma^*$,
the number of parse trees of $w$ in $\widehat{G}$ is finite;
if this is not the case, then $G$ is considered ill-formed.

Then, for each $A \in N$,
the language $L_G(A)$ is defined as the set of all strings 
with an odd number of parse trees as $A$ in $\widehat{G}$.
\end{defn}

\begin{oldtheorem}[\cite{gf2}]\label{gf2_grammar_by_language_equations_theorem}
Let $G=(\Sigma, N, R, S)$ be a GF(2)-grammar.
Then the substitution $A=L_G(A)$ for all $A \in N$
is a solution of the following system of language equations.
\begin{align*}
	A
		=
	\bigtriangleup_{A \to X_1 \odot \ldots \odot X_\ell \in R}
	X_1 \odot \ldots \odot X_\ell
	&& (A \in N)
\end{align*}
\end{oldtheorem}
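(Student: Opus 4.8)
The plan is to work with the integer counts of parse trees and to reduce everything modulo $2$, exploiting that the reduction map $\Z \to \F$ is a ring homomorphism. For a symbol $A \in N$ and a string $w \in \Sigma^*$, let $c_A(w) \in \N$ denote the number of parse trees of $w$ as $A$ in the ordinary grammar $\widehat{G}$; extend this to terminals $a \in \Sigma$ by setting $c_a(w) = 1$ if $w = a$ and $c_a(w) = 0$ otherwise. The well-formedness condition imposed on $G$ guarantees that every $c_A(w)$ is finite, so all the manipulations below are legitimate. By the very definition of $L_G(A)$, we have $w \in L_G(A)$ if and only if $c_A(w)$ is odd, so the whole theorem will follow once I show that the parity of $c_A(w)$ agrees with membership in the right-hand side.

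The key structural step is a counting recurrence for parse trees. A parse tree of $w$ as $A$ in $\widehat{G}$ is the same datum as a choice of a rule $A \to X_1 \ldots X_\ell \in \widehat{R}$ applied at the root, together with a factorization $w = w_1 \ldots w_\ell$ and, for each $i$, a parse tree of $w_i$ as $X_i$. This correspondence is a bijection, and since by well-formedness only finitely many factorizations contribute, it yields
\[
	c_A(w)
		=
	\sum_{A \to X_1 \ldots X_\ell \in \widehat{R}} \;
	\sum_{w = w_1 \ldots w_\ell} \;
	\prod_{i=1}^{\ell} c_{X_i}(w_i),
\]
where the empty right-hand side $\ell = 0$ contributes a single empty product, equal to $1$ exactly when $w = \varepsilon$, matching the empty GF(2)-concatenation $\{\varepsilon\}$. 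Each rule $A \to X_1 \ldots X_\ell$ of $\widehat{R}$ corresponds to the rule $A \to X_1 \odot \ldots \odot X_\ell$ of $R$.

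It remains to read this identity modulo $2$. Because reduction modulo $2$ commutes with the finite sums and products above, the parity of $c_A(w)$ equals the same expression computed in $\F$. For each factor, $c_{X_i}(w_i) \bmod 2$ equals $1$ if $w_i \in L(X_i)$ and $0$ otherwise, where $L(X_i) = L_G(X_i)$ for a nonterminal and $L(X_i) = \{X_i\}$ for a terminal; this is once more the definition of $L_G$, not an inductive hypothesis, so no induction on the length of $w$ is needed. For a fixed rule, the reduction modulo $2$ of $\sum_{w = w_1 \ldots w_\ell} \prod_i c_{X_i}(w_i)$ counts, modulo $2$, the factorizations of $w$ into factors lying in the respective $L(X_i)$, which is precisely the condition for $w \in X_1 \odot \ldots \odot X_\ell$. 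Finally, summing these contributions over all rules for $A$ modulo $2$ realizes the symmetric difference $\bigtriangleup$ over the rules. Hence $c_A(w)$ is odd if and only if $w$ lies in $\bigtriangleup_{A \to X_1 \odot \ldots \odot X_\ell \in R} X_1 \odot \ldots \odot X_\ell$, which is the claimed equation.

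The only genuinely substantive point is the parse-tree bijection underlying the recurrence, together with the well-formedness hypothesis that keeps all counts finite; everything else is the bookkeeping observation that the three GF(2) features — odd number of parse trees, odd number of factorizations, and symmetric difference — are all just the image under $\Z \to \F$ of ordinary counting. I expect the main care to be needed in stating the root-decomposition bijection precisely and in handling the degenerate cases $\ell = 0$ and terminal symbols $X_i \in \Sigma$ uniformly.
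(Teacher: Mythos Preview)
Your argument is correct and is essentially the standard proof: decompose each parse tree at the root to obtain the integer recurrence for $c_A(w)$, then reduce modulo~$2$, using that reduction is a ring homomorphism and that well-formedness keeps all sums finite. The only point worth stating more carefully is that the $\ell$-fold GF(2)-concatenation $X_1 \odot \ldots \odot X_\ell$ is, by the grammar definition in the paper, already the set of strings with an odd number of factorizations into pieces from $L(X_1), \ldots, L(X_\ell)$, so no associativity lemma for $\odot$ is needed; you implicitly rely on this, and it is fine.

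As for comparison with the paper: the paper does not give its own proof of this theorem at all. It is quoted from~\cite{gf2} in Section~\ref{section_basic}, where the author explicitly states that all proofs in that section are omitted and refers the reader to the cited papers. So there is nothing to compare against here; your write-up simply supplies a self-contained proof where the paper chose to cite one.
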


Multiple rules for the same nonterminal symbol
can be denoted by separating the alternatives with the ``sum modulo two'' symbol ($\oplus$),
as in the following example.

\begin{ex}[\cite{gf2}]
The following GF(2)-linear grammar defines the language
$\set{a^\ell b^m c^n}{\ell=m \text{ or } m=n, \text{ but not both}}$.
\begin{align*}
	S &\to A \oplus C \\
	A &\to a A \oplus B \\
	B &\to b B c \oplus \epsilon \\
	C &\to C c \oplus D \\
	D &\to a D b \oplus \epsilon
\end{align*}
Indeed, each string $a^\ell b^m c^n$ with $\ell=m$ or with $m=n$ has a parse tree,
and if both equalities hold, then there are accordingly two parse trees,
which cancel each other.
\end{ex}

\begin{ex}[\cite{gf2}]\label{example_2_to_n}
The following grammar describes the language $\set{a^{2^n}}{n \geqslant 0}$.
\begin{equation*}
	S \to (S \odot S) \oplus a
\end{equation*}
The main idea behind this grammar
is that the GF(2)-square $S \odot S$ over a unary alphabet
doubles the length of each string:
$L \odot L=\set{a^{2\ell}}{a^\ell \in L}$.
The grammar iterates this doubling to produce all powers of two.
\end{ex}

As the previous example illustrates, GF(2)-grammars can describe non-regular
unary languages, unlike ordinary grammars. 
We will need the classification of unary languages describable
by GF(2)-grammars in the following Sections.

\begin{defn}
A set of nonnegative integers $S \subseteq \N_0$
is called \emph{$q$-automatic}~\cite{automatic-sequences},
if there is a finite automaton
over the alphabet $\Sigma_q=\{0, 1, \ldots, q-1\}$
recognizing base-$q$ representations of these numbers.
\end{defn}

Let $\mathbb{F}_q [t]$
be the ring of polynomials over the $q$-element field GF($q$),
and let $\mathbb{F}_q [[ t ]]$
denote the ring  
of formal power series over the same field.

\begin{defn}
A formal power series $f \in \mathbb{F}_q [[ t ]]$ is said to be algebraic,
if there exists a non-zero polynomial $P$ with coefficients from $\mathbb{F}_q[t]$,
such that $P(f) = 0$.
\end{defn}

\begin{oldtheorem}[Christol's theorem for GF(2)~\cite{christol}]
A formal power series
$\sum_{n = 0}^{\infty} f_n t^n \in \mathbb{F}_2 [[t]]$
is algebraic if and only if
the set $\set{n \in \N_0}{f_n = 1}$ is 2-automatic.
\end{oldtheorem}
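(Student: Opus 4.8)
The statement is the $q=2$ case of Christol's theorem, so I will outline a self-contained argument built around the two decimation operators. Recall the \emph{$2$-kernel} of $f=\sum_{n} f_n t^n$, namely the set $\set{(f_{2^k n+r})_{n\ge 0}}{k\ge 0,\ 0\le r<2^k}$ of all its $2$-power subsequences. By the Eilenberg kernel criterion for automatic sequences, $\set{n}{f_n=1}$ is $2$-automatic if and only if this kernel is \emph{finite}, so it suffices to prove that $f$ is algebraic over $\F(t)$ exactly when its $2$-kernel is finite. The two tools I would set up are the operators $\Lambda_0,\Lambda_1$ acting on $\F[[t]]$ by $\Lambda_r\big(\sum_n a_n t^n\big)=\sum_n a_{2n+r}\,t^n$, which satisfy the Frobenius decomposition $f=(\Lambda_0 f)^2+t\,(\Lambda_1 f)^2$ (using $a^2=a$ in $\F$) and the semilinear identity $\Lambda_r(g^2 h)=g\,\Lambda_r(h)$. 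Applying words of $\Lambda_0,\Lambda_1$ to $f$ produces precisely the elements of the $2$-kernel.

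For the direction ``finite kernel $\impl$ algebraic'', list the kernel as $g_1=f,g_2,\dots,g_d$. Since each $\Lambda_r g_i$ is again a kernel element, the decomposition yields, for every $i$, a relation $g_i=g_{\sigma(i,0)}^2+t\,g_{\sigma(i,1)}^2$ with $\sigma(i,r)\in\{1,\dots,d\}$; in vector form this is a Frobenius-linear system $\vec g=A\,\vec g^{[2]}$, where $\vec g^{[2]}$ is the entrywise square and $A$ has entries in $\{0,1,t,1+t\}\subseteq\F[t]$. Eliminating the auxiliary series $g_2,\dots,g_d$ from this finite polynomial system (equivalently, observing that the finitely generated extension $\F(t)(g_1,\dots,g_d)$ is contained in the compositum of $\F(t)$ with its own Frobenius image and is therefore a finite, hence algebraic, extension of $\F(t)$) produces a single nonzero polynomial satisfied by $f=g_1$.

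The reverse direction, ``algebraic $\impl$ finite kernel'', is the substantial one. After multiplying $f$ by a suitable polynomial — which changes the coefficient sequence only by finitely many shifts and so preserves automaticity — I would assume $f$ is integral over $\F[t]$, so that $\F[t][f]=\F[t]\oplus\F[t]f\oplus\cdots\oplus\F[t]f^{n-1}$ is a free rank-$n$ module and a ring, where $n$ is the degree of $f$. Every kernel element lies in $\F(t)(f)$ and can thus be written as $h=N/Q$ with a single common denominator $Q\in\F[t]$ and numerator $N=\sum_{j<n}P_j f^{j}$, $P_j\in\F[t]$. The key manipulation is to make the denominator a square: since $h=NQ/Q^2$ and $Q^{-2}=(Q^{-1})^2$, the identity $\Lambda_r(g^2 h)=g\,\Lambda_r(h)$ gives $\Lambda_r(h)=Q^{-1}\,\Lambda_r(NQ)$, which again has denominator $Q$.

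It remains to bound the numerators uniformly. Because $\Lambda_r$ sends $t^m$ to $t^{\lfloor (m-r)/2\rfloor}$ or to $0$, it at most halves degrees; expanding $\Lambda_r(NQ)$ back into the basis $1,f,\dots,f^{n-1}$ (reducing higher powers of $f$ through its fixed monic minimal polynomial) shows that the numerator degrees of $\Lambda_r(h)$ are at most roughly half those of $h$, plus a constant depending only on $\deg Q$ and $n$. Iterating the map $D\mapsto \lfloor D/2\rfloor+C$ contracts every numerator degree below a fixed threshold, so only finitely many tuples $(P_0,\dots,P_{n-1})$ of bounded-degree polynomials over the \emph{finite} field $\F$ can occur. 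Hence the $2$-kernel is finite and $f$ is automatic. I expect this last uniform degree bound — controlling how $Q$ and the powers $f^{j}$ interact with decimation so that the degrees do not drift upward under iteration — to be the main obstacle and the step demanding the most careful bookkeeping.
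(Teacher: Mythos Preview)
The paper does not prove this statement at all: Christol's theorem is quoted in Section~\ref{section_basic} as background, with the explicit remark that all proofs in that section are omitted and the reader is referred to the original sources. So there is no ``paper's own proof'' to compare against; your outline is the standard Cartier-operator/kernel proof and is broadly sound.

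That said, one step deserves tightening. In the direction ``algebraic $\Rightarrow$ finite kernel'' you write that multiplying $f$ by a polynomial ``changes the coefficient sequence only by finitely many shifts and so preserves automaticity'', and then pass to an integral $\widetilde f = a_n f$. Multiplication by a polynomial is a finite $\F$-linear combination of shifts, so $f$ automatic $\Rightarrow$ $\widetilde f$ automatic is immediate; but you need the \emph{converse} (deduce automaticity of $f$ from that of $\widetilde f$), and that is not a one-liner. The clean fix is not to reduce to the integral case at all: keep the minimal polynomial $\sum_{i\le n} a_i f^i=0$ with $a_n\neq 0$, take the common denominator $Q$ to be divisible by a suitable power of $a_n$, and run your $\Lambda_r(h)=Q^{-1}\Lambda_r(NQ)$ argument directly; the reduction of $f^n$ then introduces only bounded powers of $a_n$, which are absorbed into $Q$. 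Your parenthetical in the other direction (``compositum of $\F(t)$ with its own Frobenius image'') is also slightly garbled --- what you want is $K\subseteq \F(t)\cdot K^2$, forcing $[K:K^2]\le 2$ and hence transcendence degree $1$ --- but the intended argument is correct.
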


\begin{oldtheorem} [Unary languages described by GF(2)-grammars~\cite{grammars-gf2}] 
\label{unary-alphabet} For a unary alphabet, the class of all $2$-automatic languages
coincides with the class of all languages described by GF(2)-grammars.
\end{oldtheorem}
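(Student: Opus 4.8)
The plan is to route everything through Christol's theorem, which already identifies $2$-automatic sets with algebraic power series over $\F(t)$, and to add a dictionary between unary GF(2)-operations and the ring $\F[[t]]$. Encoding a unary language $L \subseteq a^*$ by its characteristic series $f_L = \sum_{a^n \in L} t^n \in \F[[t]]$, the first observation is that the coefficient of $t^n$ in the product $f_K \cdot f_L$ is $\sum_{i+j=n}[a^i \in K][a^j \in L] \bmod 2$, i.e. exactly the parity of the number of partitions defining $K \odot L$. Hence $f_{K \odot L} = f_K f_L$, $f_{K \triangle L} = f_K + f_L$, $f_a = t$ and $f_{\{\epsilon\}} = 1$, so the two GF(2)-operations become the ring operations of $\F[[t]]$. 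The whole statement therefore reduces, via Christol, to the single claim: a unary language is described by some GF(2)-grammar if and only if its characteristic series is algebraic over $\F(t)$.

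For the forward direction I would lift the grammar to characteristic zero. By Theorem~\ref{gf2_grammar_by_language_equations_theorem} the languages $L_G(A)$ solve the system $A = \bigtriangleup_{A \to X_1 \odot \ldots \odot X_\ell \in R} X_1 \odot \ldots \odot X_\ell$; let $d_A(n)$ count the parse trees of $a^n$ as $A$ in $\widehat{G}$, which are finitely many by well-formedness. The integer series $D_A = \sum_n d_A(n) t^n \in \Z[[t]]$ satisfy the same system read over $\Z[t]$ (the usual tree-counting recursion), and $f_{L_G(A)} = D_A \bmod 2$. Since $D_S$ is a parse-counting series of a context-free grammar, it is algebraic over $\Q(t)$ by the classical Chomsky--Schützenberger theory of proper algebraic systems; picking a primitive $P \in \Z[t,y]$ with $P(t, D_S) = 0$ and reducing mod $2$ yields $\overline{P}(t, f_{L_G(S)}) = 0$ with $\overline{P} \neq 0$ (primitivity), where $\overline{P}$ must involve $y$, as otherwise the reduced identity would force $\overline{P}$ to vanish. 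Thus $f_{L_G(S)}$ is algebraic, and Christol's theorem gives $2$-automaticity.

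For the backward direction, let $L \subseteq a^*$ be $2$-automatic and $f := f_L$, which is algebraic by Christol; equivalently, by the standard kernel characterization of $2$-automatic sequences, the $2$-kernel $\{h_1, \ldots, h_k\}$ (all series obtained by iterating the section maps $g \mapsto g^{[0]}, g^{[1]}$ with $g^{[j]} = \sum_n [t^{2n+j}]g \cdot t^n$) is finite, with $h_1 = f$. Over $\F$, Frobenius gives $g^{[j]}(t^2) = (g^{[j]})^2$ and the exact decomposition
\[ h_i = (h_{\sigma(i,0)})^2 + t\,(h_{\sigma(i,1)})^2, \]
where $\sigma(i,j)$ indexes the $j$-section of $h_i$. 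Translating back through the dictionary, I introduce one nonterminal $E_i$ per $h_i$, take $S := E_1$, and add the rules
\[ E_i \to (E_{\sigma(i,0)} \odot E_{\sigma(i,0)}) \;\oplus\; (a \odot E_{\sigma(i,1)} \odot E_{\sigma(i,1)}). \]
Because the section maps strictly decrease the index, the resulting series recursion pins down every coefficient of positive degree by induction on $n$, so correctness of $L_G(S) = L$ on all $a^n$ with $n \geq 1$ is automatic once the grammar is well-formed.

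The hard part will be well-formedness together with the constant terms, both of which are governed by nullable nonterminals. The squaring rules $E \odot E$ are harmless on nonempty strings — a node of length $m \geq 1$ splits into two children of smaller positive length, which bounds tree depth and hence the number of parse trees — but they produce infinitely many parse trees of $\epsilon$ as soon as a nullable nonterminal sits on a cycle of the map $i \mapsto \sigma(i,0)$; correspondingly, the raw recursion fails to determine the constant terms uniquely. One cannot simply strip the constant term of $f$ once, since the odd-section map reintroduces constant terms at deeper levels. The genuine work is thus a GF(2)-analogue of $\epsilon$-elimination: compute, modulo $2$, the set of nullable nonterminals, substitute them out to obtain an equivalent $\epsilon$-free system in which every nonterminal generates only nonempty strings, and reattach membership of $0$ by an extra $\oplus\,\epsilon$ at the start symbol. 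I expect verifying that this normalization preserves both the described language and the finite kernel-based structure, while restoring well-formedness, to be the crux of the argument.
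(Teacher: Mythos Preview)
The paper does not prove this theorem; it is quoted from~\cite{grammars-gf2} and all proofs in Section~\ref{section_basic} are explicitly omitted. So there is no ``paper proof'' to compare against, and I can only assess your argument on its own.

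Your forward direction is fine, though the detour through $\Z$ and Chomsky--Sch\"utzenberger is avoidable: once you have the dictionary $f_{K\odot L}=f_Kf_L$, Theorem~\ref{gf2_grammar_by_language_equations_theorem} applied to a grammar in Chomsky normal form already gives a proper polynomial system for the $f_{L_G(A)}$ \emph{directly over} $\F[t]$, whose components are algebraic by the standard theory of algebraic systems. Going up to $\Z$ and back down works, but buys nothing.

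The backward direction has the right idea (finite $2$-kernel $+$ Frobenius squaring), but the bookkeeping around constant terms is off. With only the rules $E_i \to E_{\sigma(i,0)}\odot E_{\sigma(i,0)} \oplus a\odot E_{\sigma(i,1)}\odot E_{\sigma(i,1)}$ and no $\epsilon$-rules, \emph{no} $E_i$ is nullable (a parse tree of $\epsilon$ would have to be infinite), so your stated worry about ``infinitely many parse trees of $\epsilon$'' never materialises---but then neither $\epsilon$ nor $a$ is generated by any $E_i$, so the grammar simply under-generates. Also, the sentence ``harmless on nonempty strings --- a node of length $m\geq 1$ splits into two children of smaller positive length'' is false once a nullable nonterminal is present: the split $a^m=\epsilon\cdot a^m$ does not shorten anything. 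The clean fix is not a post-hoc $\epsilon$-elimination but to bake the constants in from the start: set $\tilde h_i:=h_i+c_i$ with $c_i$ the constant term; using $c_i=c_{\sigma(i,0)}$ one gets
\[
\tilde h_i=\tilde h_{\sigma(i,0)}^{\,2}+t\,\tilde h_{\sigma(i,1)}^{\,2}+c_{\sigma(i,1)}\,t,
\]
which translates to the well-formed rules $E_i\to E_{\sigma(i,0)}\odot E_{\sigma(i,0)}\ \oplus\ a\odot E_{\sigma(i,1)}\odot E_{\sigma(i,1)}\ (\oplus\ a\text{ if }c_{\sigma(i,1)}=1)$, with a fresh start symbol adding $\epsilon$ if $c_1=1$. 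Now every $E_i$ is non-nullable, each concatenation strictly shortens, well-formedness is immediate, and uniqueness of the series solution (which you correctly argue by degree induction for $n\geq 1$) forces $f_{L_G(E_i)}=\tilde h_i$. So the ``crux'' you anticipate is a two-line computation, not a genuine obstacle.
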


\section{Subsets of \texorpdfstring{$a^* b^*$}{a* b*}}\label{section_ab}

Suppose that some GF(2)-grammar over an alphabet $\Sigma = \{a, b\}$ 
generates a language that is a subset of $a^* b^*$.
How does the resulting language look like?  

It will prove convenient to associate subsets of $a^* b^*$ 
with (commutative) formal power series in two variables $a$ and $b$
over the field $\F$. This correspondence is similar to the correspondence 
between languages over a unary alphabet with GF(2)-operations 
($\odot, \triangle$) and formal power series of one variable 
with multiplication and addition~\cite{grammars-gf2}.

Formally speaking, for every set $S \subseteq \N_0^2$, 
the language $\set{a^n b^m}{(n, m) \in S} \subseteq a^* b^*$
corresponds to the formal power series $\sum_{(n, m) \in S} a^n b^m$
in variables $a$ and $b$. 
Let us denote this correspondence by
$\Dual \colon 2^{a^* b^*} \to \F[[a, b]]$. 
Then, $\Dual (L \triangle K) = \Dual (L) + \Dual (K)$, so the symmetric difference
of languages corresponds to the addition of power series.

On the other hand, multiplication of formal power series does not \emph{always} correspond 
to the GF(2)-concatenation of languages. 
Indeed, GF(2)-concatenation of subsets of $a^* b^*$ 
does not have to be a subset of $a^* b^*$. However, the correspondence does hold
in the following important special case.

\begin{lm}\label{comm-concat} If $K \subseteq a^*$ and $L \subseteq a^* b^*$, then 
$\Dual(K \odot L) = \Dual(K) \cdot \Dual(L)$. The same conclusion holds when 
$K \subseteq a^* b^*$ and $L \subseteq b^*$.
\end{lm}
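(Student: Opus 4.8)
The plan is to prove this by carefully tracking what GF(2)-concatenation means at the level of counting partitions, and showing that in these special cases it agrees with polynomial multiplication over $\F$. The key observation is that when $K \subset a^*$ and $L \subset a^* b^*$, the operation $K \odot L$ is genuinely ``well-behaved'': a string $w = a^n b^m$ lies in $K \odot L$ precisely when the number of ways to write $w = uv$ with $u \in K$ and $v \in L$ is odd, and since $u \in K \subset a^*$ forces $u = a^i$ for some $i$, every such partition must split $w$ as $a^i \cdot a^{n-i} b^m$ with $a^i \in K$ and $a^{n-i} b^m \in L$. Crucially, the factor $b^m$ is entirely contained in $v$, so no partition can ``cut through'' the $b$-block; this is what keeps $K \odot L$ inside $a^* b^*$ and makes the counting match the commutative convolution.

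First I would set up the coefficient-level statement. Write $\Dual(K) = \sum_i k_i a^i$ and $\Dual(L) = \sum_{(j,m)} \ell_{j,m} a^j b^m$, where each coefficient is in $\F$, equal to $1$ iff the corresponding string is in the language. Then the product $\Dual(K) \cdot \Dual(L)$ has, as the coefficient of $a^n b^m$, the sum $\sum_{i + j = n} k_i \ell_{j,m}$ computed in $\F$, i.e. modulo $2$. I would then argue that this sum counts, modulo $2$, exactly the partitions of $a^n b^m$ into a prefix in $K$ and a suffix in $L$: each term $k_i \ell_{n-i, m}$ is $1$ iff $a^i \in K$ and $a^{n-i} b^m \in L$, which is exactly the indicator of the partition at position $i$ being valid. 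Summing these indicators modulo $2$ gives the parity of the number of valid partitions, which is by definition the $a^n b^m$-coefficient of $\Dual(K \odot L)$.

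The only genuine content beyond bookkeeping is the claim that \emph{every} valid partition of $a^n b^m$ has its cut inside the $a$-block, so that partitions are indexed precisely by $i \in \{0, \dots, n\}$ and nothing is lost or double-counted; this follows immediately from $u \in K \subset a^*$. I would also note, for well-formedness, that the product is a legitimate element of $\F[[a,b]]$ because for fixed $(n,m)$ only finitely many $(i,j)$ contribute, so the convolution sum is finite. The symmetric case, with $K \subset a^* b^*$ and $L \subset b^*$, is handled by the mirror-image argument: here $v \in L \subset b^*$ forces the cut to lie inside the $b$-block, so every valid partition splits $a^n b^m$ as $a^n b^j \cdot b^{m-j}$, and the same modulo-$2$ convolution identity applies.

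I expect the main (and indeed only) obstacle to be purely expository: stating cleanly why no partition can straddle the boundary between the $a$-block and the $b$-block, and making explicit that this is exactly the hypothesis $K \subset a^*$ (resp. $L \subset b^*$) being used. Once that geometric fact is pinned down, the equality of coefficients is a one-line parity computation, and since two power series in $\F[[a,b]]$ are equal iff all their coefficients agree, the lemma follows.
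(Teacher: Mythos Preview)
Your proposal is correct and takes essentially the same approach as the paper: the paper's proof is the one-line sketch ``Follows from definitions'', and what you have written is precisely the unpacking of those definitions at the coefficient level. Your argument is more detailed than the paper's, but the underlying idea is identical.
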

\begin{proof}[Sketch of the proof] Follows from definitions.
\end{proof}

Denote the set of all algebraic power series from $\F[[a]]$ by $\A$.
By Christol's theorem~\cite{christol}, $\A$ corresponds to the set of all 2-automatic languages over $\{a\}$.
Similarly, denote the set of all algebraic power series from $\F[[b]]$ by $\B$. 

Recall that the $\F[a, b]$ denotes the set of all polynomials in variables
$a$ and $b$ and $\F(a, b)$ denotes the set of all rational functions in variables $a$ and $b$.
It should be mentioned that $\F[a, b]$ is a subset of $\F[[a, b]]$, but $\F(a, b)$ is not. Indeed, $\frac{1}{a} \in \F(a, b)$, but not in $\F[[a, b]]$.  The following statement is true: $\F(a, b) \subseteq \F((a, b))$, where $\F((a, b))$
denotes the set of all \emph{Laurent series} in variables $a$ and $b$. 
Laurent series are defined as the fractions of formal
power series with equality, addition and multiplication defined in the usual way.

We will allow Laurent series to appear in the intermediate results, because the 
intermediate calculations require division, and formal power series are not closed under
division. However, there are no Laurent series (unless they are valid formal power series as well) 
in the statements of the main theorems,  because they do not correspond to valid languages.

From now on, there are two possible roads this proof can take: the original argument that
is more elementary, but requires lengthy manipulations with what I called \emph{algebraic
expressions}, and a more abstract, but much simpler approach suggested by an anonymous
reviewer from MFCS 2020 conference, relying on well-known properties of rings and field extensions.
The main body of the paper follows the latter approach. The former approach can be found in 
the appendix.

\begin{defn}\label{ab-def}
Denote by $R_{a, b}$ the set of all Laurent series that can be represented
as $\frac{\sum_{i = 1}^n A_i B_i}{p}$, where $n$ is a nonnegative integer, $A_i \in \A$
and $B_i \in \B$ for all $i$ from $1$ to $n$, and $p \in \F[a, b]$ is a non-zero polynomial.
\end{defn}
It is not hard to see that $R_{a, b}$ is a commutative ring. However (and we will use it later a lot), a even stronger statement
is true:
\begin{lm}\label{lemma_rab} $R_{a, b}$ is a field. 
\end{lm}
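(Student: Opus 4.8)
The plan is to exhibit, for every nonzero $x \in R_{a,b}$, an explicit inverse that again has the shape required by Definition~\ref{ab-def}. Since $R_{a,b}$ is already known to be a commutative ring sitting inside the field $\F((a, b)) = \operatorname{Frac}(\F[[a, b]])$, it is an integral domain, so it suffices to invert a single nonzero element. Write such an $x$ as $x = \frac{r}{p}$ with $p \in \poly(a, b) \setminus \{0\}$ and $r = \sum_{i=1}^{n} A_i B_i$, $A_i \in \A$, $B_i \in \B$; as $x \neq 0$ we have $r \neq 0$. Denote by $\mathcal{N}$ the set of all such numerators. Because $\A$ and $\B$ are themselves rings and $a, b$ commute, every product of elements of $\A \cup \B$ collapses to a single term $AB$ with $A \in \A$, $B \in \B$, so $\mathcal{N}$ is exactly the subring of $\F((a,b))$ generated by $\A \cup \B$; in particular $\poly(a, b) \subseteq \mathcal{N}$. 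Inverting $x$ thus reduces to showing $\frac{1}{r} \in R_{a,b}$ for every nonzero $r \in \mathcal{N}$.

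The key observation is that every element of $\mathcal{N}$ is algebraic over the field $\rat(a,b) = \F(a,b)$. Indeed, a series in $\A$ is algebraic over $\F(a)$ by definition, hence a fortiori algebraic over the larger field $\F(a, b)$; likewise each element of $\B$ is algebraic over $\F(b) \subseteq \F(a, b)$. Since the elements of $\F((a, b))$ that are algebraic over $\F(a, b)$ form a field (the relative algebraic closure of $\F(a,b)$ in $\F((a,b))$), this set is closed under addition and multiplication, and therefore contains the whole ring $\mathcal{N}$ generated by $\A \cup \B$. In particular our $r$ is algebraic over $\F(a, b)$.

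It remains to turn algebraicity into an inverse of the prescribed form. I would take the minimal polynomial of $r$ over $\F(a, b)$ and clear denominators, obtaining a relation $p_d r^d + \dots + p_1 r + p_0 = 0$ with all $p_j \in \poly(a,b)$ and $p_d \neq 0$. Minimality together with $r \neq 0$ forces $p_0 \neq 0$: were $p_0 = 0$, dividing by $r$ (legitimate in the domain $\F((a,b))$) would yield a nonzero polynomial of degree $d-1$ annihilating $r$, contradicting minimality. Setting $s \coloneqq p_d r^{d-1} + \dots + p_2 r + p_1$, which lies in $\mathcal{N}$ because $\mathcal{N}$ is a ring containing $r$ and all the polynomial coefficients, the relation reads $r\, s = p_0$ (signs being irrelevant over $\F$). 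Hence $\frac{1}{r} = \frac{s}{p_0}$, and consequently $x^{-1} = \frac{p}{r} = \frac{p\, s}{p_0}$ with $p\,s \in \mathcal{N}$ and $p_0 \in \poly(a, b) \setminus \{0\}$, which is precisely an element of $R_{a,b}$. The only genuinely non-formal ingredient is the standard fact that the elements algebraic over a field form a field; everything else is bookkeeping, so I expect that — once the move from ``algebraic over $\F(a)$'' to ``algebraic over $\F(a,b)$'' is made — to be the entire content of the argument, with the mild care needed to guarantee $p_0 \neq 0$ being the only place one can slip.
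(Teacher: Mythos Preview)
Your argument is correct and is essentially the same as the paper's: both identify $R_{a,b}$ with $\rat(a,b)[\A\cup\B]$ and use that every element of $\A\cup\B$ is algebraic over $\rat(a,b)$. The paper simply quotes the standard fact that adjoining algebraic elements to a field yields a field, whereas you unpack that fact via the minimal-polynomial trick to produce the inverse in the required form; this extra explicitness is harmless and, if anything, makes the proof more self-contained.
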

\begin{proof} $R_{a,b}$ is the result of adjoining the elements of $\A \cup \B$, which are all algebraic over $\F(a,b)$, to $\F(a,b)$. It is known that the result of adjoining
an arbitrary set of algebraic elements to a field is a larger field.
\end{proof}

\subsection{The Main Result for Subsets of \texorpdfstring{$a^* b^*$}{a*b*}}

Let us establish our main result about subsets of $a^* b^*$.

\begin{thm}\label{Main_ab} Assume that a language $K \subseteq a^*  b^*$ 
is described by a GF(2)-grammar.
Then, the corresponding power series $\Dual(K)$ is in the set $R_{a, b}$.
\end{thm}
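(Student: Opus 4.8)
The plan is to analyse the languages $L_G(A)$ generated by the individual nonterminals of $G = (\Sigma, N, R, S)$, restricted to $a^* b^*$. For each nonterminal $A$, I would write $A_a = L_G(A) \cap a^*$ (placing the empty word, by convention, in this part), $A_b = L_G(A) \cap b^+$ and $A_{ab} = L_G(A) \cap a^+ b^+$, so that $L_G(A) \cap a^* b^* = A_a \sqcup A_b \sqcup A_{ab}$ and $\Dual(L_G(A)\cap a^* b^*) = \Dual(A_a) + \Dual(A_b) + \Dual(A_{ab})$. Since $K = L_G(S) \subset a^* b^*$, the theorem reduces to showing $\Dual(S_a), \Dual(S_b), \Dual(S_{ab}) \in R_{a, b}$. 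I would prove the three statements $\Dual(A_a) \in \A$, $\Dual(A_b) \in \B$ and $\Dual(A_{ab}) \in R_{a, b}$ for every $A \in N$; the first two feed into the third, and all three sets lie inside the field $R_{a, b}$.

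For the $a$-parts, a parse tree of a word $a^n$ in $\widehat{G}$ never uses a rule containing the terminal $b$; deleting all such rules and reading every remaining occurrence of a nonterminal $B$ as $B_a$ yields a GF(2)-grammar $G_a$ over the unary alphabet $\{a\}$ whose parse trees of $a^n$ coincide with those of $\widehat{G}$. Hence $L_{G_a}(A_a) = L_G(A) \cap a^*$, and $G_a$ inherits well-formedness from $G$. Theorem~\ref{unary-alphabet} then shows that each $A_a$ is $2$-automatic, so Christol's theorem gives $\Dual(A_a) \in \A$. The symmetric construction over $\{b\}$ gives $\Dual(L_G(A)\cap b^*) \in \B$, and removing the constant $\epsilon$-term keeps $\Dual(A_b) \in \B$.

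It remains to treat the $ab$-parts. The decisive structural observation is that if $w_1 \cdots w_\ell = a^n b^m$ is a factorisation with each $w_i \in a^* b^*$, then, read left to right, the factors split into an initial block lying in $a^*$, at most one transition factor lying in $a^+ b^+$, and a final block lying in $b^*$; in particular at most one factor contributes an $a^+ b^+$-part. Applying this to each rule $A \to X_1 \odot \cdots \odot X_\ell$ and using Lemma~\ref{comm-concat} to convert the commuting concatenations into products of power series, I would show that $\Dual\big((X_1 \odot \cdots \odot X_\ell)\cap a^+ b^+\big)$ is affine-linear in the unknowns $\Dual((X_i)_{ab})$: a transition at factor $t$ contributes $\big(\prod_{i < t}\Dual((X_i)_a)\big)\,\Dual((X_t)_{ab})\,\big(\prod_{i > t}\Dual((X_i)_b)\big)$, while terms with the boundary strictly between two factors carry no $ab$-unknown. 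All coefficients are products of elements of $\A$ and $\B$, hence lie in $R_{a, b}$. Summing over the rules for $A$ yields a linear system $\mathbf{u} = M\mathbf{u} + \mathbf{d}$ over $R_{a, b}$, with $\mathbf{u} = (\Dual(A_{ab}))_{A \in N}$ and $M, \mathbf{d}$ having entries in $R_{a, b}$; crucially, since two transition factors can never coexist, no quadratic terms in the unknowns arise, so the system is genuinely linear.

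To finish, I would argue that $I - M$ is invertible. Its entries lie in $\F[[a, b]]$, and the constant term of $M_{A, B}$ is nonzero only when a rule contracts $A$ to $B$ through otherwise nullable nonterminals; well-formedness of $G$ forbids cycles of such contractions (a cycle would produce infinitely many parse trees of a single word), so the constant-term matrix $M(0)$ is nilpotent and $\det(I - M) \equiv \det(I - M(0)) = 1 \pmod{(a, b)}$. Thus $\det(I - M) \neq 0$, the system has a unique solution over $\F((a, b))$, and Cramer's rule places that solution in $R_{a, b}$ since $M, \mathbf{d}$ do and $R_{a, b}$ is a field; as the actual series $\Dual(A_{ab})$ solve the system, uniqueness forces $\Dual(A_{ab}) \in R_{a, b}$, whence $\Dual(K) \in R_{a, b}$. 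I expect the main obstacle to be the bookkeeping behind the structural observation — handling empty factors and the boundary-between-factors terms — together with the nilpotency argument, which is where well-formedness is genuinely used.
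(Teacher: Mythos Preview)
Your plan is essentially the paper's own argument: the three-way split into $a$-, $b$-, and $ab$-parts is exactly the Bar-Hillel intersection with the two-state automaton for $a^*b^*$ that the paper performs, the unary parts are handled via Christol in the same way, and the $ab$-parts are shown to satisfy a linear system over the field $R_{a,b}$, whose unique solution then lies in $R_{a,b}$.

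The one substantive difference is that the paper first passes to Chomsky normal form. This buys two simplifications you are doing by hand. First, with binary rules $C \to DE$ the ``structural observation'' becomes trivial (exactly one of $D,E$ carries the $ab$-part), so all your bookkeeping with empty factors and boundary-between-factors terms disappears. Second, in CNF no nonterminal is nullable, so every entry of $M$ has zero constant term and $\det(I-M)$ has constant term $\det I = 1$; no nilpotency argument is needed.

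Your nilpotency argument, as stated, has a small gap: a cycle of contractions through GF(2)-nullable contexts gives $A \Rightarrow^+ A$ in $\widehat G$, but this contradicts well-formedness only if $A$ actually derives some terminal word and is reachable from $S$. A grammar with an isolated rule $A \to A$ is well-formed yet has $M(0)_{A,A}=1$. The fix is the obvious one --- remove useless nonterminals first --- but you should say so. Alternatively, reducing to CNF at the outset, as the paper does, makes both obstacles you anticipate evaporate.
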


\begin{proof} Without loss of generality, the GF(2)-grammar 
that describes $K$ is in the Chomsky normal form~\cite[Theorem 5]{gf2}.  
Moreover, we can assume that $K$ does not contain the empty string.

The language $a^* b^*$ is accepted by the following DFA $M$:
$M$ has two states $q_a$ and $q_b$, both accepting, and its transition function is
$\delta(q_a, a) = q_a, \delta(q_a, b) = q_b, \delta(q_b, b) = q_b$.

Let us formally intersect the GF(2)-grammar $G$ with a regular language 
$a^* b^*$, recognized by the automaton $M$ 
(the construction of the intersection of an ordinary grammar with a regular expression by Bar-Hillel 
et al.~\cite{BarhillelPerlesShamir} can be easily adapted to the case of GF(2)-grammars~\cite [Section 6]{grammars-gf2}).
The language described by the GF(2)-grammar will not change, 
because it was already a subset of $a^* b^*$ before.

The grammar itself changes considerably, however. Every nonterminal
$C$ of the original GF(2)-grammar splits into three nonterminals: $C_{a \to a}$, $C_{a \to b}$.
and $C_{b \to b}$. These nonterminals will satisfy the following conditions: 
$L(C_{a \to a}) = L(C) \cap a^*$, $L(C_{b \to b}) = L(C) \cap b^*$ and
$L(C_{a \to b}) = L(C) \cap (a^* b^+)$.
Also, a new starting nonterminal $S'$ appears.

Moreover, every ``normal'' rule $C \to DE$ splits into four rules:
$C_{a \to a} \to D_{a \to a} E_{a \to a}$, $C_{a \to b} \to D_{a \to a} E_{a \to b}$, 
$C_{a \to b} \to D_{a \to b} E_{b \to b}$ and $C_{b \to b} \to D_{b \to b} E_{b \to b}$.

The following happens with ``final'' rules: $C \to b$ turns into two rules $C_{a \to b} \to b$ and $C_{b \to b} \to b$, and $C \to a$ turns into one rule $C_{a \to a} \to a$. 
Finally, two more rules appear: $S' \to S_{a \to a}$ and $S' \to S_{a \to b}$.

For every nonterminal $C$ of the original GF(2)-grammar, the languages
$L(C_{a \to a})$ and $L(C_{b \to b})$ are $2$-automatic languages over unary
alphabets $\{a\}$ and $\{b\}$ respectively. 
Indeed, every parse tree of $C_{a \to a}$ contains only nonterminals of type $a \to a$. 
Therefore, only character $a$ can occur as a \emph{terminal} in a parse tree of $C_{a \to a}$. 
So, $L(C_{a \to a})$ is described by some GF(2)-grammar over an alphabet $\{a\}$, 
and is therefore 2-automatic. Similarly for $C_{b \to b}$.
 
By Theorem~\ref{gf2_grammar_by_language_equations_theorem},
the languages $L(C_{a \to b})$ for each nonterminal $C_{a \to b}$ of the new grammar 
satisfy the following system of language equations (System~\eqref{LangSystem}).

Here, for each nonterminal $C$, the summation is over all rules $C \to DE$ of the original GF(2)-grammar. Also, $\fin(C_{a \to b})$ is either $\{b\}$ or $\varnothing$, depending on whether or not there is a rule $C_{a \to b} \to b$ in the new GF(2)-grammar.
\begin{equation}\label{LangSystem} L(C_{a \to b}) = \fin(C_{a \to b}) \oplus \bigoplus_{(C \to DE) \in R} 
(L(D_{a \to a}) \odot L(E_{a \to b})) \oplus (L(D_{a \to b}) \odot L(E_{b \to b}))
\end{equation}

It is easy to see that all GF(2)-concatenations in the right-hand sides satisfy the conditions
of Lemma~\ref{comm-concat}.
Denote $\Dual(L(C_{a \to b}))$ by $\Center(C)$, $\Dual(L(C_{a \to a}))$ by $\Left(C)$,
$\Dual(L(C_{b \to b}))$ by $\Right(C)$ and $\Dual(\fin(C_{a \to b}))$ by $\final(C)$ for brevity.
Then, the algebraic equivalent of System~\eqref{LangSystem} also holds:
\begin{equation}\label{SeriesSystem} \Center(C) = \final(C) + \sum\limits_{(C \to DE) \in R}
\Left(D) \Center(E) + \Center(D) \Right(E)
\end{equation}

Let us look at this system as a system of $\F[[a, b]]$-linear equations over variables $\Center(C) = \Dual(L(C_{a \to b}))$ for
every nonterminal $C$ of the original GF(2)-grammar.

We will consider $\final(C)$, $\Left(C)$ and $\Right(C)$ to be the coefficients of the system. 
While we do not know their \emph{exact} values, the following is known: $\final(C)$ is $0$ or $b$, $\Left(C) \in \A$ as a formal power series that
corresponds to a 2-automatic language over an alphabet $\{a\}$ and, similarly, $\Right(C) \in \B$.
That means that all cooeficients of the system lie in $\A \cup \B$ and, therefore, in $R_{a, b}$.
The latter is a field by Lemma~\ref{lemma_rab}.

Denote the number of nonterminals in the original GF(2)-grammar by $n$,
(so there are $n$ nonterminals of type $a \to b$ in the new GF(2)-grammar),
a column vector of values $\Center(C)$ by $x$ and a column vector of values 
$\final(C)$ in the same order by $f$. 
Let us fix the numeration of nonterminals $C$ of the old GF(2)-grammar.
After that, we can use them as the ``indices'' of rows and columns of matrices.

Let $I$ be an identity matrix of dimension $n \times n$, $A$ be a $n \times n$ matrix
with the sum of $\Left(D)$ over all rules $C \to DE$ of the original grammar standing on the intersection of $C$-th row and $E$-th column:
\begin{equation}\label{Adef}
 A_{C, E} \coloneqq \sum\limits_{(C \to DE) \in R} \Left(D) 
\end{equation}
Similarly, let $B$ be a $n \times n$ matrix with
\begin{equation}\label{Bdef}
 B_{C, D} \coloneqq \sum\limits_{(C \to DE) \in R} \Right(E)
\end{equation}

Then, the equation System~\eqref{SeriesSystem} can be rewritten as $x = f + (A + B) x$ in the
matrix form. In other words, $(A + B + I)x = f$. 
Consider a homomorphism $h \colon \F[[a, b]] \to \F$ that maps
power series to their constant terms (coefficients before $a^0 b^0$). Then, $h(\det(A + B + I))= \det(h(A + B + I)) = \det(h(A) + h(B) + h(I))$, where $h$ is extended to the $n \times n$
matrices with components from $\F[[a, b]]$ in the natural way (replace each component
of the matrix by its constant term). 

Because the new GF(2)-grammar for $K$ is also in Chomsky normal form, all languages $L(C_{a \to a})$ and $L(C_{b \to b})$ do not contain the empty string.
Therefore, all series $\Left(C) = \Dual(L(C_{a \to a}))$ and $\Right(C) = \Dual(L(C_{b \to b}))$ have zero constant terms. Hence,
$h(A) = h(B) = 0$, where by $0$ we mean a zero $n \times n$ matrix. On the other hand,
$h(I) =I$. Hence, $h(\det(A + B + I)) = \det(h(A) + h(B) + h(I)) = \det(I) = 1$.
Therefore, $\det(A + B + I) \neq 0$, because $h(0) = 0$.

Hence, the System~\eqref{SeriesSystem} has exactly one
solution within the field $\F((a, b))$~--- the actual values of $\Center(C)$. Moreover, we
know that all coefficients of the system lie in the field $R_{a, b} \subseteq \F((a, b))$. 
Therefore, all components of the unique solution also lie within the field $R_{a, b}$.
Hence, $\Dual(K) = \Dual(L(S')) = \Dual(L(S_{a \to a})) + \Dual(L(S_{a \to b}))$ also
lies in $R_{a, b}$.
\end{proof}
\begin{rem} Alternatively, one can prove the uniqueness of the solution to System~\eqref{SeriesSystem} by some kind of fixed-point argument. However, I
stick to proving that the determinant is non-zero, mainly because the proof of Theorem~\ref{Main_abc} still uses the existence of an inverse matrix regardless of how the uniqueness of the solution is established. 
\end{rem}

\subsection{Using Theorem~\ref{Main_ab}}

It is hard to use Theorem~\ref{Main_ab} directly. 
Hence, we will prove the following intermediate result:

\begin{thm}\label{ab_structure} Suppose that $L \subseteq a^* b^*$ is described by 
a GF(2)-grammar. Denote ``the coefficient'' of $\Dual(L)$ before $a^i$ by $\ell(i) \in \F[[b]]$,
in the sense that $\Dual(L) = \sum_{i=0}^{+\infty} a^i \ell(i)$.
Then, there exists a nonnegative integer $d$ and polynomials $p_0, p_1, \ldots, p_d \in \F[b]$,
such that $p_d \neq 0$ and  $\sum_{i=0}^d p_i \ell(n-i)$ assumes only a finite number of distinct values, when $n$ ranges over the set of all integers larger than $d$.
\end{thm}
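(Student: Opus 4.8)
The plan is to feed the conclusion of Theorem~\ref{Main_ab} into a direct coefficient comparison. By that theorem (together with Lemma~\ref{lemma_rab}), the series $\Dual(L)$ admits a representation
\[
\Dual(L) = \frac{\sum_{i=1}^{N} A_i B_i}{p}, \qquad A_i \in \A \subset \F[[a]], \quad B_i \in \B \subset \F[[b]], \quad p \in \poly(a, b),\ p \neq 0,
\]
with $N$ finite. Since $\Dual(L)$ is an honest power series and $p$ is a polynomial, clearing the denominator gives the identity $p \cdot \Dual(L) = \sum_{i=1}^{N} A_i B_i$ in $\F[[a, b]]$. I would then view $\F[[a, b]]$ as $\F[[b]][[a]]$, that is, as power series in $a$ whose coefficients are power series in $b$, and compare the coefficient of $a^n$ on the two sides.

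On the left, write $p = \sum_{j=0}^{d} q_j(b)\, a^j$ with $q_j \in \F[b]$, $d = \deg_a p$, and $q_d \neq 0$. For every $n > d$ the coefficient of $a^n$ in $p \cdot \Dual(L)$ is the finite convolution $\sum_{j=0}^{d} q_j(b)\, \ell(n-j)$, which is precisely the quantity appearing in the statement once we set $p_j \coloneqq q_j$; note that $p_d = q_d \neq 0$ as required, and that the degenerate case $d = 0$ (when $p$ does not involve $a$) is permitted by the theorem.

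The crux is the right-hand side. Each $A_i$ is a power series in $a$ \emph{alone}, so $A_i = \sum_m \alpha_{i,m} a^m$ with all $\alpha_{i,m} \in \F = \{0, 1\}$; hence the coefficient of $a^n$ in $A_i B_i$ is the scalar multiple $\alpha_{i,n} B_i$. Summing over $i$, the coefficient of $a^n$ in $\sum_i A_i B_i$ equals $\sum_{i=1}^{N} \alpha_{i,n} B_i = \sum_{i \,:\, \alpha_{i,n} = 1} B_i$, a subset sum of the fixed finite family $\{B_1, \ldots, B_N\} \subset \F[[b]]$. As $n$ varies, such a subset sum takes at most $2^{N}$ distinct values. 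Equating the two coefficient expressions then forces $\sum_{j=0}^{d} p_j\, \ell(n-j)$ to range over a finite set as $n$ runs over the integers exceeding $d$, which is exactly the claim.

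The argument is essentially bookkeeping once Theorem~\ref{Main_ab} is in hand, so I do not anticipate a serious obstacle. The only points needing care are the legitimacy of the identification $\F[[a, b]] = \F[[b]][[a]]$, so that extracting the $a^n$-coefficient yields a well-defined element of $\F[[b]]$, and the observation that is genuinely the heart of the matter: because each $A_i$ depends on $a$ only, the $a^n$-coefficient of the numerator is constrained to be a $\{0, 1\}$-combination of boundedly many fixed series, whence the finiteness of the value set follows automatically from the finiteness of $N$ in the representation.
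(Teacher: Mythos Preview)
Your proposal is correct and follows essentially the same route as the paper: invoke Theorem~\ref{Main_ab} to write $\Dual(L)=\bigl(\sum_i A_iB_i\bigr)/p$, expand $p$ as a polynomial in $a$ with coefficients in $\F[b]$, clear the denominator, and compare the coefficient of $a^n$ on both sides, observing that the right-hand side is a subset sum of the finite family $\{B_1,\ldots,B_N\}$ and hence assumes at most $2^N$ values. The only quibble is that the parenthetical ``together with Lemma~\ref{lemma_rab}'' is unnecessary here: you need only Definition~\ref{ab-def} of $R_{a,b}$, not the fact that it is a field.
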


\begin{ex} For example, suppose that $\Dual(L) = \dfrac{A_1 B_1 + A_2 B_2}{1 + ab}$, 
where $A_1, A_2 \in A$ and $B_1, B_2 \in \B$. Then $\Dual(L) (1 + ab) = A_1 B_1 + A_2 B_2$.
Denote the coefficient of $\Dual(L)$ before $a^n$ by $\ell(n)$. Then 
$\left( \sum_{n=0}^{+\infty} a^n \ell(n) \right) (1+ab) = A_1 B_1 + A_2 B_2$.
Coefficients of the left-hand side before $a^n$ are $b \cdot \ell(n-1) + \ell(n)$ for $n \geqslant 1$.
The corresponding coefficients of the right-hand side are always from the set $\{0, B_1, B_2, B_1 + B_2\}$.
Therefore, it is enough to choose $d = 1, p_0 = 1, p_1 = b$ in this case. 
\end{ex}

\begin{rem} Actually, $\sum_{i=0}^d p_i \ell(n-i)$ is
\emph{2-automatic sequence}~\cite{automatic-sequences} of elements of $\B$.
That is, only elements from $\B$ appear in this sequence, only finite number of them actually appear,
and every element appears on a 2-automatic set of positions.
We just will not need the result in the maximum possible strength here.
\end{rem}
\begin{proof}[Proof of Theorem~\ref{ab_structure}]
As we already know, $\Dual(L) \in R_{a, b}$, meaning that
$\Dual(L) = \left( \sum_{k = 1}^K A_k B_k \right) \left( \sum_{i=0}^{d} a^i p_i \right)$ for
some nonnegative integers $d$ and $K$, $A_i \in \A$, $B_i \in \B$ and $p_i \in \F[b]$. Moreover, we can choose
$d$ in such a way, that $p_d \neq 0$: not all $p_i$ are equal to zero, because otherwise the denominator
of the fraction would be equal to zero. Also, $\Dual(L) = \sum\limits_{j=0}^{+\infty} a^j \ell(j)$ by definition of 
$\ell (\cdot)$.

Therefore, 
$\left( \sum_{i=0}^d a^i p_i \right) \cdot \left( \sum_{j=0}^{+\infty} a^j \ell(j) \right) =
\sum_{k=1}^K A_k B_k$. The coefficients of the left-hand and the right-hand sides before $a^n$
are $\sum_{i=0}^{\min(n, d)} p_i \ell (n - i)$ and $\sum_{k \colon a^n \in A_k} B_k$ respectively. Here, the second sum is taken over all $k$ from $1$ to $K$, such that the coefficient
of $A_k$ before $a^n$ is one.
When $n \geqslant d$, the former of these two values is $\sum_{i=0}^{d} p_i \ell (n - i)$ and 
the latter always takes one of $2^K$ possible values. 
\end{proof}

Let us consider a simple application of Theorem~\ref{ab_structure} to get the hang of how
it can be used to prove something.

\begin{thm}\label{a2n_b2n}
The language $K = \set{a^{2^n} b^{2^n}}{n \in \N}$ is not described by a GF(2)-grammar.
\end{thm}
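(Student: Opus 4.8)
The plan is to apply Theorem~\ref{ab_structure} and derive a contradiction from the sparseness of the powers of two. First I would record the relevant series: $\Dual(K) = \sum_{n} a^{2^n} b^{2^n}$, so that in the notation of Theorem~\ref{ab_structure} the coefficient $\ell(i)$ equals $b^i$ when $i$ is a power of two and equals $0$ otherwise. This is immediate, since the only monomial of $\Dual(K)$ with $a$-degree $i$ is $a^i b^i$, and it is present precisely when $i \in \{2^n\}$.

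Suppose, toward a contradiction, that $K$ is described by a GF(2)-grammar. Theorem~\ref{ab_structure} then supplies a degree $d \geqslant 0$ and polynomials $p_0, \ldots, p_d \in \F[b]$ with $p_d \neq 0$ such that $\sum_{i=0}^d p_i \ell(n-i)$ takes only finitely many distinct values as $n$ ranges over integers larger than $d$. The idea is to exhibit infinitely many $n$ for which this sum collapses to a single, explicit term whose $b$-degree grows without bound, contradicting finiteness.

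Concretely, for each $k$ with $2^k > d$ I would set $n = 2^k + d$. The indices $n-i$ with $0 \leqslant i \leqslant d$ then sweep out the interval $\{2^k, 2^k+1, \ldots, 2^k+d\}$, and since the next power of two after $2^k$ is $2^{k+1} = 2^k + 2^k > 2^k + d$, the only power of two in this interval is $2^k$ itself, attained at $i = d$. Hence $\sum_{i=0}^d p_i \ell(n-i) = p_d \ell(2^k) = p_d b^{2^k}$. As $k \to \infty$ the $b$-degree of $p_d b^{2^k}$ equals $\deg p_d + 2^k$ and is strictly increasing, so these elements of $\F[b]$ are pairwise distinct; this contradicts the finiteness guaranteed by Theorem~\ref{ab_structure}, completing the argument.

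The one point that needs care~--- and the reason for choosing $n = 2^k + d$ rather than the more obvious $n = 2^k$~--- is that Theorem~\ref{ab_structure} only guarantees $p_d \neq 0$ and says nothing about $p_0$. Aligning the isolated power of two with the index $i = d$ ensures that the surviving coefficient is exactly the one we know to be nonzero. The exponential growth of the gaps between consecutive powers of two is what makes such isolation possible for all large $k$, and this sparseness is the genuine crux of the proof.
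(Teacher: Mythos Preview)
Your proof is correct and follows essentially the same approach as the paper: apply Theorem~\ref{ab_structure}, take $n = 2^k + d$ so that the only surviving term in $\sum_{i=0}^d p_i\,\ell(n-i)$ is $p_d\,b^{2^k}$, and derive a contradiction. The paper makes a small detour---first observing that the sum is divisible by $b^{n-d}$ and hence must eventually be zero, then exhibiting a nonzero instance---whereas you go straight to ``infinitely many distinct values''; your route is marginally shorter, but the substance and the choice of $n$ are identical.
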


\begin{proof} By contradiction. Let us use Theorem~\ref{ab_structure} on the language $K$. The coefficient $\ell(n)$
of $\Dual(K)$ is $b^n$, if $n$ is a power of two and $0$ otherwise. In any case, it is divisible by $b^n$. On one hand, from
the conclusion of Theorem~\ref{ab_structure}, $\sum_{i = 0}^d p_i \ell(n - i)$ assumes only a finite number
of values for some integer $d$ and $p_0, p_1, \ldots, p_d \in \F[b]$, satisfying the condition $p_d \neq 0$.

On the other hand, the sum $\sum_{i = 0}^d p_i \ell(n - i)$ is divisible by $b^{n- d}$:
every summand contains a factor $\ell(n-i)$, which is divisible by $b^{n-i}$. Therefore,
$\sum_{i = 0}^{d} p_i \ell(n - i)$ is divisible by larger and larger powers of $b$ as $n$ grows.
Therefore, the only value that this sum can assume infinitely often is $0$: all other power series are not divisible
by arbitrarily large powers of $b$. Because the sum assumes only a finite number of values, $0$ is obtained for 
large enough $n$.

Therefore, some fixed linear combination of $\ell(n-d), \ell(n-d+1), \ldots, \ell(n)$ is equal to $0$ for large enough $n$.
However, non-zero values appear in the sequence $\ell(i)$ extremely rarely: the gaps between them grow larger and larger.
In particular, one can choose such $n \geqslant d$, that $\sum_{i=0}^d p_i \ell(n-i) = 0$, $\ell(n-d) \neq 0$, but
$\ell(n-d+1) = \ldots = \ell(n) = 0$.
This is impossible, because $p_d \neq 0$ and, therefore, there is exactly one non-zero summand in a zero sum: $p_d \ell (n - d)$.

To be exact, one can pick $n = d + 2^m$ for large enough $m$. 
\end{proof}

A more interesting application of the technique can be seen in the following theorem
(a weaker result was earlier obtained by Makarov and Okhotin~\cite[Theorem~12]{grammars-gf2}
using elementary methods).

\begin{thm}\label{an_bfn_increasing}
Suppose that $f \colon \N_0 \to \N_0$ is a strictly increasing function.
If a language $L_f = \set{a^n b^{f(n)}}{n \in \N_0}$ is generated by a GF(2)-grammar,
then the set $f(\N_0)$ is a finite union of arithmetic progressions.
\end{thm}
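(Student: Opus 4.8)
The plan is to feed the language $L_f$ into Theorem~\ref{ab_structure} and then extract periodicity from the resulting linear recurrence. Since $f$ is a (single-valued) strictly increasing function, $\Dual(L_f) = \sum_{n=0}^{\infty} a^n b^{f(n)}$, so the coefficient of $a^n$ is the monomial $\ell(n) = b^{f(n)}$. Applying Theorem~\ref{ab_structure} gives a nonnegative integer $d$ and polynomials $p_0, \ldots, p_d \in \F[b]$ with $p_d \neq 0$ such that, writing $h(n) \coloneqq \sum_{i=0}^d p_i(b)\, b^{f(n-i)}$, the value $h(n)$ lies in a fixed finite subset of $\F[b]$ for all $n > d$.

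The first step is to \emph{upgrade ``finitely many values'' to ``eventually zero''} by a valuation argument. Let $\nu$ denote the $b$-adic valuation (order of vanishing at $b=0$). Factoring out the lowest power of $b$, $h(n) = b^{f(n-d)} \tilde h(n)$ with $\tilde h(n) = \sum_{i=0}^d p_i(b)\, b^{f(n-i)-f(n-d)} \in \F[b]$. If $\tilde h(n) \neq 0$ then $\nu(h(n)) = f(n-d) + \nu(\tilde h(n)) \geq f(n-d) \to \infty$; but every nonzero element of a \emph{finite} value-set has bounded valuation, so $h(n) \neq 0$ can happen only finitely often. Hence $\sum_{i=0}^d p_i(b)\, b^{f(n-i)} = 0$ for all large $n$, i.e.\ the sequence $c_m \coloneqq b^{f(m)}$ satisfies a linear recurrence of order $d$ over the field $\F(b)$. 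Equivalently, $\Dual(L_f)$ is a \emph{rational} function in $\F(a,b)$. Dividing out by a common power of $a$, I may assume the recurrence has nonzero trailing coefficient $p_0 \neq 0$ (this only lowers $d$), so $c_m$ is determined forward from $c_{m-1}, \ldots, c_{m-d}$.

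The second step is \emph{bounded gaps}. Write $\Delta(m) = f(m+1) - f(m) \geq 1$. In the relation $\tilde h(n) = 0$ the lowest-degree monomial of the $i=d$ term is $b^{\nu(p_d)}$ (coefficient $1$), while every other term has valuation at least $f(n-d+1) - f(n-d) = \Delta(n-d)$. For the lowest monomial to cancel we therefore need $\Delta(n-d) \leq \nu(p_d) \leq \deg p_d$, so all gaps are bounded by a constant $D$ for large $m$.

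The final, and I expect hardest, step is to turn the recurrence into \emph{eventual periodicity of the gap sequence}. The idea is to normalise the companion-matrix dynamics. Let $M \in \F(b)^{d \times d}$ be the (fixed) companion matrix of the recurrence and $v_m = (c_m, \ldots, c_{m+d-1})^{\mathsf T}$, so $v_{m+1} = M v_m$. Put $\hat v_m \coloneqq c_m^{-1} v_m = \bigl(1,\, b^{\Delta(m)},\, b^{\Delta(m)+\Delta(m+1)},\, \ldots\bigr)^{\mathsf T}$. By the bounded-gap step every coordinate is $b^{j}$ with $0 \leq j \leq (d-1)D$, so the $\hat v_m$ range over a \emph{finite} set $\mathcal F$. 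The key observation is that the first coordinate of $M\hat v_m$ equals $c_{m+1}/c_m = b^{\Delta(m)}$; reading off its $b$-degree recovers $\Delta(m)$ from $\hat v_m$, and then $\hat v_{m+1} = b^{-\Delta(m)} M \hat v_m$ is itself determined by $\hat v_m$. Thus $\hat v_{m+1} = T(\hat v_m)$ for a fixed map $T$ on the finite set $\mathcal F$, so the orbit $(\hat v_m)$, and hence $(\Delta(m))$, is eventually periodic, say with period $q$. Then for large $m$ the quantity $f(m+q) - f(m)$ equals the constant sum of one period of gaps, so $f$ restricted to each residue class modulo $q$ is an arithmetic progression, and $f(\N)$ is a finite union of arithmetic progressions. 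The obstacle to watch is precisely this last step: the per-instance constraint $\tilde h(n) = 0$ does not by itself make the gap transition deterministic, and it is the normalisation by $c_m$ (collapsing the infinite-dimensional linear orbit onto a finite deterministic system) that forces periodicity.
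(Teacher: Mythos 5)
Your proposal is correct and follows essentially the same route as the paper's own proof: apply Theorem~\ref{ab_structure}, upgrade ``finitely many values'' to ``eventually zero'' via divisibility by growing powers of $b$, bound the gaps $f(n+1)-f(n)$ by comparing the low-order monomial of the $p_d$-term against the valuations of the remaining terms, and finally get eventual periodicity of the gap sequence from a deterministic transition acting on a finite state set. Your companion-matrix normalisation $\hat v_m = c_m^{-1}v_m$ is a linear-algebra repackaging of the paper's ``tuple of differences'' $(f(n-j-1)-f(n-d),\ldots,f(n-d+1)-f(n-d))$, which in the paper likewise determines its successor and is pushed through the same pigeonhole argument.
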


\begin{proof} 
Let us use Theorem~\ref{ab_structure} here.
The proof is structured in the following way. The first step is to prove that
polynomials $b^{f(n)}$ satisfy some linear reccurence that has rational functions
of $b$ as coefficients. The second step is to prove that $f(\N_0)$ is, indeed,
a finite union of arithmetic progressions. Intuitively, it is hard to imagine a
linear recurrence with all values looking like $b^{\text{something}}$, but without
strong regularity properties.

Let us use Theorem~\ref{ab_structure} on the language $L_f$. The coefficient of
$\Dual(L_f)$ before $a^n$ is $b^{f(n)}$. Therefore, $\sum_{i=0}^d p_i b^{f(n-i)}$
assumes only a finite number of values for some nonnegative integer $d$ and some 
$p_0, p_1, \ldots, p_d \in \F[b]$ satisfying the property $p_d \neq 0$. Notice
that this sum starts being divisible by arbitrarily large powers of $b$ when $n$ increases
(here we use the fact that $f$ is an increasing function). Because the sum assumes
only a finite number of values, it is equal to zero for large enough $n$.

Now, we want to prove that $\set{f(n)}{n \in \N_0}$ is a finite union of arithmetic
progressions. We have already established that $\sum_{i=0}^d p_i b^{f(n-i)} = 0$
for large enough $n$.

Let $j$ be the smallest index, such that $p_j \neq 0$: it exists, because $p_d \neq 0$.
Moreover, $j \neq d$, otherwise $0 = p_d b^{f(n-d)}$ for large enough $n$,
contradicting $p_d \neq 0$. Therefore, $p_j b^{f(n-j)} = \sum_{i=j+1}^d p_i b^{f(n-i)}$.
Let us rewrite the last statement in a slightly different way:
\begin{equation*}
b^{f(n-j)} = \sum\limits_{i=j+1}^d \dfrac{p_i}{p_j} b^{f(n-i)}.
\end{equation*}
Divide both sides of the last equation by $b^{f(n-d)}$. Then,
\begin{equation*}
b^{f(n-j) - f(n-d)} = \sum\limits_{i=j+1}^d \dfrac{p_i}{p_j} b^{f(n-i) - f(n-d)}.
\end{equation*}
Therefore, the difference $f(n-j) - f(n-d)$ depends only on differences 
$f(n-j-1) - f(n-d), \ldots, f(n-d+1)-f(n-d)$, but not on $f(n-d)$ itself.

Now, let us prove that the difference $f(n+1) - f(n)$ is bounded above
(it is always positive, because $f$ is increasing). Indeed, as we know, 
$\sum_{i=0}^d p_i b^{f(m-i)} = 0$ for large enough $m$.
Substitute $n \coloneqq m - d$ and $k \coloneqq d - i$, the result is 
$\sum_{k=0}^{d} p_{d-k} b^{f(n+k)} = 0$.
Because $f$ increases, all summands are divisible by $b^{f(n+1)}$, with a
possible exception of the first summand. Because the sum is equal to $0$,
the first summand should be divisible by $b^{f(n+1)}$ as well. 
It is not equal to $0$ (because $p_d \neq 0$ and $b^{f(n)} \neq 0$)
and its degree as a polynomial of $b$ is equal to $\deg p_d + \deg b^{f(n)} = \deg p_d + f(n)$.
The degree of a non-zero polynomial divisible by $b^{f(n+1)}$ is at least $f(n+1)$,
therefore $f(n+1) - f(n) \leqslant \deg p_d$.

Because the differences $f(n+1) - f(n)$ are bounded, then the differences
$f(n+k) - f(n) = (f(n+k)-f(n+k-1)) + (f(n+k-1) - f(n+k-2)) + \ldots + (f(n+1)-f(n))$
are bounded as well for all $k \leqslant d$. Therefore, 
the tuple of differences $(f(n-j-1) - f(n-d), \ldots, f(n-d+1)-f(n-d))$ assumes only a finite
set of possible values as $n$ goes towards infinity. As shown above, $f(n-j) - f(n-d)$
can be uniquely restored from such a tuple. Therefore, the tuple for $n+1$ can be
uniquely restored from the tuple for $n$:
indeed, it is enough to know the pairwise differences
between the elements of $\{f(n-j), f(n-j-1), \ldots, f(n-d)\}$, and
the current tuple along with the number $f(n-j) - f(n-d)$ provide this information.

Because there is only a finite number of such tuples, and each tuple determines the next,
they start ``going in circles'' at some moment. In particular, the differences $f(n-d+1) - f(n-d)$
start going in circles. This fact, along with the function $f$ being increasing, is enough to establish
that $\set{f(n)}{n \in \N_0}$ is a finite union of arithmetic progressions.
\end{proof}

\section{Subsets of \texorpdfstring{$a^* b^* c^*$}{a* b* c*}}\label{section_abc}


The language $\set{a^n b^n c^n}{n \geqslant 0}$ is, probably, the most famous 
example of a simple language that is not described by any ordinary grammar. 
It is reasonable to assume that it is not described by a GF(2)-grammar as well.
Let us prove that.

We will do more than that and will actually establish some property that
all subsets of $a^* b^* c^*$ that can be described by a GF(2)-grammar have, but 
$\set{a^n b^n c^n}{n \geqslant 0}$ does not. 
Most steps of the proof will be analogous to the two-letter case.

There is a natural one-to-one correspondence between subsets of $a^* b^* c^*$ 
and formal power series in variables $a, b$ and $c$ over field $\F$. Indeed, for every
set $S \subseteq \N_0^3$, we can identify the language $\set{a^n b^m c^k}{(n, m, k) \in S}
\subseteq a^* b^* c^*$ with the formal power series 
$\sum_{(n, m, k) \in S} a^n b^m c^k$. Denote this correspondence 
by $\Dual \colon 2^{a^* b^* c^*} \to \F[[a, b, c]]$. Then, $\Dual(L \triangle K)
= \Dual(L) + \Dual(K)$. In other words, the symmetric difference of languages
corresponds to the sum of formal power series.

Similarly to the Lemma~\ref{comm-concat}, $\Dual(K \odot L) = \Dual(K) \cdot \Dual(L)$
in the following important special cases: when $K$ is a subset of $a^*$, 
when $K$ is a subset of $a^* b^*$ and $L$ is a subset of $b^* c^*$, and, finally, when $L$
is a subset of $c^*$.	 Indeed, in each of these three cases, characters ``are in the correct order'':
if $u \in K$ and $v \in L$, then $uv \in a^* b^* c^*$.

However, we cannot insert character $b$ in the middle of the string: if $K$ is a subset 
of $b^*$ and $L$ is a subset of $a^* b^* c^*$, then $K \odot L$ does not even have to 
be a subset of $a^* b^* c^*$.

The ``work plan'' will remain the same as in the previous section: we will prove
that some algebraic structure is a field and then use linear algebra over said 
field. Let us get two possible questions out of the way first:

\begin{enumerate}
\item Why is it logical to expect that the language $\set{a^n b^n c^n}{n \geqslant 0}$
is not described by a GF(2)-grammar, but a similar language $\set{a^n b^n}{n \geqslant 0}$
is?
\item Why will the proof work out for $\set{a^n b^n c^n}{n \geqslant 0}$, but not
for a regular language $\set{(abc)^n}{n \geqslant 0}$, despite these languages
having the same ``commutative image''?
\end{enumerate}

They can be answered in the following way:

\begin{enumerate}
\item Simply speaking, the reason is the same as for the ordinary grammars. On a
intuitive level, both ordinary grammars and GF(2)-grammars permit a natural
way to ``capture'' the events that happen with any two letters in subsets of $a^* b^* c^*$,
but not all three letters at the same time. A rigourous result that corresponds to this 
intuitive limitation of ordinary grammars was proven by Ginsburg and Spanier~\cite[Theorem 2.1]{bounded-original}. Theorem~\ref{Main_abc} is an analogue for GF(2)-grammars.

\item This argument only implies that any proof that relies \emph{solely} on commutative
images is going to fail. The real proof is more subtle. For example, it will also use the fact
that $\set{a^n b^n c^n}{n \geqslant 0}$ is a subset of $a^* b^* c^*$.

While the proof uses commutative images, it uses them very carefully, always making sure
that the letters ``appear in the correct order''. In particular, we will never consider
GF(2)-concatenations $K \odot L$, where $K$ is a subset of $b^*$
and $L$ is an arbitrary subset of $a^* b^* c^*$, in the proof, 
because in this case $K \odot L$ is not a subset of $a^* b^* c^*$.

Avoiding this situation is impossible for language $\set{(abc)^n}{n \geqslant 0}$, because
in the string $abcabc$ from this language the letters ``appear in the wrong order''.
\end{enumerate}

Denote the set of algebraic power series in variable $c$ by $\C$. 

Similarly to Definition~\ref{ab-def}, define $R_{a, c} \subseteq \F((a, c))$ and 
$R_{b, c} \subseteq \F((b, c))$. 

Finally, denote by $R_{a, b, c}$ the set of all Laurent series that can be represented
as 
\begin{equation*}
\dfrac{\sum\limits_{i=1}^n A_i B_i C_i}{p_{a, b} \cdot p_{a, c} \cdot p_{b, c}},
\end{equation*} where $n$ is a nonnegative integer, $A_i \in \A$,
$B_i \in \B$, $C_i \in \C$ for all $i$ from $1$ to $n$, and $p_{a, b} \in \F[a, b]$, $p_{a, c} \in \F[a, c]$,
$p_{b, c} \in \F[b, c]$. 

\begin{lm}\label{lemma_abc} $R_{a, b, c}$ is a ring and a subset of $\F((a, b, c))$. 
Moreover, 
$R_{a, b}$, $R_{a, c}$ and $R_{b, c}$ are subsets of $R_{a, b, c}$.
\end{lm}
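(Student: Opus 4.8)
The plan is to check the subring axioms directly against the definition of $R_{a, b, c}$, isolating a single normalization step that powers the rest of the argument. We work in characteristic two, so additive inverses are free; moreover $0$ (the empty numerator) and $1$ (numerator $1 \cdot 1 \cdot 1$ over denominator $1 \cdot 1 \cdot 1$) evidently have the required shape, so only closure under addition and multiplication needs work.

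First I would record a \emph{normalization lemma}: if $N = \sum_i A_i B_i C_i$ with $A_i \in \A$, $B_i \in \B$, $C_i \in \C$, and $q$ is a polynomial in only two of the three variables, say $q \in \poly(a, b)$, then $q N$ can again be written as $\sum_j A_j' B_j' C_j'$ of the same shape. Indeed, expanding $q = \sum_{k, l} \lambda_{k, l} a^k b^l$ into monomials gives $q N = \sum_{i, k, l} \lambda_{k, l} (a^k A_i)(b^l B_i) C_i$, and here $a^k A_i \in \A$ and $b^l B_i \in \B$, since the algebraic power series in a single variable form a ring containing every monomial in that variable. The cases $q \in \poly(a, c)$ and $q \in \poly(b, c)$ are identical.

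With the normalization lemma in hand, addition is routine. Given two elements with denominators $p_{a, b} p_{a, c} p_{b, c}$ and $q_{a, b} q_{a, c} q_{b, c}$, I bring them over the common denominator $(p_{a, b} q_{a, b})(p_{a, c} q_{a, c})(p_{b, c} q_{b, c})$, which is again a product of one polynomial from each of $\poly(a, b)$, $\poly(a, c)$, $\poly(b, c)$. Each numerator then gets multiplied by three cross factors, one drawn from each variable pair, and applying the normalization lemma three times shows that both adjusted numerators, hence their sum, keep the shape $\sum A B C$. For the product of two elements, the product of denominators automatically has the required factored form, while the product of numerators is $\bigl(\sum_i A_i B_i C_i\bigr)\bigl(\sum_j A_j' B_j' C_j'\bigr) = \sum_{i, j} (A_i A_j')(B_i B_j')(C_i C_j')$, which is of the right shape because $\A$, $\B$, $\C$ are each closed under multiplication. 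Finally, each such quotient is a bona fide element of the field $\F((a, b, c)) = \operatorname{Frac}(\F[[a, b, c]])$, as its numerator lies in $\F[[a, b, c]]$ and its denominator is a nonzero polynomial; so $R_{a, b, c}$ is genuinely a subring of $\F((a, b, c))$.

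For the ``moreover'' part I would just embed. An element $\frac{\sum_i A_i B_i}{p}$ of $R_{a, b}$ with $p \in \poly(a, b)$ equals $\frac{\sum_i A_i B_i \cdot 1}{p \cdot 1 \cdot 1}$, taking each $C_i$ to be the constant series $1 \in \C$ and $p_{a, c} = p_{b, c} = 1$; this is exactly the form required by the definition of $R_{a, b, c}$, so $R_{a, b} \subseteq R_{a, b, c}$, and the cases of $R_{a, c}$ and $R_{b, c}$ are symmetric. The only real obstacle is the normalization lemma: one must confirm that multiplying a numerator by a \emph{two-variable} polynomial does not destroy its single-variable factorization, and the monomial expansion together with closure of $\A$, $\B$, $\C$ under multiplication by monomials is precisely what secures this.
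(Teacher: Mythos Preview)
Your proof is correct and follows the same approach as the paper, which only sketches it: verify closure under addition and multiplication directly from the definition, and embed $R_{a,b}$ by setting the extra factors to $1$. Your explicit ``normalization lemma'' (multiplying a $\sum A_iB_iC_i$ numerator by a two-variable polynomial preserves its shape) is exactly the detail hidden behind the paper's ``it is easy to see''.
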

\begin{proof} It is easy to see that $R_{a, b, c}$ is closed under addition and multiplication.
Setting $C_1 = C_2 = \ldots = C_n = p_{a, c} = p_{b, c} = 1$ yields $R_{a, b} \subseteq R_{a, b, c}$. 
\end{proof}

\subsection{Main Result}

Unlike $R_{a, b}$, $R_{a, b, c}$ \emph{is not} a field (in fact, Subsection~\ref{section_anbncn} tells us that $(1 + abc)^{-1} \notin R_{a, b, c}$), so a bit more involved argument will be necessary for
the proof of the following theorem:

\begin{thm}\label{Main_abc} Suppose that $K \subseteq a^* b^* c^*$ is described by a GF(2)-grammar.
Then the corresponding formal power series $\Dual(K)$ is in the set $R_{a, b, c}$. 
\end{thm}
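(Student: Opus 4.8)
The plan is to mirror the proof of Theorem~\ref{Main_ab} --- intersect the grammar with the three-state DFA for $a^* b^* c^*$ and reduce to a linear system --- but with one essential twist forced by the fact that $R_{a,b,c}$ is merely a ring. First I would put the grammar in Chomsky normal form and assume $K$ omits the empty word. Intersecting with the DFA $M$ (states $q_a, q_b, q_c$, all accepting, with the obvious transitions) splits each nonterminal $C$ into six copies indexed by the reachable state pairs: the unary ones $C_{a\to a}, C_{b\to b}, C_{c\to c}$, the adjacent ones $C_{a\to b}, C_{b\to c}$, and the full one $C_{a\to c}$. As in the two-letter case, the unary nonterminals describe $2$-automatic unary languages, so $\Left(C) \coloneqq \Dual(L(C_{a\to a})) \in \A$, $\Dual(L(C_{b\to b})) \in \B$, and $\Right(C) \coloneqq \Dual(L(C_{c\to c})) \in \C$ by Theorem~\ref{unary-alphabet}. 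The adjacent nonterminals are more interesting: restricting the new grammar to the nonterminals of types $a\to a, a\to b, b\to b$ yields a genuine GF(2)-grammar over $\{a,b\}$ describing $L(C_{a\to b})$ (every rule fired from an $a\to b$ nonterminal produces only children of these three types), so by Theorem~\ref{Main_ab} we get $\Dual(L(C_{a\to b})) \in R_{a,b}$; symmetrically $\Dual(L(C_{b\to c})) \in R_{b,c}$.

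Next I would write the linear system satisfied by the full nonterminals. Each rule $C \to DE$ splits $C_{a\to c}$, according to the intermediate state, into $D_{a\to a}E_{a\to c}$, $D_{a\to b}E_{b\to c}$, and $D_{a\to c}E_{c\to c}$. Setting $\Center(C) \coloneqq \Dual(L(C_{a\to c}))$ and applying the three special cases of the concatenation lemma stated just before Lemma~\ref{lemma_abc} (whose letter-order hypotheses hold exactly here), the system becomes
\begin{equation*} \Center(C) = g(C) + \sum_{(C\to DE)\in R}\big(\Left(D)\Center(E) + \Center(D)\Right(E)\big), \end{equation*}
where the inhomogeneous term $g(C)$ collects $\final(C) \in \{0, c\}$ together with the middle-split products $\Dual(L(D_{a\to b}))\cdot\Dual(L(E_{b\to c}))$. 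Since $R_{a,b}\cdot R_{b,c} \subset R_{a,b,c}$ and $c \in R_{a,b,c}$, every $g(C)$ lies in $R_{a,b,c}$. In matrix form this is $(A + B + I)x = g$, with $A_{C,E} = \sum_{(C\to DE)}\Left(D)$ assembled from elements of $\A$ and $B_{C,D} = \sum_{(C\to DE)}\Right(E)$ assembled from elements of $\C$. The constant-term homomorphism $h$ kills $A$ and $B$ (no empty words in Chomsky normal form), so $h(\det(A+B+I)) = \det(I) = 1$, whence $\det(A+B+I) \neq 0$.

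The crux, and the one place the argument genuinely departs from the two-letter proof, is that $R_{a,b,c}$ is \emph{not} a field, so Cramer's rule does not by itself place the solution in $R_{a,b,c}$. The observation that rescues the argument is that the coefficient matrix $A + B + I$ has all its entries in $\A + \C \subset R_{a,c}$, and $R_{a,c}$ \emph{is} a field (by the analogue of Lemma~\ref{lemma_rab}). Hence the nonzero determinant is already invertible inside $R_{a,c} \subset R_{a,b,c}$. Cramer's rule then expresses each $\Center(C)$ as a cofactor numerator --- which lies in the ring $R_{a,b,c}$ because every entry of the augmented matrix does --- times $\det(A+B+I)^{-1} \in R_{a,b,c}$; as a product of two elements of the ring $R_{a,b,c}$, each $\Center(C)$ lies in $R_{a,b,c}$. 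Finally, since $S'$ contributes $S_{a\to a}, S_{a\to b}, S_{a\to c}$, we obtain $\Dual(K) = \Left(S) + \Dual(L(S_{a\to b})) + \Center(S) \in R_{a,b,c}$, as required. The main obstacle to anticipate is precisely this invertibility step: one must verify that the coefficient matrix avoids the variable $b$ entirely, so that its determinant is trapped in the smaller field $R_{a,c}$ rather than the non-field $R_{a,b,c}$.
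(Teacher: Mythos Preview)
Your proposal is correct and follows essentially the same route as the paper's own proof: intersect with the three-state DFA, use Theorem~\ref{Main_ab} to handle the adjacent-type nonterminals, fold the middle-split contributions into the inhomogeneous term, and then apply Cramer's rule while noting that the coefficient matrix $A+B+I$ lives entirely in $\A+\C\subset R_{a,c}$, so its (nonzero) determinant is invertible in the field $R_{a,c}\subset R_{a,b,c}$. The paper emphasizes exactly the same ``crux'' you isolate, namely that $R_{a,b,c}$ is only a ring and the rescue comes from the determinant avoiding $b$.
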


\begin{proof}[Proof outline] The proof is mostly the same as the proof of
Theorem~\ref{Main_ab}. Let us focus on the differences. As before, we can assume that $K$ does not contain the empty string.

In the same manner, we formally intersect our GF(2)-grammar in Chomsky's normal form with the language
$a^* b^* c^*$. Now, all nonterminals $C$ of the original GF(2)-grammar split
into \emph{six} nonterminals: $C_{a \to a}, C_{a \to b}, C_{a \to c}, C_{b \to b}, C_{b \to c},
C_{c \to c}$. However, their meanings stay the same. 
for example, $L(C_{a \to b}) = L(C) \cap (a^* b^+)$ and
$L(C_{a \to c}) = L(C) \cap (a^* b^* c^+)$.

However, only the ``central'' nonterminals $C_{a \to c}$ are important,
similarly to the nonterminals of the type $a \to b$ in the proof of Theorem~\ref{Main_ab}.
Why? Before, we had some a priori knowledge about the languages $L(C_{a \to a})$ and
$L(C_{b \to b})$ from Christol's theorem. But now, because of Theorem~\ref{Main_ab}, 
we have a priori knowledge about the
languages $L(C_{a \to b})$ and $L(C_{b \to c})$ as well, because they are subsets
of $a^* b^*$ and $b^* c^*$ respectively.

\begin{rem} In a sense, we used Theorem~\ref{unary-alphabet} as a stepping stone towards the proof
of Theorem~\ref{Main_ab}, and now we can use Theorem~\ref{Main_ab} as a stepping stone
towards the proof of Theorem~\ref{Main_abc}.
\end{rem}
Denote by $\fin(C)$ the language $\left( \bigoplus\limits_{(C \to DE) \in R} L(D_{a, b}) \odot L(E_{b, c})  \right) \oplus T_C$, where $T_C$ is either $\{c\}$ or $\varnothing$, depending
on whether or not there is a ``final'' rule $C_{a \to c} \to c$ in the new GF(2)-grammar.

This means that we again can
express the values $\Dual(L(C_{a \to c}))$ as a solution to a system of linear equations
with \emph{relatively simple} coefficients (denote $\Dual(L(C_{a \to c}))$ by
$\Center(C)$, $\Dual(L(C_{a \to a}))$ by $\Left(C)$, $\Dual(L(C_{c \to c}))$ by $\Right(C)$
and $\Dual(\fin(C))$ by $\final(C)$):
\begin{equation}\label{series_ab}
\Center(C) = \final(C) + \sum\limits_{(C \to DE) \in R} \Left(D) \Center(E) + \Center(D) \Right(E)
\end{equation}
Here, the summation is over all rules $C \to DE$ of the original GF(2)-grammar. Similarly to
the proof of Theorem~\ref{Main_ab}, this system can be rewritten as $(A + B + I)x = f$, 
where $x$ and $f$ are column-vectors of $\Center(C)$ and $\final(C)$ respectively, 
while the matrices $A$ and $B$ are defined as follows:
\begin{align*}\label{abc_matrices}
 A_{C, E} &\coloneqq \sum\limits_{C \to DE} \Left(D) \\ 
 B_{C, E} &\coloneqq \sum\limits_{C \to DE} \Right(E)
\end{align*}
Again, this system has a unique solution, because we can prove that $\det(A + B + I) \neq 0$. Moreover, said solution can be written down in the following way: $x = (A + B + I)^{-1} f$. 
Recall that $R_{a, c}$ is a field and all entries of $A + B + I$ lie in $R_{a, c}$. Hence, all
entries of $(A + B + I)^{-1}$ are elements of $R_{a, c}$ as well. All entries of $f$ are elements
of $R_{a, b, c}$. Therefore, all entries of $x$ lie in $R_{a, b, c}$ (here
we use that $R_{a, c}$ is a subring of $R_{a, b, c}$).
In particular, $\Center(S) \in R_{a, b, c}$. Therefore, $\Dual(K) = \Dual(L(S')) = \Dual(L(S_{a \to a}))
+ \Dual(L(S_{a \to b})) + \Center(S)$ also lies in the set.
\end{proof}

Consider the case of larger alphabets. Let $\A_i$ be the set of all algebraic formal power
series in variable $a_i$. Similarly to $R_{a, b, c}$, denote by $R_{a_1, a_2, \ldots, a_k}$ the set of all Laurent
series that can be represented as $\dfrac{\sum_{i = 1}^n A_{i, 1} A_{i, 2} \ldots A_{i, k}}{\prod\limits_{1 \leqslant i < j \leqslant n} p_{i, j}}$, for some $n \geqslant 0$, $p_{i, j} \in \F(a_i, a_j)$ and $A_{i, j} \in \A_j$.

\begin{thm}\label{Main_many} If a language $K \subseteq a_1^* a_2^* \ldots a_k^*$ is described
by a GF(2)-grammar, then the corresponding power series $\Dual(K)$ is in the set
$R_{a_1, a_2, \ldots, a_k}$. 
\end{thm}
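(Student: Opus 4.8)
The plan is to prove Theorem~\ref{Main_many} by induction on the number of letters $k$, generalizing the argument of Theorem~\ref{Main_abc} from three letters to arbitrarily many. The base cases $k = 1$ (Christol's theorem, giving $\A = R_{a_1}$), $k = 2$ (Theorem~\ref{Main_ab}) and $k = 3$ (Theorem~\ref{Main_abc}) are already established. For the inductive step, I would first set up the analogue of the intersection construction: given a GF(2)-grammar in Chomsky normal form describing $K \subset a_1^* \cdots a_k^*$, intersect it formally with the regular language $a_1^* \cdots a_k^*$. Every nonterminal $C$ now splits into $\binom{k+1}{2}$ nonterminals $C_{a_i \to a_j}$ for $1 \leqslant i \leqslant j \leqslant k$, where $L(C_{a_i \to a_j}) = L(C) \cap (a_i^* a_{i+1}^* \cdots a_j^*)$ with the $a_j$-block nonempty when $i < j$. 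Each such ``sub-nonterminal'' describes a language inside the $(j-i+1)$-letter block $a_i^* \cdots a_j^*$, so by the induction hypothesis its commutative image lies in $R_{a_i, \ldots, a_j}$.

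The central objects are the ``long'' nonterminals $C_{a_1 \to a_k}$, whose images I denote $\Center(C)$. Splitting each rule $C \to DE$ according to where the boundary between the two factors falls, the system for the $\Center(C)$ takes the form $\Center(C) = \final(C) + \sum_{(C \to DE) \in R} \bigl(\Left(D)\,\Center(E) + \Center(D)\,\Right(E)\bigr)$, where now $\final(C)$ collects all the ``genuinely shorter'' contributions: the terms $L(D_{a_1 \to a_m}) \odot L(E_{a_m \to a_k})$ with $1 < m < k$, together with any final rule. The key point is that every such bracketed product is a GF(2)-concatenation of a language in $a_1^* \cdots a_m^*$ with one in $a_m^* \cdots a_k^*$, so the letters ``appear in the correct order'' and the correspondence $\Dual(K \odot L) = \Dual(K)\Dual(L)$ applies; by induction each factor lies in $R_{a_1, \ldots, a_m}$ respectively $R_{a_m, \ldots, a_k}$, hence $\final(C) \in R_{a_1, \ldots, a_k}$. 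Meanwhile $\Left(D) \in \A_1 \subset R_{a_1, a_k}$ and $\Right(E) \in \A_k \subset R_{a_1, a_k}$, exactly as in the three-letter case.

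I would then rewrite the system as $(A + B + I)x = f$ and show $\det(A + B + I) \neq 0$ by the same constant-term homomorphism $h \colon \F[[a_1, \ldots, a_k]] \to \F$ argument: since the grammar is in Chomsky normal form, $\Left$ and $\Right$ have zero constant term, so $h(A) = h(B) = 0$ and $h(\det(A + B + I)) = \det I = 1$. Crucially, the entries of $A + B + I$ all lie in $\A_1 \cup \A_k \subset R_{a_1, a_k}$, and $R_{a_1, a_k}$ \emph{is} a field (Lemma~\ref{lemma_rab}), so the determinant is invertible \emph{inside} $R_{a_1, a_k}$. Applying Cramer's rule, each $\Center(C)$ is the product of a numerator determinant with entries in $R_{a_1, \ldots, a_k}$ (which is a ring by the $k$-letter analogue of Lemma~\ref{lemma_abc}) and the inverse denominator in $R_{a_1, a_k} \subset R_{a_1, \ldots, a_k}$; hence $\Center(C) \in R_{a_1, \ldots, a_k}$, and summing over the blocks gives $\Dual(K) \in R_{a_1, \ldots, a_k}$.

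The main obstacle I anticipate is purely bookkeeping rather than conceptual: one must verify that the $k$-letter ring $R_{a_1, \ldots, a_k}$ is genuinely closed under addition and multiplication (so that the numerator determinant stays inside it), and—more delicately—that the ``outer'' denominator field $R_{a_1, a_k}$ really does contain the determinant and embeds into $R_{a_1, \ldots, a_k}$, which requires that a product of pairwise denominators $\prod p_{i,j}$ of the right shape absorbs the single factor coming from inverting $\det(A+B+I)$. The other point demanding care is the induction bookkeeping: the sub-blocks $a_i^* \cdots a_j^*$ are shorter intervals, so the induction hypothesis must be stated for \emph{every} contiguous block, and one must check that $R_{a_i, \ldots, a_j} \subset R_{a_1, \ldots, a_k}$ under the obvious inclusion (the $k$-letter generalization of the second half of Lemma~\ref{lemma_abc}). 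Once these containment facts are nailed down, the linear-algebra skeleton transfers verbatim from the three-letter proof.
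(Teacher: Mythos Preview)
Your proposal is correct and follows exactly the approach the paper intends: induction on $k$, with the induction step imitating the passage from Theorem~\ref{Main_ab} to Theorem~\ref{Main_abc}. The paper itself gives only a one-sentence sketch (``Induction over $k$, the induction step is analogous to the way we used Theorem~\ref{Main_ab} in the proof of Theorem~\ref{Main_abc}''), so you have in fact written out considerably more detail than the author does---including the correct identification of $C_{a_1\to a_k}$ as the ``central'' nonterminals, the placement of all intermediate splits $1<m<k$ into $\final(C)$, and the crucial observation that the matrix $A+B+I$ has entries in $\A_1\cup\A_k\subset R_{a_1,a_k}$, which is a field, so that Cramer's rule yields a denominator invertible there while the numerator stays in the ring $R_{a_1,\ldots,a_k}$.
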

\begin{proof}[Sketch of the proof]
Induction over $k$, the induction step is analogous to the way we used Theorem~\ref{Main_ab} in the proof of Theorem~\ref{Main_abc}.
\end{proof}

There is another thing to note here. There is a common pattern that I used to ensure that a GF(2)-grammar describing a subset of $a_1^* a_2^* \ldots a_k^*$ is in a very specific form: converting it to Chomsky normal form, removing the empty string from the language and
formally intersecting the GF(2)-grammar with a DFA describing the regular language $a_1^* a_2^* \ldots a_k^*$. Only one of the
performed operations (removing the empty string) may change the described language (and, even then, the change is negligible for almost any purposes), but the form of the GF(2)-grammar changes
considerably. In some contexts, mainly algorithmic ones, we are interested in the specific grammars used to describe the language. Here, however, we are only interested in the expressive power of GF(2)-grammars in general, without any consideration of the size of the GF(2)-grammar. 

One may look at the performed transformation in the following way: the transformation itself proves than any subset of $a_1^* a_2^* \ldots a_k^*$ that can be described by a GF(2)-grammar at all, can be described by a GF(2)-grammar of some special form; then, it is possible to just say that the GF(2)-grammar is in the necessary form without manually performing the transformation every time. Now, how to define the special form used? Well, one of the ways is just to say that the GF(2)-grammar is special form if it is the result of applying the transformation to some original GF(2)-grammar. However, such a definition is not the best, as the proofs of Theorems~\ref{Main_ab} and~\ref{Main_abc} do not actually use all properties of the obtained GF(2)-grammar; they only care about the types of nonterminals ``matching up'' in the every rule. Hence, it makes sense to give the following two definitions. 

\begin{defn} A GF(2)-grammar is in the \emph{prototype letter-bounded normal form}, if it satisfies the following properties:
\begin{enumerate}
\item There is one special starting nonterminal. Each of the other nonterminals is assigned some \emph{type} $a_i \to a_j$, where 
$i \leqslant j$. The nonterminal $D$ having type $a_i \to a_j$ means that $L(D)$ is a subset of 
$a_i^* a_{i+1}^* \ldots a_{j-2}^* a_{j-1}^* a_j^+$. In other words, only characters from $a_i$ to $a_j$ can appear in 
the strings from $L(D)$, they should appear in the correct order, and at least one copy of $a_j$ must appear.
\item For a starting nonterminal $A$ of the type $a_i \to a_j$ with $i \leqslant j$, there are only two possible ``types'' of rules: the ``final'' rules $A \to a_j$ and the 
``normal'' rules $A \to BC$, where $B$ and $C$ are nonterminals of the types $a_i \to a_m$ and $a_m \to a_j$, respectively, for some integer $m$ from $i$ to $j$.
\item For the starting nonterminal $S$ and for each $i$ from $1$ to $k$, there is at most one rule of the form $S \to A_k$, where $A_k$ is a non-starting nonterminal of the type $a_1 \to a_k$. There also may be at most one rule $S \to \varepsilon$. There are no other rules for the
nonterminal $S$.
\end{enumerate}
\end{defn}

An existence of a GF(2)-grammar in the prototype letter-bounded normal form for every subset of $a_1^* a_2^* \ldots a_k^*$
that can be described by any GF(2)-grammar immediately follows from the transformation (it may be necessary to add the rule $S \to \varepsilon$ back if it was previously deleted). Similarly, it is clear that every GF(2)-grammar in the prototype letter-bounded
normal form describes a subset of $a_1^* a_2^* \ldots a_k^*$.

The definition of the prototype letter-bounded normal form precisely captures the required properties of the transformed GF(2)-grammar.
However, it is sometimes better to work with a GF(2)-grammar that is in Chomsky normal form. While the prototype letter-bounded
normal form is \emph{almost} a special case of the letter-bounded normal form, the special starting ensures that it does not meet
the precise definition of Chomsky normal form. Hence, it makes sense to consider the following definition.

\begin{defn} A GF(2)-grammar is in the \emph{letter-bounded normal form}, it if satisfies the following properties.
\begin{enumerate}
\item Each of its nonterminals is assigned some \emph{type} $a_i \to a_j$, where $i \leqslant j$. The
nonterminal $D$ having type $a_i \to a_j$ means that $L(D)$ is a subset of $a_i^* a_{i+1}^* \ldots a_{j-2}^* a_{j-1}^* a_j^+$. In other words, 
only characters from $a_i$ to $a_j$ can appear in the strings from $L(D)$, they should appear in the correct order, and at least one copy of $a_j$ must appear.
\item The starting nonterminal is of the type $a_1 \to a_k$.
\item For a nonterminal $A$ of the type $a_i \to a_j$ with $i \leqslant j$, there are only two possible ``types'' of rules: the ``final'' rules $A \to a_j$ and the 
``normal'' rules $A \to BC$, where $B$ and $C$ are nonterminals of the types $a_i \to a_m$ and $a_m \to a_j$, respectively, for some integer $m$ from $i$ to $j$.
\end{enumerate}
\end{defn}

It is not hard to see that the letter-bounded normal form is a special case of Chomsky normal form. Moreover, a language described
by a GF(2)-grammar in the letter-bounded normal form is a subset of $a_1^* a_2^* \ldots a_{k-1}^* \cdot a_k^+$. In particular,
the described language cannot contain the string $\varepsilon$. Finally, the general construction still shows that each subset of
 $a_1^* a_2^* \ldots a_{k-1}^* \cdot a_k^+$ that can be described by a GF(2)-grammar at all, can be described by one in
the letter-bounded normal form: we just need to notice that the only purpose of adding a new starting nonterminal is to ensure
that all the ``disjoint'' components of the language $L \subseteq a_1^* a_2^* \ldots a_k^*$ that are defined by iterating over the 
last character of a string from $L$, are somehow in the described language: if the new starting nonterminal is $S'$ and the old starting nonterminal is $S$, then there is a rule $S' \to S_{a_1 \to a_k}$, describing $L \cap (a_1^* a_2^* \ldots a_{k-1}^* \cdot a_k^+)$, 
a rule $S \to S_{a_1 \to a_{k-1}}$, describing $L \cap (a_1^* a_2^* \ldots a_{k-2}^* \cdot a_{k-1}^+)$, et cetera. In fact, these rules are
exactly the rules from the definition of the prototype letter-bounded normal form. 

Again, the good property of both of these normal forms is that their syntax ensures that the described language exists and is a subset
of $a_1^* a_2^* \ldots a_k^*$ or a subset of $a_1^* a_2^* \ldots a_{k-1}^* \cdot a_k^+$. Indeed, it is easy to prove that each 
GF(2)-grammar in Chomsky normal form has only finitely many parse trees for each string. Therefore, even in the definition of the prototype
letter-bounded normal form, there are finitely many parse trees for each string from each nonterminal. Indeed, the number of parse
trees for $w$ from $S$ is the number of parse trees for one of its following nonterminals (specifically, the one that has the same type as the string $w$). So, in a sense, the definitions ``syntactically'' encode the fact the described language has the necessary properties.

Another thing to note is that the both definitions here actually do not define a GF(2)-grammar, but a GF(2)-grammar with some
additional information (the types of the nonterminals). Recovering the types of nonterminals seems to be difficult: not only could
there be several different ways to recover the types, but also the problem of recovering any valid assignment of the types seems
to be connected with the problem of deciding the emptiness of a GF(2)-grammar. Currently, it is unknown whether or not
the emptiness problem is decidable for GF(2)-grammars. Hence, it may be more proper to say that both these normal forms are not
GF(2)-grammars, but rather GF(2)-grammars with some additional baked-in information. 

\subsection{The Language \texorpdfstring{$\set{a^n b^n c^n}{n \geqslant 0}$}{an bn cn} and Its Relatives} 
\label{section_anbncn}

In this subsection, we will use our recently obtained knowledge to 
prove that there is no GF(2)-grammar for the language $\set{a^n b^n c^n}{n \geqslant 0}$.
It will almost immediately follow that the languages $\set{a^n b^m c^\ell}{n = m \texttt{ or } m = \ell}$ and $\set{a^n b^m c^\ell}{n \neq m \texttt{ or } m \neq \ell}$

Consider the formal power series $\Dual(\set{a^n b^n c^n}{n \geqslant 0}) = 
\sum_{n=0}^{+\infty} a^n b^n c^n$. Denote these series by $f$ for brevity. It is not
hard to see that $f = (1 + abc)^{-1}$. Indeed, $f \cdot (1 + abc) = 
\sum_{n = 0}^{+\infty} (a^n b^n c^n + a^{n+1} b^{n+1} c^{n+1}) = 1$,
because all summands except $a^0 b^0 c^0 = 1$ cancel out.

It sounds intuitive that $(1 + abc)^{-1}$ ``depends'' on $a$, $b$ and $c$ in a way
that the $R_{a, b, c}$ cannot capture;
series in $R_{a, b, c}$ should ``split'' nicely into functions that depend
only on two variables out of three. Now, let us establish that $f \notin R_{a, b, c}$ formally.

Indeed, suppose that it is not true. In other words,
\begin{equation}\label{f_definition} f = \dfrac{\sum\limits_{i=1}^n A_i B_i C_i}{pqr}, \end{equation}
where $A_i \in \A, B_i \in \B, C_i \in \C$ for every $i$ from $1$ to $n$ and, also, $p \in \F[a,b]$,
$q \in \F[a, c]$ and $r \in \F[b, c]$. Let us rewrite Equation~\eqref{f_definition} as $pqrf = 
\sum_{i=1}^n A_i B_i C_i$ with an additional condition that neither of $p$, $q$ and $r$
is zero: otherwise the denominator of the right-hand side of Equation~\eqref{f_definition} is zero.

For every formal power series of three variables $a$, $b$ and $c$ we can define its
\emph{trace}: such subset of $\N_0^3$, that a triple $(x, y, z)$ is in this subset if and only
if the coefficient of the series before $a^x b^y c^z$ is one. Traces of equal power series coincide.

How do the traces of left-hand and right-hand sides of equation $pqrf = \sum_{i=1}^n
A_i B_i C_i$ look like? Intuitively, the trace of the left-hand side should be near
the diagonal $x = y =z$ in its entirety, because $pqrf$ is a polynomial $pqr$, multiplied by
$f = \sum_{i=0}^{+\infty} a^i b^i c^i$. On the other hand, the trace of the
right-hand side has a ``block structure'': as we will establish later, it should be a finite
union of disjoint sets with type $X \times Y \times Z$. 

Our goal is to prove that such traces can coincide only if they are both finite.
This conclusion is quite natural: 
the trace of the left-hand side exhibits a ``high dependency'' between $x$, $y$ and $z$,
while the coordinates ``are almost independent'' in the trace of the right-hand side
(and they would be ``fully independent'' if there was only one set $X \times Y \times Z$
in the disjoint union). 

Let us proceed formally.

\begin{lm}\label{trace_blocky} The trace of the expression $\sum_{i=1}^n A_i B_i C_i$ is a finite disjoint
union of sets with type $X \times Y \times Z$.
\end{lm}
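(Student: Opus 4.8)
The plan is to reduce everything to finite boolean combinatorics on each of the three axes. First I would unpack the meaning of the three factors: since $A_i \in \A$ is a power series over $\F$ in the single variable $a$, it is the indicator series $\sum_{x \in X_i} a^x$ of some set $X_i \subseteq \N_0$, and likewise $B_i = \sum_{y \in Y_i} b^y$ and $C_i = \sum_{z \in Z_i} c^z$ for sets $Y_i, Z_i \subseteq \N_0$. Multiplying out, the single product $A_i B_i C_i$ is exactly the indicator series of the box $X_i \times Y_i \times Z_i$, so its trace is that box. Consequently the coefficient of $\sum_{i=1}^n A_i B_i C_i$ at $(x, y, z)$ equals $\sum_{i=1}^n [x \in X_i][y \in Y_i][z \in Z_i]$ computed in $\F$, and the trace is precisely the set of triples covered by an odd number of the $n$ boxes.

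Next I would refine each axis. On the $a$-axis, the $n$ sets $X_1, \ldots, X_n$ induce a partition of $\N_0$ into at most $2^n$ \emph{atoms}: two points lie in the same atom if and only if they have the same membership pattern with respect to $X_1, \ldots, X_n$. List the nonempty atoms as $P_1, \ldots, P_s$. I would do the same on the $b$-axis to obtain atoms $Q_1, \ldots, Q_t$ from $Y_1, \ldots, Y_n$, and on the $c$-axis to obtain $R_1, \ldots, R_u$ from $Z_1, \ldots, Z_n$. The crucial observation is that membership of $x$ in each $X_i$ is constant as $x$ ranges over a fixed atom $P_\alpha$, and similarly for the other two axes; hence the parity $\sum_i [x \in X_i][y \in Y_i][z \in Z_i]$ is constant on each product box $P_\alpha \times Q_\beta \times R_\gamma$.

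Therefore the trace is exactly the union of those product boxes $P_\alpha \times Q_\beta \times R_\gamma$ on which this constant parity equals $1$. Since the $P_\alpha$ partition the $a$-axis, the $Q_\beta$ partition the $b$-axis, and the $R_\gamma$ partition the $c$-axis, the product boxes are pairwise disjoint, and there are at most $2^{3n}$ of them; this exhibits the trace as a finite disjoint union of sets of type $X \times Y \times Z$, as required.

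I do not expect a genuine obstacle here, as the statement is a finite combinatorial fact once the three factors are read as indicator series of subsets of $\N_0$; the only point requiring care is verifying that the parity really is constant on each product box, which is immediate from the defining property of the atoms.
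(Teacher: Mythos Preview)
Your proposal is correct and is essentially the same argument as the paper's: your ``atoms'' on each axis are exactly what the paper calls the $a$-type, $b$-type, and $c$-type classes, and the constancy of the parity on each product box is precisely the paper's observation that membership in the trace depends only on these types. The only cosmetic difference is terminology; the bound of at most $2^{3n}$ boxes and the disjointness from the partition structure are identical.
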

\begin{proof} For $x \in \N_0$, let us call the set of all such $i$ from $1$ to $n$, that
the coefficient of $A_i$ before $a^x$ is one, the \emph{a-type} of $x$. Similarly, 
define $b$-type and $c$-type.

Whether or not the triple $(x, y, z)$ is in the trace of $\sum\limits_{i=1}^n A_i B_i C_i$ 
depends only on the a-type of $x$, b-type of $y$ and c-type of $z$. Indeed, the coefficient
before $a^x b^y c^z$ is one in exactly such summands $A_i B_i C_i$, that the coefficient
of $A_i$ before $a^x$ is one, the coefficient of $B_i$ before $b^y$ is one and the coefficient
of $C_i$ before $c^z$ is one. Therefore the exact set of such summands depends only on
types of $x$, $y$ and $z$.

Consequently, the trace of $\sum_{i=1}^n A_i B_i C_i$ is a union of sets 
$X \times Y \times Z$, where $X$ is a set of numbers with some fixed $a$-type,
$Y$ is a set of numbers with some fixed $b$-type and $Z$ is a set of numbers with
some fixed $c$-type. There is only a finite number of such sets, because there is no more
than $2^n$ distinct a-types, no more than $2^n$ distinct b-types and no more than $2^n$ distinct $c$-types.
\end{proof} 

\begin{lm}\label{trace_diagonal} There exists a such constant $d$, that, for every triple $(x, y, z)$ from the trace of $pqrf$, the
conditions $|x - y| \leqslant d$, $|x - z| \leqslant d$ and $|y - z| \leqslant d$ hold.
\end{lm}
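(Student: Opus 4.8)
The plan is to exploit that multiplication by the polynomial $pqr$ can only displace the diagonal support of $f$ by a bounded amount. Recall that $f = (1+abc)^{-1} = \sum_{n \geq 0} a^n b^n c^n$, so the trace of $f$ is exactly the diagonal $\set{(n,n,n)}{n \geq 0}$, on which all three coordinate differences vanish. First I would expand $pqr$ into its finitely many monomials, writing $pqr = \sum_{(i,j,k)} m_{i,j,k}\, a^i b^j c^k$ with each $m_{i,j,k} \in \F$ and only finitely many of them nonzero, and let $d$ be the largest exponent of any single variable occurring in $pqr$ (equivalently, a bound on $i$, $j$, $k$ over all monomials with $m_{i,j,k} \neq 0$). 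Multiplying $f$ by a single monomial gives $a^i b^j c^k f = \sum_{n \geq 0} a^{n+i} b^{n+j} c^{n+k}$, whose trace is the shifted diagonal $\set{(n+i, n+j, n+k)}{n \geq 0}$. On this shifted diagonal one has $x-y = i-j$, $x-z = i-k$ and $y-z = j-k$, so since $0 \leq i, j, k \leq d$ each of the three differences is bounded in absolute value by $d$.

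Next I would combine the monomials. Since $pqrf = \sum_{(i,j,k)} m_{i,j,k}\, a^i b^j c^k f$, the coefficient of $pqrf$ before $a^x b^y c^z$ can be nonzero only if $(x,y,z)$ already lies in the trace of at least one summand $a^i b^j c^k f$ with $m_{i,j,k} \neq 0$. In other words, the trace of $pqrf$ is contained in the union, over the finitely many monomials with $m_{i,j,k} \neq 0$, of the shifted diagonals described above. As each such shifted diagonal satisfies $|x-y|, |x-z|, |y-z| \leq d$, the same bound holds for every $(x,y,z)$ in the trace of $pqrf$, which is exactly the assertion of the lemma. So the constant $d$ can be taken to be the degree of $pqr$.

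There is essentially no hard step here; the one point worth stressing is that working over $\F$ introduces cancellation, so the coefficient of $pqrf$ at a given point is a sum modulo two of the contributing coefficients $m_{i,j,k}$ and may well vanish. This cancellation can only \emph{remove} points from the trace, never create a point lying outside the union of the individual monomial supports. Hence the containment used in the argument—and with it the diagonal bound—is unaffected, and the formal statement follows exactly as claimed.
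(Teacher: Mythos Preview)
Your proof is correct and follows essentially the same approach as the paper: both arguments observe that each monomial $a^i b^j c^k$ of $pqr$ shifts the diagonal support of $f$ to a line with bounded pairwise coordinate differences, and that the trace of the sum is contained in the union of these shifted diagonals. The paper takes $d$ to be the total degree of $pqr$ while you take the maximum single-variable degree, but this is an inessential difference.
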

\begin{proof} Let $d$ be the degree of $pqr$ as of a polynomial of three variables. Because
$pqrf = pqr \cdot \sum_{i=0}^{+\infty} a^i b^i c^i$, the trace of $pqrf$ may only
contain triples $(\ell + i, m + i, k + i)$ for monomials $a^\ell b^m c^k$ from the polynomial $pqr$.
For such triples, $|x - y| 	= |\ell - m| \leqslant d$. Why? Because $d$ is the total degree of $pqr$ and, 
therefore, $0 \leqslant \ell \leqslant d$ and $0 \leqslant m \leqslant d$. Similarly,
$|x - z| \leqslant d$ and $|y - z| \leqslant d$.
\end{proof}

\begin{lm} If the traces of $\sum_{i=1}^n A_i B_i C_i$ and $pqrf$ coincide, then
they both are finite sets.
\end{lm}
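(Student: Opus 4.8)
The plan is to argue by contradiction, extracting a contradiction from the two incompatible structural descriptions of the common trace that the previous two lemmas already provide. Let $T$ denote the common trace (it is a single set, since the two traces coincide), and suppose for contradiction that $T$ is infinite. By Lemma~\ref{trace_blocky}, $T$ is a finite disjoint union of product sets $X_j \times Y_j \times Z_j$; after discarding any empty blocks, I may assume every block in this union is nonempty. Since $T$ is infinite but the union is finite, at least one block, say $X \times Y \times Z$, must itself be infinite.

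Next I would use the elementary fact that a nonempty product set $X \times Y \times Z$ is infinite precisely when at least one of its three factors is infinite. Without loss of generality suppose $X$ is infinite, the remaining two cases being symmetric. Because the block is nonempty, I can fix some $y_0 \in Y$ and $z_0 \in Z$. Then for every $x \in X$ the triple $(x, y_0, z_0)$ lies in the block, and hence in $T$, and hence in the trace of $pqrf$.

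At this point Lemma~\ref{trace_diagonal} delivers the contradiction: it provides a constant $d$ such that $|x - y_0| \leqslant d$ for every element $(x, y_0, z_0)$ of the trace of $pqrf$. But $X$ is infinite, so it contains some $x$ with $x > y_0 + d$, violating the bound. Therefore no block can be infinite, every block is finite, and consequently $T$ is a finite union of finite sets, i.e.\ finite. Since both traces equal $T$, both are finite, as claimed.

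I do not anticipate any serious obstacle; the argument is a short contradiction of pigeonhole flavour. Its only real content is the observation that the two descriptions of $T$ pull in opposite directions: the block description of Lemma~\ref{trace_blocky} lets the three coordinates vary independently, whereas the diagonal description of Lemma~\ref{trace_diagonal} forces them to stay within bounded distance of one another, and these are reconcilable for an infinite set only if every coordinate set is in fact bounded. The one minor point to treat carefully is the reduction to nonempty blocks together with the remark that an infinite nonempty product must have an infinite factor, which is exactly what lets me produce an unbounded family of values in one coordinate while holding the other two coordinates fixed.
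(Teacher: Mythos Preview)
Your proposal is correct and follows essentially the same route as the paper: assume a block $X \times Y \times Z$ is infinite, pick an infinite factor (WLOG $X$), fix elements in the other two factors, and then choose $x \in X$ large enough to violate the diagonal bound from Lemma~\ref{trace_diagonal}. The only cosmetic differences are that you explicitly discard empty blocks and phrase the contradiction via $|x - y_0| \leqslant d$, whereas the paper picks $x_{\text{new}} > \max(y, z) + d$; the substance is identical.
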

\begin{proof} From Lemmata~\ref{trace_blocky} and~\ref{trace_diagonal}, a set that
is close to the diagonal coincides with a disjoint union of sets of a type $X \times Y \times Z$.
Then, each of the sets $X \times Y \times Z$ in the union is finite. Roughly speaking,
infinite sets of such type should contain elements that are 
arbitrarily far from the diagonal $x = y = z$.

Let us explain the previous paragraph more formally.  Indeed, suppose that one of the $X \times Y \times Z$ sets from the conclusion of Lemma~\ref{trace_blocky} is infinite. Then, at least one of the sets $X$, $Y$ and $Z$
is infinite. Without loss of generality, $X$ is infinite. Let $(x, y, z)$ be some element of 
$X \times Y \times Z$: it exists, because every infinite set contains at least one element.
Choose $x_{\text{new}}$ so $x_{\text{new}} > \max(y, z) + d$. Such $x_{\text{new}}$
exists, because $X$ is an infinite set of nonnegative integers. Then, $(x_{\text{new}}, y, z)
\in X \times Y \times Z$. Therefore, $(x_{\text{new}}, y, z)$ is in the trace of
$\sum_{i=1}^n A_i B_i C_i$. However, by Lemma~\ref{trace_diagonal}, 
$(x_{\text{new}}, y, z)$ cannot lie in the trace of $pqrf$, because $x_{\text{new}}$
differs from $y$ and $z$ too much.
\end{proof}
	
\begin{lm} The polynomial $1 + abc$ is irreducible as a polynomial over field $\F$.
\end{lm}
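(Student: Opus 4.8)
The plan is to exploit the fact that $1 + abc$ has degree one in the variable $c$ and to argue by degree bookkeeping over the polynomial ring $\F[a,b]$. Concretely, I would regard $\F[a,b,c]$ as $S[c]$ with $S = \F[a,b]$, and write $1 + abc = (ab)\,c + 1$, a polynomial of degree exactly one in $c$.

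First I would set up a factorization for contradiction: suppose $1 + abc = f g$ with $f, g \in \F[a,b,c]$ both non-units. Since $S$ is an integral domain, so is $S[c]$, and degrees in $c$ are additive; hence $\deg_c f + \deg_c g = \deg_c(1+abc) = 1$. Without loss of generality $\deg_c f = 0$ and $\deg_c g = 1$, so $f \in S = \F[a,b]$.

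Next I would extract the constant-in-$c$ coefficient. Writing $g = g_1 c + g_0$ with $g_0, g_1 \in S$, and comparing the coefficients of $c^0$ on both sides of $(ab)c + 1 = f(g_1 c + g_0)$, one gets $f g_0 = 1$ in $S$. Therefore $f$ is a unit in $\F[a,b]$, hence a unit in $\F[a,b,c]$, contradicting the assumption that $f$ is a non-unit. This forces $1 + abc$ to be irreducible.

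The one genuinely subtle point~--- and the reason a one-line appeal to ``linear polynomials are irreducible'' does not suffice~--- is that the leading coefficient $ab$ is \emph{not} a unit in $\F[a,b]$. Over a field a degree-one polynomial is automatically irreducible, but over the ring $S$ one must check primitivity: the content of $(ab)c + 1$ is $\gcd(ab, 1) = 1$, so the polynomial is primitive, and a primitive degree-one polynomial over a UFD is irreducible by Gauss's lemma. The explicit coefficient comparison above is just a self-contained instance of this fact, so I would phrase the final proof in that elementary form to avoid invoking Gauss's lemma outright.
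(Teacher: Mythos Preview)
Your proof is correct, and it takes a genuinely different route from the paper's. The paper argues via \emph{total} degree: since $1+abc$ has total degree $3$, a nontrivial factorization must split as degrees $1$ and $2$; then, because each variable occurs with degree exactly $1$ in $1+abc$, each variable can appear in only one of the two factors, so the degree-$1$ factor depends on a single variable (by symmetry, $a$) and must be $1+a$, while the degree-$2$ factor must be of the form $1 + k_b b + k_c c + bc$; finally one checks by hand that the product of these contains the monomial $a$, which cannot cancel.

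Your argument instead isolates one variable, writing $1+abc = (ab)c + 1 \in \F[a,b][c]$, and uses $c$-degree additivity over the domain $\F[a,b]$ to force any putative non-unit factor of $c$-degree $0$ to divide the constant term $1$, an immediate contradiction. This is shorter, avoids the symmetry reduction and the explicit enumeration of candidate factors, and generalizes without change (for instance to $1 + a_1 a_2 \cdots a_k$ over any integral domain). The paper's approach, by contrast, is more hands-on and tailored to this specific polynomial over $\F$; it buys concreteness at the cost of a small case analysis.
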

\begin{proof} 
Proof by contradiction: suppose that $1 + abc$ is reducible. Because its total degree is $3$,
it should split into a product of two polynomials with the total degrees $1$ and $2$ respectively.
In principle, enumerating all pairs of polynomials over $\F$ of total degree 
$1$ and $2$ on the computer does the job. For completeness, I will provide a proof
that does not use computer search.

Because the degree of $1 + abc$ with respect to each variable is $1$, each variable
occurs in exactly one of two factors --- if it occurs in both, the resulting degree is at least $2$,
if it occurs in neither,  the resulting degree is $0$. Hence, because the total degree of the second factor is $2$, but its degree in every variable is only $1$,
exactly $2$ variables occur in the second factor. 

Therefore, only one variable occurs in the 
first factor. Because the polynomial $1 + abc$ is symmetric with respect to the permutation
of variables, we may assume that the first factor depends only on $a$ and the second factor
depends only on $b$ and $c$.

The first factor is invertible, because the product is invertible. Previously, we have
shown that the first factor is a polynomial of $a$ of degree $1$, therefore the only
one possibility for the first factor remains: $1 + a$. The second factor is also 
invertible and is of degree $2$, therefore it is $1 + k_b b + k_c c + bc$ for some 
$k_b$ and $k_c$ from $\F$. Then, there is a summand $a \cdot 1 = a$ in their product,
which does not have anything to cancel up with. But their product is $1 + abc$, contradiction.
\end{proof}

Because $pqrf = \sum_{i=1}^n A_i B_i C_i$, the trace of $pqrf$ is finite. In other
words, $pqrf$ is a polynomial. Recall that $f = (1 + abc)^{-1}$, so  $\frac{pqr}{1 + abc}$
is a polynomial. Because the product of three polynomials $p$, $q$ and $r$ is divisible 
by an irreducible polynomial $1 + abc$, one of them is also divisible by $1 + abc$ (here we have used the fact that the ring $\F[a, b, c]$ of polynomials in three variables
is a unique factorization domain).
But this is impossible, because each of the polynomials $p$, $q$ and $r$ is non-zero 
(here we have finally used that condition from the statement of the lemma) and does not depend on one of the variables.

Finally, we have established the following theorem.

\begin{thm}\label{theorem_anbncn} The language $\set{a^n b^n c^n}{n \geqslant 0}$ is not described by 
a GF(2)-grammar.
\end{thm}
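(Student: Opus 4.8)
The plan is to argue by contradiction and to feed the explicit generating function of the language into Theorem~\ref{Main_abc}, then show that this generating function cannot lie in $R_{a, b, c}$. So suppose some GF(2)-grammar describes $K = \set{a^n b^n c^n}{n \geqslant 0}$. Since $K \subset a^* b^* c^*$, Theorem~\ref{Main_abc} forces $\Dual(K) \in R_{a, b, c}$. The first step is to put $\Dual(K)$ in closed form: writing $f \coloneqq \Dual(K) = \sum_{n=0}^{\infty} a^n b^n c^n$, a one-line telescoping computation gives $f \cdot (1 + abc) = 1$, so $f = (1 + abc)^{-1}$. The whole theorem therefore reduces to the purely algebraic claim that $(1 + abc)^{-1} \notin R_{a, b, c}$.

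To prove that claim I would unfold the definition of $R_{a, b, c}$: membership would mean $f = \frac{\sum_{i=1}^n A_i B_i C_i}{pqr}$ with $A_i \in \A$, $B_i \in \B$, $C_i \in \C$ and with nonzero $p \in \poly(a, b)$, $q \in \poly(a, c)$, $r \in \poly(b, c)$. Clearing denominators turns this into the identity $pqr f = \sum_{i=1}^n A_i B_i C_i$ between two honest power series, and the strategy is to compare the \emph{traces} (supports) of the two sides. By Lemma~\ref{trace_blocky} the right-hand trace is a finite disjoint union of ``boxes'' $X \times Y \times Z$, reflecting the fact that the three variables are essentially decoupled in a sum of products $A_i B_i C_i$. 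By Lemma~\ref{trace_diagonal} the left-hand trace instead hugs the diagonal $x = y = z$: every triple in it satisfies $|x-y|, |x-z|, |y-z| \leqslant \deg(pqr)$, because multiplying $f$ by a polynomial only shifts the perfectly diagonal support of $f$ by bounded amounts.

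The combinatorial crux, and the step I expect to be the real obstacle, is reconciling these two descriptions. An infinite box $X \times Y \times Z$ (say with $X$ infinite) must contain points whose first coordinate is arbitrarily larger than the other two, hence arbitrarily far from the diagonal, which the near-diagonal constraint forbids; so if the two traces coincide, every box is finite and the common trace is finite. This is exactly the content of the (unlabelled) lemma preceding the theorem. Finiteness of the trace of $pqr f$ says that $pqr f$ is a polynomial, i.e. $(1 + abc) \mid pqr$.

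The endgame is then a clean divisibility argument. Since $1 + abc$ is irreducible over $\F$ (established in the preceding lemma), it must divide one of the factors $p$, $q$, $r$. But each of these is a \emph{nonzero} polynomial that omits one of the three variables, whereas $1 + abc$ genuinely involves all three, and a nonzero multiple of $1 + abc$ cannot miss a variable. This contradiction shows $f \notin R_{a, b, c}$, and hence the assumed GF(2)-grammar cannot exist.
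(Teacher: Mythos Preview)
Your proposal is correct and follows essentially the same approach as the paper: compute $\Dual(K) = (1+abc)^{-1}$, assume it lies in $R_{a,b,c}$, clear denominators to get $pqr f = \sum A_i B_i C_i$, compare traces via Lemmata~\ref{trace_blocky} and~\ref{trace_diagonal} to force $pqr f$ to be a polynomial, and then finish with the irreducibility of $1+abc$ against the fact that each of $p,q,r$ omits a variable. The sequence of lemmas you invoke and the divisibility endgame match the paper's argument step for step.
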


\begin{cor} The language $\set{a^n b^m c^\ell}{n = m \text{ or } m = \ell}$ is not
described by a GF(2)-grammar. 
\end{cor}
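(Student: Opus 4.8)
The plan is to reduce the corollary to the theorem just proved, exploiting that the series ring $R_{a, b, c}$ is closed under addition and that, under the correspondence $\Dual$, addition of power series is exactly symmetric difference of languages. The idea is to express the ``hard'' language $\set{a^n b^n c^n}{n \geqslant 0}$ as a Boolean combination of $L_1 \coloneqq \set{a^n b^m c^\ell}{n = m \text{ or } m = \ell}$ together with two languages that are visibly GF(2)-describable, and then transport this combination into the world of formal power series.

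First I would introduce the two ``single-equation'' languages $P \coloneqq \set{a^n b^n c^\ell}{n, \ell \geqslant 0}$ (the words satisfying $n = m$) and $Q \coloneqq \set{a^n b^m c^m}{n, m \geqslant 0}$ (the words satisfying $m = \ell$), so that $L_1 = P \cup Q$. The key observation is that $P \cap Q = \set{a^n b^n c^n}{n \geqslant 0}$, together with the set identity $P \cap Q = (P \cup Q) \triangle P \triangle Q$ (symmetric difference is associative, and removing from $P \cup Q$ everything that lies in $P \triangle Q$ leaves exactly $P \cap Q$). Applying $\Dual$ and using that it turns symmetric difference into addition gives the series identity
\begin{equation*}
\Dual(P \cap Q) = \Dual(L_1) + \Dual(P) + \Dual(Q).
\end{equation*}
Since $P \cap Q = \set{a^n b^n c^n}{n \geqslant 0}$, its series equals $(1 + abc)^{-1}$, which the preceding development shows does \emph{not} lie in $R_{a, b, c}$.

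Next I would check that $P$ and $Q$ are each described by a GF(2)-grammar. Both are subsets of $a^* b^* c^*$ and are in fact unambiguous context-free, each being the concatenation of the unambiguous language $\set{a^n b^n}{n \geqslant 0}$ (resp.\ $\set{b^m c^m}{m \geqslant 0}$) with a regular language; and any unambiguous grammar, read as a GF(2)-grammar, describes the same language, because every word then has $0$ or $1$ parse tree and $0, 1$ are their own residues modulo $2$. Hence by Theorem~\ref{Main_abc} both $\Dual(P)$ and $\Dual(Q)$ lie in $R_{a, b, c}$.

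Finally I would argue by contradiction: if $L_1$ were described by a GF(2)-grammar, then $\Dual(L_1) \in R_{a, b, c}$ by Theorem~\ref{Main_abc}; since $R_{a, b, c}$ is a ring (Lemma~\ref{lemma_abc}) and hence closed under addition, the displayed identity would force $(1 + abc)^{-1} = \Dual(P \cap Q) \in R_{a, b, c}$, contradicting what was just established. The one point demanding care---the likely main obstacle---is the passage from the set-theoretic union $P \cup Q$ to the algebraic relation over $\F$: one must use the characteristic-$2$ inclusion--exclusion $\Dual(P \cup Q) = \Dual(P) + \Dual(Q) + \Dual(P \cap Q)$ rather than a naive translation of union, and separately confirm the (routine but necessary) describability of $P$ and $Q$.
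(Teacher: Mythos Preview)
Your argument is correct and rests on the same identity the paper uses, namely $\set{a^n b^n c^n}{n \geqslant 0} = L_1 \triangle P \triangle Q$. The only cosmetic difference is the routing: the paper stays at the language level, citing the GF(2)-grammar for $P \triangle Q$ (Example~2.1) and closure of GF(2)-grammars under symmetric difference to conclude that $\set{a^n b^n c^n}{n \geqslant 0}$ would itself be GF(2)-describable, whereas you pass through $\Dual$ and the ring $R_{a,b,c}$; both reach the same contradiction.
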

\begin{proof} Suppose that $\set{a^n b^m c^\ell}{n = m \text{ or } m = \ell}$ is 
described by a GF(2)-grammar. Then, $\set{a^n b^n c^n}{n \geqslant 0}$ also is, as the symmetric difference of  $\set{a^n b^m c^\ell}{n = m \text{ or } m = \ell}$ and
$\set{a^n b^m c^\ell}{n = m \text{ or } m = \ell, \text{ but not both}}$, where the latter
is described by a GF(2)-grammar~\cite[Example 2]{grammars-gf2}. Contradiction.
\end{proof}

\begin{cor} The language $\set{a^n b^m c^\ell}{n \neq m \text{ or } m \neq \ell}$
is not described by a GF(2)-grammar.
\end{cor}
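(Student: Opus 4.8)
The plan is to argue by contradiction and reduce the claim to the theorem just established, exploiting the closure of the GF(2)-grammar family under symmetric difference. The crucial observation is purely set-theoretic: since $L_2 = \set{a^n b^m c^\ell}{n \neq m \text{ or } m \neq \ell}$ is contained in $a^* b^* c^*$, the symmetric difference $L_2 \triangle (a^* b^* c^*)$ coincides with the relative complement $(a^* b^* c^*) \setminus L_2$. A word $a^n b^m c^\ell$ lies in this complement exactly when the defining condition fails, that is, when $n = m$ and $m = \ell$, hence when $n = m = \ell$. Consequently $L_2 \triangle (a^* b^* c^*) = \set{a^n b^n c^n}{n \geqslant 0}$, which is precisely the language that the preceding theorem shows to be non-describable.

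First I would record the two closure facts required. The language $a^* b^* c^*$ is regular and is therefore described by a GF(2)-grammar: an unambiguous grammar for it (for instance, one read off a DFA) has at most one parse tree per word, so the parity of the number of parse trees agrees with membership, and the grammar read as a GF(2)-grammar describes the same language. Second, the class of languages described by GF(2)-grammars is closed under symmetric difference. Granting these, the proof is immediate: if $L_2$ were described by a GF(2)-grammar, then, since $a^* b^* c^*$ is described by one, their symmetric difference would be described by one as well; but that symmetric difference is exactly $\set{a^n b^n c^n}{n \geqslant 0}$, contradicting the theorem.

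No genuine obstacle arises in this corollary; its entire content was already spent in proving that $\set{a^n b^n c^n}{n \geqslant 0}$ is not GF(2)-describable. The present statement merely transports that fact to the dense language $L_2$ by subtracting off a GF(2)-describable ambient language to isolate the diagonal, in exact parallel with the previous corollary, which subtracted off the ``exactly one equality'' language to isolate the same diagonal from the ``or''-type language $L_1$. The only points deserving care are the verification that $a^* b^* c^*$ is GF(2)-describable and that symmetric difference preserves describability, both of which are standard properties of the family.
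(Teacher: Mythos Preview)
Your proof is correct and follows exactly the same route as the paper: take the symmetric difference with the GF(2)-describable language $a^* b^* c^*$ to obtain $\set{a^n b^n c^n}{n \geqslant 0}$, contradicting the preceding theorem. The paper's version is terser, but the argument is identical.
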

\begin{proof} Otherwise, $\set{a^n b^n c^n}{n \geqslant 0} = (a^* b^* c^*) \triangle \set{a^n b^m c^\ell}{n \neq m \text{ or } m \neq \ell}$ 
would be described by a GF(2)-grammar as well.
\end{proof}

We have just proven that the language 
$\set{a^n b^m c^\ell}{n = m \text{ or } m = \ell}$ is not described by a GF(2)-grammar. 
Hence, it is inherently ambiguous. Previous proofs
of its inherent ambiguity were purely combinatorial, mainly based on Ogden's lemma, 
while our approach is mostly algebraic.

More importantly, we have proven that the language 
$\set{a^n b^m c^\ell}{n \neq m \text{ or } m \neq \ell}$ is not described by a GF(2)-grammar.
Therefore, it is inherently ambiguous. The inherent ambiguity of this language was a long-standing open question~\cite[p. 375]{autebert}.

\subsection{Other Applications}

There are many other applications to our techniques as well. For example, consider
the famous paper by Hibbard and Ullian about inherently ambiguous languages and 
languages that have a complement that can be recognized by an ordinary 
grammar~\cite{unamb-comp}.
Since the publication of that paper, a few improvements have been made.
For example, Maurer~\cite{Maurer} sharpened the statement of one of their theorems.
Also, recently
Martynova and Okhotin~\cite{unamb-linear-compl} found a language that is described by an unambiguous
\emph{linear} grammar,
but has a complement that cannot be described by any ordinary grammar at all.
I will suggest another potential improvement.

A significant part of Hibbard's and Ullian's paper is dedicated to proving that
the language $K \coloneqq \set{a^p b^q c^r d^s e^t}{\text{($p = q$ and $r = s$) or ($q = r$ and $s = t$)}}$
is inherently ambiguous.

With our techniques, the proof is simple. Indeed, $K$ is the symmetric difference of three languages
$\set{a^p b^q c^r d^s e^t}{\text{$p = q$ and $r = s$}} = \set{a^n b^n c^m d^m e^\ell}{n, m, \ell \geqslant 0}$, 
$\set{a^p b^q c^r d^s e^t}{\text{$q = r$ and $s = t$}} = \set{a^n b^m c^m d^\ell e^\ell}{n, m, \ell \geqslant 0}$ and
$\set{a^p b^q c^r d^s e^t}{\text{$p = q = r = s = t$}} = \set{a^n b^n c^n d^n e^n}{n \geqslant 0}$. The first two of these three languages are described by unambiguous grammars.
Therefore, if there exists an unambiguous grammar for $K$, then there exists a GF(2)-grammar for $\set{a^n b^n c^n d^n e^n}{n \geqslant 0}$. This is impossible by Theorem~\ref{theorem_anbncn}, because the language $\set{a^n b^n c^n d^n e^n}{n \geqslant 0}$ is ``even harder'' than $\set{a^n b^n c^n}{n \geqslant 0}$.

Intuitively, this is clear enough. Proving it formally with our current techniques can be a bit tricky. 
It is possible to prove this statement in an ``ad-hoc'' way, but let us use this opportunity
to prove some useful closure results (they are easy to prove, but have not been formally proved yet anywhere, because they take quite a lot of space to write down).
 
\begin{defn} For a language $K$ over an alphabet $\Sigma$ and a homomorphism $h \colon \Sigma \to \Omega^*$ we can define the \emph{GF(2)-homomorphic image} $h_{\oplus}(K)$ of $K$ under $h$ in the following way. A string $w \in \Sigma^*$ belongs to $h_{\oplus}(K)$ if and only if the set $h^{-1}(w) \cap K$ has an odd size. If this set is infinite for any $w \in \Sigma^*$, then $h_{\oplus}(K)$ is ill-defined.
\end{defn}
 
\begin{lm}\label{grammar-gf2-hom} Assume that a GF(2)-grammar $G$ describes a language $K$ over an alphabet $\Sigma$ and $h \colon \Sigma \to \Omega^*$ is a homomorphism. Then, there exists a GF(2)-grammar $G_h$ that describes the language $h_{\oplus}(K)$, as long as the following technical condition holds: for every  $w \in \Omega^*$, the total number of parse trees in $G$
for strings from $h^{-1}(w)$ is finite (this is a non-trivial condition if the set $h^{-1}(w)$ is infinite). In particular, $h_{\oplus}(K)$ is well-defined in that case.
\end{lm}
\begin{proof} Informally, $G_h$ is constructed by applying $h$ to every rule. Formally speaking, $G_h$ has the same set of nonterminals, the terminal alphabet $\Omega$ and,
for every rule $A \to w_0 \odot X_1 \odot w_1 \odot X_2 \odot \cdots \odot w_{\ell - 1} \odot X_\ell w_\ell$ in the original  GF(2)-grammar $G$, a corresponding rule $A \to h(w_0)  \odot X_1 \odot h(w_1) \odot X_2 \odot \cdots \odot h(w_{\ell - 1}) \odot X_\ell \odot w_\ell$ and no other rules (here, $w_i$ are strings over $\Sigma$ and $X_i$ are nonterminals of $G$).

Then, for every parse tree in $G$ for a string $w \in \Sigma^*$ we can construct a corresponding parse tree in $G_h$ for string $h(w)$ in the following way: when we apply some rule in the original parse tree, apply the corresponding rule in the corresponding parse tree. Moreover, this process is reversible.
Therefore, for a parse tree in $G_h$ for $w \in \Omega^*$ corresponds to a parse tree in $G$ for \emph{some} string from $h^{-1}(w)$. The exact string is uniquely determined by the sequence of used rules.

Hence, the number of parse trees in $G_h$ for a string $w \in \Omega^*$ is exactly
the total number of parse trees in $G$ for strings from $h^{-1}(w)$. In particular, $G_h$ is 
a valid GF(2)-grammar only if this number is always finite. 

If this number is finite, then there is only a
finite number of strings in $K \cap h^{-1}(w)$, because each one has at least one parse tree (by definition, strings in $K$ have ann odd number of parse trees in $G$, and every odd number is at least $1$). Hence, $h_{\oplus}(K)$ is well-defined. Finally, each string from $h^{-1}(w) \setminus K$ ``contributes'' an even number of parse trees in $G_h$ to $w$ and each string from $h^{-1}(w) \cap K$ ``contributes'' an odd number of parse trees in $G_h$ to $w$. Hence, $w$ is in $L(G_h)$ if and only if $h^{-1}(w) \cap K$ has odd size. Therefore, $L(G_h)$ is
exactly the language $h_{\oplus}(K)$.
\end{proof}

\begin{rem} Lemma~\ref{grammar-gf2-hom} is somewhat unconventional (and can be tricky to use) in the following way: usually we prove closure results for the languages themselves without any assumptions about the underlying grammar. In Lemma~\ref{grammar-gf2-hom}, on the other hand, we need to know some properties of the original GF(2)-grammar. Let us prove a weaker, but easier to use corollary that concerns the languages themselves.
\end{rem}

\begin{lm}\label{lang-gf2-hom} If a language $K$ over an alphabet $\Sigma$ is described by a GF(2)-grammar and $h \colon \Sigma \to \Omega^*$ is a non-erasing homomorphism, then $h_{\oplus}(K)$ is described by a GF(2)-grammar as well.
\end{lm}
\begin{proof} Because $h$ is non-erasing, each string $w \in \Omega^*$ has only a finite number of preimages under $h$. Each of those preimages has a finite number of parse trees by the definition 
of a GF(2)-grammar. Hence, the technical condition from Lemma~\ref{grammar-gf2-hom} is
satisfied. 
\end{proof}
 
 \begin{cor} The set $M := \set{a^n b^n c^n d^n e^n}{n \geqslant 0}$ is not described by a GF(2)-grammar.
 \end{cor}
 \begin{proof} Consider a homomorphism $h$, defined by $h(a) = aa, h(b) = b, h(c) = b, h(d) = c, h(e) = c$. It is a non-erasing homomorpism. Therefore, we can apply Lemma~\ref{lang-gf2-hom} and deduce that $h_{\oplus}(M) = \set{a^{2n} b^{2n} c^{2n}}{n \geqslant 0}$ is described by a GF(2)-grammar ($h_{\oplus}(M)$ looks like this, because $h$ is injective on $M$). Now, let us apply an deterministic finite transducer
that halves the lengths of all blocks of consequent equal letters and does not return anything at all if there are any such blocks of odd length. Any DFT is an injective NFT. Therefore, by
closure under injective NFTs~\cite[Theorem 17]{grammars-gf2}, there is a GF(2)-grammar for $\set{a^n b^n c^n}{n \geqslant 0}$. Contradiction.
\end{proof}

\begin{rem} Lemma~\ref{lang-gf2-hom} generalizes the ``method of unary image'' from the original paper about expressive powers of GF(2)-grammars~\cite[Theorem 8]{grammars-gf2}. It is possible to
generalize even further, by defining GF(2)-transducers and proving the equivalents of Lemmata~\ref{grammar-gf2-hom} and~\ref{lang-gf2-hom} for them, but we will avoid that for now, because both the statement and the proof are quite technical.
\end{rem}

\section{General Bounded Languages}

Until now, I have only considered the case of \emph{letter-bounded} languages. Recall that the usual definition of a bounded language
is more general: a language is bounded if it is a subset of $w_1^* w_2^* \ldots w_k^*$ for some strings $w_1$, $w_2$, $\ldots$, $w_k$ (clearly, we may assume that all $w_i$ are non-empty).
Notice that there is \emph{no} requirement that each string from the language has an exactly one representation of the form $w_1^{\ell_1} w_2^{\ell_2} \ldots w_k^{\ell_k}$. 

Suppose that we have already fixed both the integer $k$ and the sequence $w_1$, $w_2$, $\ldots$, $w_k$ of the strings over $\Sigma$ 
and are now considering a language $L \subseteq w_1^* w_2^* \ldots w_k^*$. Then, for a string $s$ from $w_1^* w_2^* \ldots w_k^*$ let us say that the tuple of nonnegative integers $(\ell_1, \ell_2, \ldots, \ell_k)$ is a \emph{representation} of $s$,
if $s = w_1^{\ell_1} w_2^{\ell_2} \ldots w_k^{\ell_k}$. Clearly, some strings have many representations. For example, if $k = 3$, $w_1 = ab$, $w_2 = a$ and $w_3 = ba$,
then the string $(ab)^n a$ has exactly $n+1$ representations: $(0, 1, n)$, $(1, 1, n-1)$, $\ldots$, $(n, 1, 0)$. From here on, I will sometimes
abuse the notation a bit and write $w_1^{\ell_1} w_2^{\ell_2} \ldots w_k^{\ell_k}$ and mean the representation $(\ell_1, \ell_2, \ldots, \ell_k)$ of the string  $w_1^{\ell_1} w_2^{\ell_2} \ldots w_k^{\ell_k}$ and not the string  $w_1^{\ell_1} w_2^{\ell_2} \ldots w_k^{\ell_k}$ itself.

\begin{thm}\label{general-bounded} Let $k$ be a positive integer and $w_1$, $w_2$, $\ldots$, $w_k$ be some strings over an alphabet $\Sigma$. Then, if a language $L \subseteq w_1^* w_2^* \ldots w_k^*$ can be described by a GF(2)-grammar, then the language $L_a \coloneqq \set{a_1^{\ell_1} a_2^{\ell_2} \ldots a_k^{\ell_k}}{w_1^{\ell_1} \ldots w_k^{\ell_k} \in L}$ over the alphabet $\Omega \coloneqq \{a_1, \ldots, a_k\}$ can also be described by a GF(2)-grammar. In particular, the formal power seriers $\sum\limits_{w_1^{\ell_1} \ldots w_k^{\ell_k} \in L} a_1^{\ell_1} \ldots a_k^{\ell_k}$ is in $R_{a_1, a_2, \ldots, a_k}$. Moreover, the construction is effective: there exists an algorithm that
constructs a GF(2)-grammar for $L_a$ when the number $k$, the strings $w_1$, $w_2$, $\ldots$, $w_k$ and a GF(2)-grammar for the language $L$ are given in the input.
\end{thm} 
\begin{proof} Consider a nondeterministic finite tranducer that does the following. It has $k$ ``important'' states $q_1$, $q_2$, $\ldots$, $q_k$ (depending on the exact definition of a nondeterministic transducer, it may need more states for ``bookkeeping''), with $q_1$ being the sole starting state. When in the state $q_i$, it nondeterministically chooses to either
\begin{enumerate}
\item move to the state $q_{i+1}$ without reading or writing anything, 
\item or stay in the state $q_i$, read the string $w_i$ and write the single character $a_i$ to the output. 
\end{enumerate} 
In other words, on a string $s \in w_1^* w_2^* \ldots w_k^*$, it ``guesses'' a representation $s = w_1^{\ell_1} w_2^{\ell_2} \ldots w_k^{\ell_k}$ and ``translates'' the representation to $a_1^{\ell_1} \ldots a_k^{\ell_k}$ by replacing each $w_i$ with the corresponding $a_i$.

Now, each string from $\Sigma^*$ can have multiple images from $\Omega^*$: one for each representation of the string. However, the transduction is still injective: while each string $s \in \Sigma^*$ can have multiple images in $\Omega^*$, with an image $a_1^{\ell_1} a_2^{\ell_2} \ldots a_k^{\ell_k}$ corresponding to the representation $s = w_1^{\ell_1} \ldots w_k^{\ell_k}$, each string from $\Omega^*$ has an exactly one preimage from $\Sigma^*$. As shown previously by Makarov and Okhotin~\cite{grammars-gf2}, the class of languages described by GF(2)-grammars is closed under injective nondeterministic finite transductions. Hence, $L_a$ is described by some GF(2)-grammar. 

Finally, the construction is effective because the result of applying an injective NFT to a given GF(2)-grammar can be obtained by an explicit construction by Makarov and Okhotin~\cite{grammars-gf2}, similarly to the simpler case of intersecting a GF(2)-grammar with a regular language. 
Here, it is important that Eilenberg's construction~\cite{eilenberg} (used as one of the steps of the aforementioned argument by Makarov and Okhotin) that transforms a single-valued NFT into an unambiguous NFT is also known to be effective.
\end{proof}

While the statement of Theorem~\ref{general-bounded} is true, it is inherently
one-sided: the existence of a GF(2)-grammar for $L \subseteq w_1^* w_2^* \ldots w_k^*$ implies the existence of a GF(2)-grammar for its ``image'' $L_a \coloneqq \set{a_1^{\ell_1} a_2^{\ell_2} \ldots a_k^{\ell_k}}{w_1^{\ell_1} w_2^{\ell_2} \ldots w_k^{\ell_k} \in L}$, but the existence of a GF(2)-grammar for $L_a$ \emph{does not} necessarily imply the existence of a GF(2)-grammar for $L$. 

The main issue with Theorem~\ref{general-bounded} is that it considers all
representations of a string at the same time, regardless of how many of them exist. Indeed, if $s$ is in $L$, then each representation $s = w_1^{\ell_1} \ldots w_k^{\ell_k}$ has a corresponding string $a_1^{\ell_1} \ldots a_k^{\ell_k}$ in the language $L_a$; if $s$ is not in $L$, then there are no strings in the language $L_a$ that correspond to the representations of $s$. Arguably, this choice of representations does not fit the spirit of GF(2)-grammars
and GF(2)-operations. Indeed, if we tried to apply Lemma~\ref{lang-gf2-hom} to the language $L$ with the natural homomorphism
$h$ defined by $h(a_i) = w_i$, we would run into the issue that $h_{\oplus}(L_a)$ does not necessarily coincide with $L$: the strings from $L$ with an odd number of representations are in the $h_{\oplus}(L_a)$, but the strings with an even number of representations are not.

However, there is a well-known way to deal with this issue, suggested by Ginsburg and Ullian~\cite{unamb-class}. 
\begin{defn}
The \emph{leftmost representation} (Ginsburg and Ullian use the term ``$w$-maximal'' instead) of the string $s$ is the representation $s = w_1^{\ell_1} \ldots w_k^{\ell_k}$ that lexicographically maximizes
the tuple $(\ell_1, \ell_2, \ldots, \ell_k)$. In other words, it chooses the representation with the largest $\ell_1$. If there
are ties, they are broken by $\ell_2$: the representation with the largest $\ell_2$ is chosen. If there are still some ties, they
are broken by $\ell_3$, and so on. 
\end{defn}
\begin{oldtheorem}[{\cite[Lemma 5.2]{unamb-class}}]\label{greedy} For an integer $k$ and non-empty strings $w_1$, $w_2$, $\ldots$, $w_k$ consider the \emph{language of the leftmost representations of the strings from the (regular) language $w_1^* w_2^* \ldots w_k^*$}. Formally speaking, consider the language
$R \coloneqq \set{a^{\ell_1} a^{\ell_2} \ldots a^{\ell_k}}{\text{$(\ell_1, \ell_2, \ldots, \ell_k)$ is the leftmost representation of the string $w_1^{\ell_1} w_2^{\ell_2} \ldots w_k^{\ell_k} \in w_1^* w_2^* \ldots w_k^*$}}$ over the alphabet $\{a_1, a_2, \ldots, a_k\}$. Then, the language $R$ is regular. Moreover, the construction is effective: there exists an algorithm that constructs some DFA for the language $R$ if the number $k$ and the strings $w_1$, $w_2$, $\ldots$, $w_k$ are given in the input.
\end{oldtheorem}

By combining Theorem~\ref{general-bounded} with Theorem~\ref{greedy} we get the following result.
\begin{thm}\label{full-bounded}
A language $L \subseteq w_1^* w_2^* \ldots w_k^*$ is described by a GF(2)-grammar if and only if the language $L_{\ell} \coloneqq \set{a_1^{\ell_1} \ldots a_k^{\ell_k}}{\text{there exists a string $s \in L$ such that $(\ell_1, \ell_2, \ldots, \ell_k)$ is the leftmost representation of $s$}}$ is described by a GF(2)-grammar. Moreover, the construction is effective in both directions in the following sense:
\begin{enumerate}
\item there exists an algorithm that constructs some GF(2)-grammar for the (letter-bounded) language $L_{\ell}$ if the number $k$, the strings $w_1$, $w_2$, $\ldots$, $w_k$ and a GF(2)-grammar for the (bounded) language $L$ are given in the input;
\item there exists an algorithm that constructs some GF(2)-grammar for the (bounded) language $L$ if the number $k$, the strings $w_1$, $w_2$, $\ldots$, $w_k$ and a GF(2)-grammar for the (letter-bounded) language $L_{\ell}$ are given in the input.
\end{enumerate}
\end{thm}
\begin{proof} Essentially, the language $L_{\ell}$ is the language of the \emph{leftmost} representations of the strings from $L$. Consider the language $L_a \coloneqq \set{a_1^{\ell_1} a_2^{\ell_2} \ldots a_k^{\ell_k}}{w_1^{\ell_1} \ldots w_k^{\ell_k} \in L}$ of \emph{all} representations of the strings from the language $L$. Then, $L_{\ell} = L_a \cap R$, where $R \coloneqq \set{a^{\ell_1} a^{\ell_2} \ldots a^{\ell_k}}{\text{$(\ell_1, \ell_2, \ldots, \ell_k)$ is the leftmost representation of the string $w_1^{\ell_1} w_2^{\ell_2} \ldots w_k^{\ell_k} \in w_1^* w_2^* \ldots w_k^*$}}$ over the alphabet $\{a_1, a_2, \ldots, a_k\}$ is the language of the leftmost representations of all strings from \emph{the whole} $w_1^* w_2^* \ldots w_k^*$. 

Suppose that $L$ is described by a GF(2)-grammar. Then, by Theorem~\ref{general-bounded}, the language $L_a$ is described by a GF(2)-grammar as well. Moreover, by Theorem~\ref{greedy}, the language $R$ is regular. Hence, the language $L_a \cap R = L_{\ell}$ is also described by a GF(2)-grammar.

Conversely, suppose that $L_{\ell}$ is described by a GF(2)-grammar. For each $s \in L$, there exists a unique representation of $s$ in $L_{\ell}$ (exactly the leftmost representation). Consider the homomorphism $h$ that maps each character $a_i$ to the corresponding string $w_i$. Because $h$ is non-erasing, Lemma~\ref{lang-gf2-hom} implies that the language $h_{\oplus}(L_{\ell})$ is described by a GF(2)-grammar. However,
$h_{\oplus}(L_{\ell})$ is exactly the language $L$, because each string from $L$ has an exactly one preimage from $L_{\ell}$ with respect to $h$ (said preimage corresponds to the leftmost representation of $s$). Hence, $L$ is described by a GF(2)-grammar.

The constructions are effective, because every single step of the process is effective. Indeed, Theorems~\ref{general-bounded} and~\ref{greedy} are
effective because that is a part of their statement. Intersecting a GF(2)-grammar with a regular language given by an explicit DFA is effective.
Finally, going from $L_{\ell}$ to $h_{\oplus}(L_{\ell})$ is also effective, because the proof of Lemma~\ref{grammar-gf2-hom} is an
explicit construction.
\end{proof}

Essentially, Theorem~\ref{full-bounded} allows us to reduce the case of arbitrary bounded languages to the case of letter-bounded languages. The exact nature of
the leftmost representations is not too important here; the only important things here are that we have chosen a single ``canonical'' representation for each string from $w_1^* \ldots w_k^*$ and that the ``language of all canonical representations'' is regular. 

One interesting implication of Theorem~\ref{full-bounded} is that, for $k = 2$, all
subsets of $w_1^* w_2^*$ that can be described by a (possibly ambiguous) ordinary grammar can be described by a 
GF(2)-grammar as well. This follows from the result of Ginsburg and Ullian, that implies that no inherently ambiguous
subsets of $w_1^* w_2^*$~\cite{unamb-class} (that is, every subset of $w_1^* w_2^*$ that can be described by an ordinary grammar 
can be described by an unambiguous grammar as well). 

\begin{rem}
It is important to note that all results in this section explicitly use the fact that a bounded language $L$ is specified by a GF(2)-grammar and
the sequence $w_1$, $w_2$, $\ldots$, $w_k$ of the strings, such that $L$ is a subset of $w_1^* w_2^* \ldots w_k^*$. This is important:
all the results above are only effective when the strings $w_1$, $w_2$, $\ldots$, $w_k$ are explicitly given. Indeed, almost all steps in the proofs require the knowledge of the strings $w_1$, $w_2$, $\ldots$, $w_k$ to be effectively performed. 

However, when talking about the family of bounded languages described by ordinary grammars, it makes sense to just say ``a bounded language'' instead of exactly specifying $w_1$, $w_2$, $\ldots$, $w_k$. Indeed, as shown by Ginsburg and Spanier~\cite[Theorem 5.2]{bounded-original},
for any \emph{ordinary} grammar $G$, there is a procedure that either (correctly) says that $L(G)$ is not bounded, or returns some strings $w_1$, $w_2$, $\ldots$, $w_k$ such that $L(G) \subseteq w_1^* \ldots w_k^*$. 

The same cannot be said about GF(2)-grammars. Indeed,
it is unknown if the emptiness testing for a GF(2)-grammar is decidable. Consequently, if $G$ is a GF(2)-grammar over an alphabet $\Sigma$
and $a$ and $b$ are two new characters that are not in $\Sigma$, then the natural GF(2)-grammar for the language $L(G) \{a, b\}^*$ either describes the empty language (if $L(G)$ is empty) or a not bounded one (if $L(G)$ is not empty). Hence, there is no currently known procedure to even decide whether or not a given GF(2)-grammar describes a bounded language, meaning that knowing $w_1$, $w_2$, $\ldots$, $w_k$ beforehand is important.
\end{rem}

\section{Are the Converse Statements True?}\label{converse}

It would be interesting to know whether the converse statements to 
Theorems~\ref{Main_ab},~\ref{Main_abc} and~\ref{Main_many} are true.
Of course, the exact converse statement to Theorem~\ref{Main_ab} is wrong on the 
technicality that not all elements of $R_{a, b}$ are power series. However, if we restrict
ourselves to $R_{a, b} \cap \F[[a, b]]$, we get the following conjecture, which I believe in:
\begin{conjecture}\label{conj_ab} If $f \in R_{a, b} \cap \F[[a, b]]$, then the language 
$\Dual^{-1}(f) \subseteq a^* b^*$ can be described by a GF(2)-grammar.
\end{conjecture}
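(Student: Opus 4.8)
The plan is to prove the converse by showing that every $f \in R_{a,b} \cap \F[[a,b]]$ can be assembled, using only grammar-available operations, from pieces that are already known to be describable. The three operations at our disposal are: symmetric difference (it corresponds to $+$, and GF(2)-grammars are closed under it), left GF(2)-concatenation by a subset of $a^*$, and right GF(2)-concatenation by a subset of $b^*$ (both covered by Lemma~\ref{comm-concat}, and both staying inside $a^* b^*$). I will also use that a well-formed system of GF(2)-language equations has describable components, i.e. that grammars admit recursion.

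First I would dispose of the numerator. Writing $f = \frac{\sum_{i=1}^{n} A_i B_i}{p}$ with $A_i \in \A$, $B_i \in \B$, $p \in \poly(a,b)$, each $\Dual^{-1}(A_i) \subset a^*$ and $\Dual^{-1}(B_i) \subset b^*$ is describable by Theorem~\ref{unary-alphabet}, the product $A_i B_i$ is realised by $\Dual^{-1}(A_i) \odot \Dual^{-1}(B_i)$ via Lemma~\ref{comm-concat}, and the sum is a symmetric difference; hence $N \coloneqq \sum_i A_i B_i$ corresponds to a describable language $N' \subset a^* b^*$. Next, multiplication by a polynomial is itself a grammar operation: a single monomial $a^i b^j$ acts on a describable $F \subset a^* b^*$ as $a^i \odot F \odot b^j$, which stays inside $a^* b^*$ and, by two applications of Lemma~\ref{comm-concat}, satisfies $\Dual(a^i \odot F \odot b^j) = a^i b^j \Dual(F)$; taking the symmetric difference over the monomials of any polynomial $q$ thus realises multiplication by $q$ while never putting the letters out of order.

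The core of the construction is division by the denominator, which I can carry out cleanly whenever $p$ has a nonzero constant term, i.e. $p(0,0) = 1$. In that case write $p = 1 + p'$ with $p'$ having no constant term, so that $pf = N$ rearranges over $\F$ to $f = N + p' f$. I would then introduce a nonterminal $F$ with the GF(2)-rules encoding $F = N' \,\triangle\, \bigtriangleup_{(i,j)} (a^i \odot F \odot b^j)$, the symmetric difference running over the monomials $a^i b^j$ of $p'$ (and $N'$ folded in through its own start symbol). Since every such monomial has $i + j \geqslant 1$, each recursive rule strictly increases length, so the associated ordinary grammar gives every string finitely many parse trees and $F$ is well-formed; moreover $\Dual(F) = N + p'\Dual(F)$ has a unique power-series solution, as its coefficients are forced by induction on total degree, and that solution is $f$. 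Hence $\Dual^{-1}(f)$ is describable.

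The main obstacle is reducing the general case to $p(0,0) = 1$. Since $R_{a,b}$ is the field of fractions $\operatorname{Frac}(\mathcal{N})$ of the ring $\mathcal{N}$ of all finite sums $\sum A_i B_i$, and since $f$ is a genuine power series, $p$ divides $N$ inside $\F[[a,b]]$; factoring $p$ in the UFD $\poly(a,b)$ separates a unit part (constant term $1$) from a product of irreducibles vanishing at the origin, and it suffices to cancel the latter into the numerator while keeping it in $\mathcal{N}$. Cancellation by $a$ (and symmetrically by $b$) is clean: splitting each $A_i = A_i(0) + a\widehat{A_i}$ with $\widehat{A_i} \in \A$ gives $N = B^{\ast} + a\sum_i \widehat{A_i} B_i$ where $B^{\ast} = \sum_i A_i(0) B_i \in \B$, and $a \mid N$ forces $B^{\ast} = 0$, so $N/a \in \mathcal{N}$. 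The delicate point, and where I expect the real work to lie, is cancellation by a \emph{mixed} irreducible vanishing at the origin (such as $a + b$): one must show that $\mathcal{N}$ is saturated under division by such polynomials, i.e. that $\operatorname{Frac}(\mathcal{N}) \cap \F[[a,b]]$ is exactly the set of series of the form $N/p$ with $p(0,0) = 1$. I expect this saturation to require genuine use of the algebraicity, equivalently the $2$-automaticity, of the factors in $\A$ and $\B$ rather than formal manipulation alone, and it is the step I would attack last.
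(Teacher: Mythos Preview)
Your approach mirrors the paper's own partial progress precisely, but you should be aware that the statement you are trying to prove is stated in the paper as a \emph{conjecture}, not a theorem: the paper does not prove it. Your treatment of the case $p(0,0)=1$ is exactly the paper's Theorem~\ref{ab_part_conv} (with the same recursive rule $S_{\mathrm{new}} \to (\oplus_j a^{k_j} S_{\mathrm{new}} b^{\ell_j}) \oplus S$ and the same finiteness-of-parse-trees justification from $k_j+\ell_j>0$), so that part is correct and matches the paper verbatim.

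The gap you flag in your last paragraph---showing that any $f \in R_{a,b}\cap\F[[a,b]]$ can be rewritten with a denominator having constant term~$1$, i.e.\ that the ring $\mathcal{N}$ of finite sums $\sum A_iB_i$ is saturated in $\F[[a,b]]$ under division by irreducible polynomials vanishing at the origin---is precisely the paper's Conjecture~\ref{ab_algebraic}, namely $R_{a,b}^{int}=R_{a,b}\cap\F[[a,b]]$. The paper states explicitly that the author could neither prove nor disprove this, nor derive it from standard commutative algebra. Your cancellation argument for the pure factors $a$ and $b$ is fine, but the ``mixed irreducible'' case (your example $a+b$) is genuinely open, and nothing in your outline suggests a mechanism for it: saying that it will ``require genuine use of the algebraicity'' is an accurate diagnosis but not a step toward a proof. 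In short, you have reconstructed the paper's reduction of Conjecture~\ref{conj_ab} to Conjecture~\ref{ab_algebraic}, and you are stuck at the same place the paper is.
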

The following Theorem~\ref{ab_part_conv} is an evidence in favor of Conjecture~\ref{conj_ab}.
We will need the following definition to state Theorem~\ref{ab_part_conv}.
\begin{defn}\label{ab_int_def} Denote by $R_{a, b}^{int}$ the set of Laurent series 
from $\F((a, b))$ that can be represented as $\frac{\sum_{i=1}^n A_i B_i}{p}$, where $n \geqslant 0$, $A_i \in \A$ and $B_i \in \B$ for $i$ in range from $1$ to $n$, and
$p \in \poly(a, b)$ is a polynomial with constant term equal to $1$.
\end{defn}
\begin{rem} The only difference between the definitions of $R_{a,b}$ and $R_{a,b}^{int}$
is that the denominator $p$ of the fraction is required to be invertible as an element of $\F[[a, b]]
\supset \poly(a, b)$. In particular, $R_{a, b}^{int}$ is a subset of $\F[[a, b]]$. Also, by 
definition, $R_{a, b}^{int} \subseteq R_{a, b}$.
\end{rem}
\begin{thm}\label{ab_part_conv} If $f \in R_{a,b}^{int}$, then
$\Dual^{-1}(f)$ can be described by a GF(2)-grammar.
\end{thm}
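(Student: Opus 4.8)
The plan is to prove Theorem~\ref{ab_part_conv} by explicitly building a GF(2)-grammar whose described language has $\Dual$ equal to the given $f \in R_{a,b}^{int}$. The key observation is that $R_{a,b}^{int}$ is built from three kinds of ``atomic'' ingredients, each of which I can realize by a GF(2)-grammar: elements of $\A$ (corresponding to 2-automatic unary languages over $\{a\}$, describable by Theorem~\ref{unary-alphabet}), elements of $\B$ (the same over $\{b\}$), and the inverse $p^{-1}$ of a polynomial $p \in \poly(a,b)$ with constant term $1$. The whole of $f$ is then obtained from these atoms by the operations of addition and multiplication of power series, which correspond on the language side to symmetric difference ($\triangle$) and GF(2)-concatenation ($\odot$), respectively. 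So the strategy is: show that GF(2)-describable subsets of $a^*b^*$ are closed under the relevant operations, and that each atom is GF(2)-describable, and then assemble $f$.

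First I would handle the two ``easy'' classes of atoms. For $A \in \A$ and $B \in \B$, Theorem~\ref{unary-alphabet} directly gives GF(2)-grammars for $\Dual^{-1}(A) \subset a^*$ and $\Dual^{-1}(B) \subset b^*$. By Lemma~\ref{comm-concat}, the GF(2)-concatenation $\Dual^{-1}(A) \odot \Dual^{-1}(B)$ has $\Dual$-image exactly $A \cdot B$, and the product stays inside $a^*b^*$ because the letters appear in the correct order; so every single product $A_i B_i$ is GF(2)-describable as a subset of $a^*b^*$. Summing finitely many of these corresponds to taking symmetric difference, which is realized by combining grammars with a fresh start symbol using $\oplus$-rules; hence any numerator $\sum_{i=1}^n A_i B_i$ is GF(2)-describable. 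The crux that remains is the denominator.

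The hard part will be realizing $p^{-1}$ where $p \in \poly(a,b)$ has constant term $1$. Here the constant-term-one condition is exactly what makes $p$ invertible in $\F[[a,b]]$, so $p^{-1}$ is a genuine power series. Writing $p = 1 + q$ with $q$ having zero constant term, the formal identity $p^{-1} = \sum_{k \geqslant 0} q^k$ is a well-defined power series because $q^k$ is divisible by higher and higher total degree, so the sum is locally finite. I would express $p^{-1}$ as the solution of the language equation $X = \varepsilon \,\triangle\, (Q \odot X)$, where $Q = \Dual^{-1}(q)$: solving for $\Dual(X) = g$ gives $g = 1 + q g$, i.e.\ $g(1+q) = 1$, i.e.\ $g = p^{-1}$, and the zero constant term of $q$ guarantees that the induced ordinary grammar has only finitely many parse trees per word, so the GF(2)-grammar is well-formed. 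One must check $q$ itself is GF(2)-describable as a subset of $a^*b^*$: since $q \in \poly(a,b)$ is a finite sum of monomials $a^i b^j$, its language is finite, hence trivially GF(2)-describable, and it is a subset of $a^*b^*$ so Lemma~\ref{comm-concat} applies to the concatenation $Q \odot X$.

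Finally I would assemble everything: given $f = (\sum_{i=1}^n A_i B_i) \cdot p^{-1}$, take the grammar $G_{\mathrm{num}}$ for the numerator and the grammar $G_{\mathrm{den}}$ for $p^{-1}$, and form a new start symbol $S \to S_{\mathrm{num}} \odot S_{\mathrm{den}}$. Because $\Dual^{-1}(p^{-1}) \subset a^*b^*$ and the numerator is also in $a^*b^*$, their product is \emph{not} automatically a subset of $a^*b^*$ in general, so I must verify that the GF(2)-concatenation here still lands in $a^*b^*$ and that its $\Dual$-image is the product $f$. This is the main obstacle: unlike the clean split used for atoms, multiplying two arbitrary subsets of $a^*b^*$ need not respect letter order. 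The way around it is to keep the denominator factored through Lemma~\ref{comm-concat}-compatible pieces, or equivalently to absorb $p^{-1}$ into the recursion for each numerator term individually, so that at every $\odot$ step one factor is a subset of $a^*$ and the other a subset of $a^*b^*$, or one factor is a subset of $a^*b^*$ and the other a subset of $b^*$. Concretely I would set up a single combined system of language equations for the whole fraction (mirroring System~\eqref{SeriesSystem}) rather than multiplying finished languages, and argue by the same determinant/uniqueness reasoning as in Theorem~\ref{Main_ab} that its unique power-series solution has $\Dual$-image $f$, which completes the construction.
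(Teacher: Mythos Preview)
Your handling of the numerator is fine and matches the paper: realize each $A_i$ and $B_i$ via Theorem~\ref{unary-alphabet}, concatenate using Lemma~\ref{comm-concat}, and take $\oplus$ for the sum. The trouble is entirely in the denominator, and you correctly diagnose it yourself: neither the equation $X = \varepsilon \,\triangle\, (Q \odot X)$ nor the product $S_{\mathrm{num}} \odot S_{\mathrm{den}}$ keeps the letters in the right order, so Lemma~\ref{comm-concat} does not apply and the resulting language is not a subset of $a^*b^*$.

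Your proposed repair (``absorb $p^{-1}$ into the recursion so that at every $\odot$ one side is in $a^*$ or the other is in $b^*$'') is the right idea, but you stop just short of making it concrete, and the talk of ``mirroring System~\eqref{SeriesSystem}'' and ``determinant/uniqueness'' is the wrong direction: that machinery extracts a system from a grammar, whereas here you need to \emph{write down} a grammar. The paper's one-line construction does exactly what you are reaching for. Write the denominator as $p = 1 + \sum_{j=1}^d a^{k_j} b^{\ell_j}$ with $k_j + \ell_j > 0$, take a grammar with start symbol $S$ for the numerator $\sum_i A_i B_i$, and add a single new start symbol with the rule
\[
S_{\mathrm{new}} \;\to\; \Bigl(\bigoplus_{j=1}^d a^{k_j}\, S_{\mathrm{new}}\, b^{\ell_j}\Bigr) \oplus S.
\]
The point is that each monomial of $q = p-1$ is split into its $a$-part on the \emph{left} of the recursive call and its $b$-part on the \emph{right}, so every concatenation is of the form $a^* \odot (a^*b^*)$ or $(a^*b^*) \odot b^*$ and Lemma~\ref{comm-concat} applies throughout. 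Applying $\Dual$ to the language equation from Theorem~\ref{gf2_grammar_by_language_equations_theorem} gives $g = q\,g + \sum_i A_iB_i$, i.e.\ $g\,p = \sum_i A_iB_i$, so $g = f$; and $k_j + \ell_j > 0$ guarantees finitely many parse trees, so the grammar is well-formed. No Cramer's rule or uniqueness argument is needed.
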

\begin{proof}[Sketch of the proof] Suppose that $f = \left(\sum_{i=1}^n A_i B_i \right) / p$, as
in Definition~\ref{ab_int_def}, and $p = 1 + \sum_{j=1}^d a^{k_j} b^{\ell_j}$, where $k_j + \ell_j > 0$. Let $S$ be the starting symbol of some GF(2)-grammar
that describes $\Dual^{-1}(\sum_{i=1}^n A_i B_i)$ (such GF(2)-grammar exists by closure properties). Add a new starting symbol $S_{\mathrm{new}}$ and a new rule $S_{\mathrm{new}} \to (\oplus_{j=1}^d a^{k_j} S_{\mathrm{new}} b^{\ell_j}) \oplus S$ to the GF(2)-grammar.
The new GF(2)-grammar describes $\Dual^{-1}(f)$. The condition $k_j + \ell_j > 0$ is important, because it assures
that all strings still have only a finite number of parse trees.
\end{proof}

It is tempting to say that the converse to Theorem~\ref{ab_part_conv} is true as well. In particular, it may be tempting to 
conjecture that $R_{a,b}^{int} = R_{a,b} \cap \F[[a,b]]$. If that were true, the converse to Theorem~\ref{ab_part_conv} would 
immediately follow. Unfortunately, neither is true.

Let us disprove that $R_{a, b}^{int} = R_{a, b} \cap \F[[a, b]]$ by showing an element of $R_{a, b} \cap \F[[a, b]]$ that is not 
an element of $R_{a, b}^{int}$. Define $f = \sum\limits_{n=1}^{+\infty} a^{2^n}$ and $g = \sum\limits_{n=1}^{+\infty} b^{2^n}$.
Then, $\dfrac{f + g}{a + b} = \sum\limits_{n = 1}^{+\infty} \left( \sum\limits_{m=0}^{2^n - 1} a^m b^{2^n - 1 - m} \right)$, because,
in characteristic $2$, it is true that $\dfrac{a^k + b^k}{a + b} = a^{k-1} + a^{k-2} b + \ldots + b^{k - 2} a + b^{k - 1}$ for any $k \geqslant 1$.
Hence, $h \coloneqq \dfrac{f + g}{a + b}$ is an element of $R_{a, b} \cap \F[[a, b]]$.

\begin{lm} The formal power series $h$ is not an element of $R_{a, b}^{int}$.
\end{lm}
\begin{proof} Proof by contradiction. Assume that $h = \dfrac{\sum_{i=1}^d A_i B_i}{1 + ap + bq}$ for some integer $d$, $A_i \in \A$,
$B_i \in \B$ and $p, q \in \F[a, b]$. Then, $\sum_{i=1}^d A_i B_i = (1 + ap + bq) \cdot \sum\limits_{n = 1}^{+\infty} \left( \sum\limits_{m=0}^{2^n - 1} a^m b^{2^n - 1 - m} \right)$. Consider a large enough $n$ and the coefficients of the right-hand side on the ``diagonal''
$a^x b^y$ with $x + y = 2^n - 1$. They are all $1$, while the coefficients on the neighbouring ``diagonal'' $a^x b^y$ with $x + y = 2^n - 2$ are all $0$. Informally speaking, similarly to Theorem~\ref{theorem_anbncn}, this is does not work well with the ``blocky'' structure of $\sum_{i=1}^d A_i B_i$. 

Formally, let us pick a number $r$ such that the degree of $1 + ap + bq$ does not exceed $2^{r-1}$. Then, consider the power
series on the left-hand side, but zero out all the coefficients before $a^x b^y$, such that $x + y$ does not have remainder $2^r - 1$ modulo $2^r$. Let us denote such series by $h'$.
In the grammar terms, it is equivalent to intersecting the the language with the regular language of strings with length that is 
$2^r - 1$ modulo $2^r$. Then, the right-hand side ``blows up'' in size, but retains its form. Indeed, for each $i$ from $1$ to $d$ and each $s$ from $0$ to $2^r - 1$ let us define $A'_{i, s}$ which is exactly the sum of all such $a^j$, that the coefficient before $a^j$ in $A_i$ is $1$
and the remainder of $j$ modulo $2^r$ is $s$. By Christol's theorem, all $A'_{i, s}$ are elements of $\A$. Define $B'_{i, s} \in \B$ similarly. 
Then, $\sum\limits_{i=1}^d A_i B_i = \sum\limits_{i=1}^d \sum\limits_{s = 0}^{2^r - 1} \sum\limits_{t = 0}^{2^r - 1} A'_{i, s} B'_{i, t}$
and the result of only considering the ``correct'' coefficients $a^x b^y$ is the sum $\sum\limits_{i=1}^d \sum\limits_{s=0}^{2^r - 1} A'_{i, s} B'_{i, 2^r - 1 - s}$. 

Now, let us draw a $2^n \times 2^n$ matrix $C$, with its its rows being numbered from $0$ to $2^n - 1$ and its columns also
being numbered from $0$ to $2^n - 1$. Let the entry $C_{x, y}$ be the coefficient of $h'$ in $a^x b^y$. Then, $C_{x, y} = 1$
for $x + y = 2^n - 1$. But, even more importantly, $C_{x, y} = 0$ for $x + y > 2^n - 1$. Indeed, the ``diagonals'' with $x + y$ being at most $2^{n-1} - 1$ do not affect the entries in the lower-right triangle due to the high distance, the ``diagonal'' $x + y = 2^n - 1$ does not affect the triangle because the of the requirement on remainders (as long as $n \geqslant r$) and the ``diagonals'' with $x + y \geqslant 2^{n + 1} - 1$ do not affect the triangle, because, even for the lower-right corner of the matrix, the sum is $2 \cdot (2^{n} - 1) < 2^{n+1} - 1$. Hence, the matrix $C$ is of full rank $2^n$. 
On the other hand, the matrix $C$ also represents the power series $\sum\limits_{i=1}^d \sum\limits_{s=0}^{2^r - 1} A'_{i, s} B'_{i, 2^r - 1 - s}$. Due to its ``blocky structure'', each summand in this sum is represented by a matrix of rank at most $1$. Hence, the matrix that corresponds to the whole sum has the rank at most $d \cdot 2^r$. By picking a large enough $n$, we arrive to a contradiction.
\end{proof}

Now, let us show that the converse of Theorem~\ref{ab_part_conv} is also not true.
\begin{lm} The language $\Dual^{-1}(h)$ can be described by a GF(2)-grammar.
\end{lm}
\begin{proof} It is easy to see that $f^2 + f + a = 0$ and $g^2 + g + b = 0$. Hence, $(f + g)^2 = f^2 + g^2 = (f + a) + (g + b) = (f + g) + (a + b)$. Therefore, $\dfrac{f + g}{a + b} = \dfrac{1}{f + g + 1}$. Indeed, $(f + g)(f + g + 1) = (f + g)^2 + (f + g) = ((f + g) + (a + b)) + (f + g) = a + b$. Hence, the grammar $S \to AS \oplus SB$ describes the language $\Dual^{-1}(h)$, where $A$ is a nonterminal that describes the language $\set{a^{2^n}}{n \geqslant 1}$ and $B$ is a nonterminal that describes the language $\set{b^{2^n}}{n \geqslant 1}$. Indeed, the grammar corresponds to the equation $\Dual(L(S)) = f \cdot \Dual(L(S)) + \Dual(L(S)) \cdot g$, or $\Dual (L(S)) = \dfrac{1}{f + g + 1} = h$.

If the algebraic tricks do not convince you, it is possible to verify the GF(2)-grammar by directly computing the number of parse trees for each $a^x b^y$ as well. However, it is much more difficult.
\end{proof}

In some sense, Theorem~\ref{ab_part_conv} shows that the converse to Theorem~\ref{Main_ab} is ``almost correct''.
On the other hand, the converse to Theorem~\ref{Main_many} seems to be very from being true, because it fails to take ``overlapping requirements'' into account. Indeed, the language $\set{a^n b^m c^n d^m}{n, m \geqslant 0}$
is conjectured to not be described by any GF(2)-grammar. However,
said language does not violate the conclusion of Theorem~\ref{Main_many},
because $\sum\limits_{n=0}^{+\infty} \sum\limits_{m=0}^{+\infty} a^n b^m c^n d^m = \dfrac{1}{(1+ab)(1+cd)}$.

\section{Conclusion}\label{section_conclusion}

Let us make some concluding remarks and discuss some possible future developments.

Firstly, note that it took us roughly the same effort to prove the inherent ambiguity of $\set{a^n b^m c^{\ell}}{n = m \text{ or } m = \ell}$
and $\set{a^n b^m c^{\ell}}{n \neq m \text{ or } m \neq \ell}$, despite the former being a textbook example of inherently ambiguous language
and the latter not being known to be inherently ambiguous before.
Intuitively, it is very difficult to capture weak conditions like inequality using Ogden's lemma, while our approach can replace inequality with
a strong condition (equality) by taking the complement. 

Secondly, the proofs of Theorems~\ref{Main_ab} and~\ref{Main_abc} 
start similarly to the reasoning Ginsburg and Spanier used to characterize
bounded languages described by ordinary grammars~\cite{bounded-original, bounded-semi-linear}, 
but diverge after taking some steps. This is not surprising; ordinary grammars
have good monotonicity properties (a string needs only one parse tree to be in the language), 
but bad algebraic properties (solving systems of language equations is much harder than solving
systems of linear equations). In GF(2)-grammars, it is the other way around: there are no
good monotonicity properties, but algebraic properties are quite remarkable. 

\paragraph{Most recent related work}
Since the publication of the conference version of this paper~\cite{Makarov_DLT},
Koechlin~\cite{koechlin} presented a different proof of the inherent ambiguity
of the language $L_2 = \set{a^n b^m c^{\ell}}{n \neq m \text{ or } m \neq \ell}$
by giving an algebraic characterization of bounded
languages described by unambiguous grammars directly, without going through GF(2)-grammars first.
Essentially, he directly transforms Ginsburg's and Ullian's result~\cite[Theorems 5.1 and 6.1]{unamb-class} about the structure of bounded
languages described by an unambiguous grammar into a very simple algebraic description of the underlying semlinear set. Therefore, his method avoids all the issues mentioned in the introduction (the ``density'' of $L_2$ and the fact that $L_2$ has an algebraic generating function). 
I recommend giving it a read, if you liked my paper. I think that it is very insightful and 
enlightening.

\paragraph{Future research}
Perhaps, our methods could be used to make some progress on the equivalence problem for unambiguous grammars.
Indeed, the equivalence problem for unambiguous grammars is closely related to the emptiness problem for GF(2)-grammars.
If it is decidable, whether GF(2)-grammar describes an empty language or not, then the equivalence of unambiguous grammars is decidable as well.
If it is not, the proof will most probably shed some light on the case of unambiguous grammars anyway.
However, resolving the emptiness problem for GF(2)-grammars in one way or another still seems to be out of reach.

Understanding how our methods relate to the analytic methods of Flajolet~\cite{flajolet}, is another interesting question.
One can see Theorem~\ref{unary-alphabet} as 
an alternative formulation of Christol's theorem~\cite{christol} for $\F$ specifically,
that involves GF(2)-grammars instead of 2-automatic sequences on the ``combinatorial side''.
Then, Christol's theorem can be seen as a finite field analogue of 
Chomsky-Schutzenberger enumeration theorem,
because both relate counting properties of different grammar families to algebraic power series
over fields by ``remembering'' only the length of the string, but nothing else:
\begin{oldtheorem} [Chomsky-Schutzenberger enumeration theorem~\cite{unambiguous-algebraic}] 
\label{unamb-gen-func}
If $L$ is a language
described by an unambiguous grammar, and $a_k$ is the number of strings of length
$k$ in $L$, then the power series $\sum\limits_{k=0}^{+\infty} a_k x^k$ is algebraic
over $\Q[x]$.
\end{oldtheorem}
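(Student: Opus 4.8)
The plan is to pass from the grammar to generating functions and to exploit that, for an unambiguous grammar, counting words coincides with counting parse trees. First I would take an unambiguous grammar $G = (\Sigma, N, R, S)$ for $L$ and, without loss of generality, put it in Chomsky normal form, so that every rule has the shape $A \to BC$ or $A \to a$; if $\epsilon \in L$, I would record this separately as an additive constant and work with $L \setminus \{\epsilon\}$. For each nonterminal $A$ define the power series $y_A = \sum_{w \in L(A)} x^{|w|}$, counting each word of $L(A)$ once. The role of unambiguity is exactly that the yield map from parse trees rooted at $A$ to words of $L(A)$ is a bijection; hence $y_A$ equals the parse-tree generating function $\sum_{t} x^{|\mathrm{yield}(t)|}$, where $t$ ranges over parse trees with root $A$.

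Next I would write down the polynomial system satisfied by the parse-tree generating functions. Decomposing a parse tree at its root, for a grammar in Chomsky normal form one obtains
\begin{equation*}
 y_A = \sum_{(A \to BC) \in R} y_B y_C + \sum_{(A \to a) \in R} x
\end{equation*}
for every $A \in N$. This identity merely counts parse trees and holds for any grammar; ambiguity would only make the resulting series overcount words. Unambiguity is precisely what lets me replace ``parse trees'' by ``words'' and conclude that the coefficient of $x^k$ in the distinguished series $y_S$ equals $a_k$. Thus it remains to show that the $S$-component of the power-series solution of this polynomial system is algebraic over $\Q(x)$ (equivalently, over $\Q[x]$).

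For the algebraicity step I would eliminate the auxiliary variables $y_A$ with $A \neq S$ from the system $\{y_A = P_A(x, \mathbf{y})\}_{A \in N}$, where each $P_A \in \N[x, \mathbf{y}]$, by iterated resultants, producing a polynomial $Q(x, y_S)$ with $Q(x, y_S(x)) = 0$. The whole argument hinges on this elimination yielding a \emph{nonzero} $Q$, that is, on the solution being a nondegenerate point of the system. Here I would use the same observation that drives the proof of Theorem~\ref{Main_ab}: because a Chomsky-normal-form grammar is proper (no $\epsilon$-rules and no unit rules among $N$), every partial derivative $\partial P_A / \partial y_B$ is a power series of positive order, so the Jacobian $I - \left( \partial P_A / \partial y_B \right)_{A, B}$ has constant term $\det I = 1$ and is therefore invertible over $\Q[[x]]$. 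Exactly as the computation $h(\det(A + B + I)) = 1$ guaranteed a unique solution in Section~\ref{section_ab}, this invertible Jacobian makes the solution a simple point, which is what underlies the nondegeneracy needed for the successive resultants not to collapse to zero, delivering a genuine algebraic relation for $y_S$.

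The main obstacle is this last, purely algebraic step: turning the polynomial system into a single nonvanishing annihilating polynomial for $y_S$. The cleanest way to secure it, and the one I would pursue, is the invertible-Jacobian argument above, which is the generating-function analogue of the determinant trick already used in the main body; an alternative, more in the spirit of the field-extension approach of Section~\ref{section_ab}, is to note that the algebraic power series over $\Q(x)$ form a field and that properness lets one solve the system by a Newton-type iteration converging in the $x$-adic topology, each approximation staying algebraic. Either route reduces the theorem to standard facts about algebraic power series once the nondegeneracy is in hand.
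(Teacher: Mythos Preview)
The paper does not prove this theorem at all: it is stated in the conclusion as the classical Chomsky--Sch\"utzenberger enumeration theorem, with a citation, purely to draw an analogy with Christol's theorem. There is no ``paper's own proof'' to compare against.

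That said, your sketch is the standard argument and is essentially correct along its main line: Chomsky normal form, the polynomial system for the generating functions, unambiguity identifying word counts with parse-tree counts, and elimination by resultants with the Jacobian being invertible (constant term equal to $1$) to guarantee nondegeneracy. Your parallel with the determinant computation $h(\det(A+B+I))=1$ in the proof of Theorem~\ref{Main_ab} is apt, since both are instances of the same ``proper system has invertible Jacobian at the origin'' phenomenon.

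One caution about your alternative route at the very end: the sentence ``solve the system by a Newton-type iteration converging in the $x$-adic topology, each approximation staying algebraic'' does not by itself yield algebraicity of the limit. A sequence of algebraic power series can perfectly well converge $x$-adically to a transcendental series unless you control the degrees uniformly. The field-extension viewpoint does work, but the correct formulation is that the ring of power series algebraic over $\Q(x)$ is closed under taking components of solutions of polynomial systems with nonsingular Jacobian (an implicit-function-theorem statement), not that algebraicity is preserved under $x$-adic limits. Your primary resultant argument already handles this cleanly, so I would drop or rephrase the alternative.
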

\begin{oldtheorem}[Christol's theorem for $\F$~\cite{christol, grammars-gf2}; usually not stated this way]
\label{christol-as-grammars}
If $L$ is a language described by a GF(2)-grammar,
and $a_k$ is the number of strings of length $k$ in $L$,
then the power series $\sum\limits_{k=0}^{+\infty} (a_k \bmod 2) \cdot x^k$ is algebraic 
over $\F[x]$. 
\end{oldtheorem}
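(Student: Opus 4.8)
The plan is to collapse the whole alphabet to a single letter and then invoke the two black-box results already available: the classification of unary GF(2)-languages (Theorem~\ref{unary-alphabet}) and Christol's theorem. Let $G = (\Sigma, N, R, S)$ be a GF(2)-grammar describing $L$, and let $\widehat{G}$ be the ordinary grammar associated to it in the definition. For a word $w$, write $t(w)$ for the number of parse trees of $w$ as $S$ in $\widehat{G}$, so that by definition $w \in L$ exactly when $t(w)$ is odd, and hence $a_k = |\set{w}{|w| = k, \ t(w) \text{ odd}}|$. First I would form a GF(2)-grammar $\phi(G)$ over the unary alphabet $\{a\}$ by replacing every terminal appearing in a right-hand side of a rule with the single symbol $a$, leaving the nonterminals and the shape of every rule untouched.

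The core of the argument is a parse-tree identity for $\phi(G)$. Collapsing terminals does not alter the shape of any derivation tree, it only relabels the leaves; thus the parse trees of $a^k$ in $\widehat{\phi(G)}$ correspond exactly to the parse trees of $\widehat{G}$ whose yield has length $k$, and their number equals $\sum_{|w| = k} t(w)$. This sum is finite for every $k$, since there are finitely many words of length $k$ and $G$ is well-formed, so $\phi(G)$ is again a well-formed GF(2)-grammar. Reducing modulo two, $\sum_{|w| = k} t(w) \equiv |\set{w}{|w| = k, \ t(w) \text{ odd}}| = a_k \pmod 2$, so $a^k$ has an odd number of parse trees in $\widehat{\phi(G)}$ precisely when $a_k$ is odd. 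In other words, $L_{\phi(G)}(S) = \set{a^k}{a_k \text{ is odd}}$.

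From here the conclusion follows mechanically. By Theorem~\ref{unary-alphabet}, the unary language $\set{a^k}{a_k \text{ odd}}$ is $2$-automatic, that is, the set $\set{k \in \N_0}{a_k \text{ odd}}$ is $2$-automatic. Setting $f_k = a_k \bmod 2 \in \F$, this set is exactly $\set{k}{f_k = 1}$, so Christol's theorem for GF(2) gives that $\sum_{k=0}^{\infty} (a_k \bmod 2)\, x^k$ is algebraic over $\F[x]$, as required.

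The one genuinely delicate point — and the place where a careful write-up must do some work — is the multiplicity bookkeeping hidden in the phrase ``their number equals $\sum_{|w|=k} t(w)$''. If two distinct rules of $G$ project to the same unary rule (for instance $A \to a$ and $A \to b$ both becoming $A \to a$), then naively collapsing $R$ to a set would merge them and destroy the count. One must keep projected rules distinct with their multiplicities, which is harmless because coincident rules simply cancel in pairs under the GF(2) ($\oplus$) semantics; concretely one can route each original terminal through its own fresh nonterminal before collapsing, so that $\widehat{\phi(G)}$ counts trees with the correct multiplicity while $R$ remains an honest set. Once this is arranged, the parity computation of the second paragraph is exact, and the rest is just citation.
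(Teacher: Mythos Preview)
Your argument is correct and matches what the paper has in mind. The paper does not actually supply a proof of this theorem; it is stated with citations as a reformulation of Christol's theorem via Theorem~\ref{unary-alphabet}, with the surrounding text explaining that both this statement and the Chomsky--Sch\"utzenberger theorem arise by ``remembering only the length of the word, but nothing else.'' Your collapse-to-unary homomorphism is precisely that operation, and your parity bookkeeping $\sum_{|w|=k} t(w) \equiv a_k \pmod 2$ is the right computation.

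One small remark on presentation: the sentence ``coincident rules simply cancel in pairs under the GF(2) ($\oplus$) semantics'' is a bit misleading, since cancellation is exactly what you are trying to \emph{avoid} --- you need the projected grammar to retain both parse trees, not to lose them. Your actual fix (routing each original terminal through its own fresh nonterminal so that distinct rules remain distinct after collapsing) is the correct one; I would just drop the ``harmless because they cancel'' clause and state the fresh-nonterminal construction directly.
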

\begin{rem}
Technically speaking, Theorem~\ref{christol-as-grammars} is much weaker than 
Christol's theorem, because it is stated in only one direction (and the easier one to boot). I wanted to highlight the similarity between Theorems~\ref{unamb-gen-func} and~\ref{christol-as-grammars}, so I intentionally avoided stating the converse implication in Theorem~\ref{christol-as-grammars}.
\end{rem}

This similarity gives us some hope that our methods can be at least partially transferred to the analytic setting. Moreover, a lot (though not all) of the arguments used in our work can be modified to work over an arbitrary field.

\section*{Acknowledgements} This work was performed at the Saint Petersburg Leonhard Euler International Mathematical Institute and supported by the Ministry of Science and Higher Education of the Russian Federation (agreement no. 075-15-2022-287)

\newpage
\begin{appendices}
\section{What is This Part of the Paper about?}\label{alg_intro}
In Sections~\ref{alg_intro}--\ref{alg_expr_abc} I present a longer, but a more elementary way to prove Theorems~\ref{Main_ab_old} and~\ref{Main_ab_oldc},
through the use of \emph{algebraic expressions}. In the end, it is no surprise that we will end up essentially reproving many of the algebraic statements I use in the 
main body of the paper, but in our particular special case. 

So, was it all in vain? I believe that the answer is ``No''. In a sense, this part of the paper ``demystifies'' its main body, because it more closely follows my
original pattern of thinking. If you think that the new argument is too ``magical'', then, probably, this part of the paper is for you. What follows is, at least
in my opinion, a very intuitive line of thinking. And in the end, we will be just one simple, but brilliant observation away from the much simpler final version
of the argument.

The credit for this observation goes to an anonymous reviewer from MFCS 2020 conference. In short, and very paraphrased, 
they said ``You are essentially proving that
$\dfrac{\sum \A\B}{\poly(a, b)}$ describes a field, aren't you? You can prove this in two sentences by using standard results about field extensions''.
When pointed out, it sounds very simple, but, funnily enough, I have never considered thinking about the equivalence between algebraic expressions 
$\dfrac{\sum \A\B}{\sum \A\B}$ and $\dfrac{\sum \A\B}{\poly(a, b)}$ in such a way before!

So, in the end, you can see the following part of the paper as both an explanation for more ``magical'' part of the final argument, and as a story
with a ``plot twist'' that almost everything I did was unnecessary. Personally, I think that knowing the twist only makes it more interesting to observe.

On the other hand, this part of the paper has absolutely no new results that are not proved in
the main body of the text. Hence, if you neither find abstract algebraic arguments too ``magical'' nor are interested in the ``history'' of the main results, you may safely skip it.

\section{Subsets of \texorpdfstring{$a^*b^*$}{a*b*} through Algebraic Expressions}\label{alg_expr_ab}
\subsection{Algebraic Expressions}

Let us define the meaning of words ``Laurent series $f$ match
algebraic expression $F$''. 

Informally, algebraic expressions are some formulas of symbols $\A$, $\B$, $\poly(a,b)$
and $\rat(a,b)$ that use additions, multiplications, divisions and ``finite summation'' operator,
denoted by $\sum$. Here, $\poly(a, b)$ denotes the set $\F[a, b]$ of polynomials in variables
$a$ and $b$. Similarly, $\rat(a, b)$ denotes the set $\F(a, b)$ of rational functions in variables
$a$ and $b$.

Several examples of \emph{algebraic expressions}: 
$\rat(a, b)$, $\B$, $\sum \A$, $\dfrac{\sum \A\B}{\sum \A\B}$, $\sum \dfrac{\sum \A\B}{\rat(a,b)}$, 
$\sum \A \B \rat(a, b)$.

\begin{defn}
Laurent series $f$ \emph{match} algebraic expression $F$ 
if and only if $f$ can be obtained from 
$F$ by substituting elements of $\poly(a, b)$, $\rat(a, b)$, $\A$ and $\B$ 
for the corresponding symbols 
(not necessarily the same elements for the same symbols).
The construct $\sum G$ 
corresponds to a finite, possibly empty, sum of Laurent series, with every summand matching $G$.
\end{defn}

\begin{ex} The set of Laurent series matching $\sum \A\B$ 
is exactly the set of all power series 
representable as $A_1 B_1 + \ldots + A_n B_n$, 
where $n$ is any nonnegative integer, 
and $A_i \in \A$, $B_i \in \B$ for every $i$ from $1$ to $n$ inclusive.
\end{ex}

\begin{ex} All rational functions in variables $a$ and $b$, and only them, match
algebraic expressions $\rat(a, b)$ and $\dfrac{\poly(a, b)}{\poly(a, b)}$. 
\end{ex}

\begin{ex} Laurent series match $\sum \dfrac{\sum \A\B}{\rat(a, b)}$ if and only if 
it can be represented as a finite sum, where each summand can be represented as 
$\frac{A_1 B_1 + \ldots + A_n B_n}{p}$,
for some nonnegative integer $n$ 
and some $A_1$, \ldots, $A_n \in \A$, $B_1$, \ldots, $B_n \in \B$, $p \in \rat(a, b)$, 
with an additional condition $p \neq 0$. 
The last condition is necessary because otherwise the denominator would be equal to zero 
and the fraction would not make sense.
\end{ex} 

\begin{ex}
Laurent series
$f := \sum_{n = 0}^{+\infty} \sum_{m = 0}^{+\infty} (a^{2^n} b^{2^m - 10} + a^{2^n + 10} b^{2^m + 15})$
match $\dfrac{\sum \A\B}{\poly(a, b)}$,
because $f = \dfrac{A_1 B_1 + A_2 B_2}{b^{10}}$, where $b^{10} \in \poly(a, b)$,
$A_1 = \sum_{n=0}^{+\infty} a^{2^n} \in \A$, $B_1 = \sum_{m=0}^{+\infty} b^{2^m} \in \B$,
$A_2 = \sum_{n=0}^{+\infty} a^{2^n + 10} \in \A$, $B_2 = \sum_{m=0}^{+\infty} b^{2^m + 25} \in \B$.
\end{ex}

\begin{defn} Algebraic expressions $F$ and $G$ \emph{are equivalent} (denoted by $F = G$), 
if they define the same subset of the whole field $\F((a, b))$ of Laurent series in variables
$a$ and $b$.
\end{defn}

Some equivalencies follow directly from definitions and properties of classes 
$\poly(a, b)$, $\rat(a, b)$, $\A$ and $\B$.
For example, $\sum \sum \A \B = \sum \A \B$, 
aforementioned $\rat(a, b) = \dfrac{\poly(a, b)}{\poly(a, b)}$, $\sum \A \A = \A$ and 
$\dfrac{\sum \A \B}{\sum \A \B} = \sum \dfrac{\A\B}{\sum \A\B}$. 

On the other hand, some equivalences are not so trivial, like the equivalence 
$\dfrac{\sum \A \B}{\sum \A \B} = \dfrac{\sum \A \B}{\poly(a, b)}$, which I shall establish later.

\subsection{Switching to the algebraic track}

The purpose of this section is to prove the following intermediate result:

\begin{lm}\label{Translation} Assume that language $K \subseteq a^*  b^*$ is described by a GF(2)-grammar.
Then, the corresponding power series $\Dual(K)$ match $\dfrac{\sum \A\B}{\sum \A\B}$.
\end{lm}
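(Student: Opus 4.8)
The plan is to follow the proof of Theorem~\ref{Main_ab} almost verbatim up to the point where it invokes the field structure of $R_{a,b}$, and then to replace that invocation by a direct bookkeeping argument recording which algebraic expression the entries of the linear system match. First I would put the grammar into Chomsky normal form, discard the empty word, and formally intersect with the DFA for $a^* b^*$, splitting each nonterminal $C$ into $C_{a\to a}$, $C_{a\to b}$, $C_{b\to b}$ exactly as before. The unary languages $L(C_{a\to a})$ and $L(C_{b\to b})$ are $2$-automatic, so by Christol's theorem $\Left(C) = \Dual(L(C_{a\to a})) \in \A$ and $\Right(C) = \Dual(L(C_{b\to b})) \in \B$. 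This yields the same linear system $(A+B+I)x = f$ over the variables $\Center(C) = \Dual(L(C_{a\to b}))$, where by construction each entry of $A$ matches $\sum \A$, each entry of $B$ matches $\sum \B$, the entries of $I$ lie in $\F$, and the entries of $f$ lie in $\{0, b\}$.

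Next I would reuse the constant-term homomorphism argument verbatim to conclude $\det(A+B+I) \neq 0$, so that Cramer's rule applies inside $\F((a,b))$ and expresses each $\Center(C)$ as a ratio of two determinants. The heart of the matter is then a purely combinatorial observation: over $\F$ a determinant is simply the sign-free sum over permutations of products of $n$ matrix entries, one from each row and column. Expanding each entry as its $\sum \A$ summand plus its $\sum \B$ summand and distributing, every resulting monomial is a product of finitely many elements of $\A$ and finitely many elements of $\B$. Since $\A$ and $\B$ are each closed under multiplication (the algebraic power series in a single variable form a ring), such a product collapses to a single element of $\A$ times a single element of $\B$, hence matches $\A\B$; summing over all permutations and all distributions shows that $\det(A+B+I)$, and likewise every cofactor determinant, matches $\sum \A\B$. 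Therefore each $\Center(C)$, being a quotient of two expressions matching $\sum \A\B$, matches $\dfrac{\sum \A\B}{\sum \A\B}$.

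Finally I would assemble $\Dual(K) = \Left(S) + \Center(S)$ and check that the class of series matching $\dfrac{\sum \A\B}{\sum \A\B}$ is closed under the two operations needed: an element $A \in \A$ matches it (write $A = \tfrac{A \cdot 1}{1}$ with $1 \in \A\B$), and a sum of two such ratios $\tfrac{P}{Q} + \tfrac{R}{S} = \tfrac{PS+RQ}{QS}$ again matches it, because products and sums of expressions matching $\sum \A\B$ still match $\sum \A\B$ by the same ring-closure of $\A$ and $\B$. Combining these gives that $\Dual(K)$ matches $\dfrac{\sum \A\B}{\sum \A\B}$, as required.

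I expect the only genuinely delicate point to be the closure of $\A$ and $\B$ under multiplication, which is what secretly carries the algebraicity hypothesis; everything else is formal manipulation of the determinant expansion over $\F$. It is worth noting that this argument certifies membership in $\dfrac{\sum \A\B}{\sum \A\B}$ without ever claiming the denominator can be taken in $\poly(a,b)$: that stronger normalization, morally equivalent to Lemma~\ref{lemma_rab}, is precisely the nontrivial equivalence $\dfrac{\sum \A\B}{\sum \A\B} = \dfrac{\sum \A\B}{\poly(a,b)}$ that the appendix defers to a separate argument.
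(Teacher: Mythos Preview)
Your proposal is correct and follows essentially the same route as the paper's own proof of this lemma in the appendix: Chomsky normal form, intersection with the two-state DFA for $a^*b^*$, the linear system $(A+B+I)x=f$, the constant-term argument for $\det(A+B+I)\neq 0$, Cramer's rule, and the permutation expansion of the determinant together with the closure of $\A$ and $\B$ under multiplication to see that both numerator and denominator match $\sum\A\B$. The only cosmetic difference is that the paper phrases the nonvanishing of the determinant as invertibility (constant term equal to $1$) rather than via the homomorphism $h$, but the content is identical.
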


\begin{proof} Without loss of generality, the GF(2)-grammar 
that describes $K$ is in the Chomsky normal form~\cite[Theorem 5]{gf2}.  
Moreover, let us assume that $K$ does not contain the empty string.

The language $a^* b^*$ is recognized by the following 
incomplete deterministic finite automaton $M$:
$M$ has two states $q_a$ and $q_b$, both accepting, and its transition function is
$\delta(q_a, a) = q_a, \delta(q_a, b) = q_b, \delta(q_b, b) = q_b$.

Let us formally intersect the GF(2)-grammar $G$ with a regular language 
$a^* b^*$, recognized by the automaton $M$, 
using the construction of Bar-Hillel et al.~\cite{BarhillelPerlesShamir}
(the construction of the intersection of an ordinary grammar with a regular expression by Bar-Hillel 
et al.~\cite{BarhillelPerlesShamir} can be easily adapted to the case of GF(2)-grammars~\cite [Section 6]{grammars-gf2}).
The language described by the GF(2)-grammar will not change, 
because it was already a subset of $a^* b^*$.

The grammar will change considerably, however.
Every nonterminal $C$ of the original grammar 
splits into three nonterminals: $C_{a \to a}, C_{a \to b}, C_{b \to b}$. 
Also a new starting nonterminal $S'$ apears.

Every ``normal'' rule $C \to DE$ splits into four rules:
$C_{a \to a} \to D_{a \to a} E_{a \to a}$, $C_{a \to b} \to D_{a \to a} E_{a \to b}$, 
$C_{a \to b} \to D_{a \to b} E_{b \to b}$ and $C_{b \to b} \to D_{b \to b} E_{b \to b}$. 

The following happens with ``final'' rules: $C \to b$ turns into two rules $C_{a \to b} \to b$ and $C_{b \to b} \to b$, and $C \to a$ turns into one rule $C_{a \to a} \to a$. 
Finally, two more rules appear: $S' \to S_{a \to a}$ and $S' \to S_{a \to b}$.

What do the nonterminals of the new GF(2)-grammar correspond to? The state $C_{a \to a}$ corresponds to the strings $w \in \{a, b\}^*$ that are derived from the nonterminal $C$ 
of the original GF(2)-grammar and make $M$ go from the state $q_a$ to itself.
Formally speaking, $w \in L(C_{a \to a})$ if and only if $w \in L(C)$ and $\delta(q_a, w) = q_a$.
Similarly, $w \in L(C_{a \to b})$ if and only if $w \in L(C)$ and $\delta(q_a, w) = q_b$.
Finally, $w \in L(C_{b \to b})$ if and only if $w \in L(C)$ and $\delta(q_b, w) = q_b$.

By looking more closely on the transitions of $M$, we can see that
$\delta(q_a, w) = q_a$ if and only if $w$ consists only of letters $a$, in other words, 
if and only if $w \in a^*$.
Similarly, $\delta(q_b, w) = q_b$ if and only if $w \in b^*$, and 
$\delta(q_a, w) = q_b$ if and only if $w \in a^* b^+$. 

Every language $L(C_{a \to a})$ is a 2-automatic language over a unary alphabet $\{a\}$. 
Indeed, every parse tree of $C_{a \to a}$ contains only nonterminals of type $a \to a$. 
Therefore, only character $a$ can occur as a \emph{terminal} in a parse tree of $C_{a \to a}$. 
So, $L(C_{a \to a})$ is described by some GF(2)-grammar over an alphabet $\{a\}$, 
and is therefore 2-automatic.
Similarly, all languages $L(C_{b \to b})$ are 2-automatic over the alphabet $\{b\}$. 
Then, by Christol's theorem, $\Dual(L(C_{a \to a})) \in \A$ and $\Dual(L(C_{b \to b})) \in \B$. 

How do the languages $L(C_{a \to b})$ look like? 
Let us look at the rules $C_{a \to b} \to D_{a \to a} E_{a \to b}$
and $C_{a \to b} \to D_{a \to b} E_{b \to b}$. These rules can be interpreted in the following way: when starting a parse from nonterminal $C_{a \to b}$, 
we can append a language from $\A$ from the left and go to $E_{a \to b}$
or append a language from $\B$ from the right and go to $D_{a \to b}$.

What can we say about $K$? 
By definition, $K = L(S) = L(S') = L(S_{a \to a}) \triangle L(S_{a \to b})$.
We can forget about the language $L(S_{a \to a})$: it is from the class $\A$, and $L(S_{a \to b})$
is from much more complicated class, that will ``absorb'' $\A$ in the end.

The languages $L(C_{a \to b})$ for each nonterminal $C_{a \to b}$ of the new grammar
satisfy the following system of language equations:

\begin{equation}\label{LangSystem_old} L(C_{a \to b}) = \fin(C_{a \to b}) \triangle \bigtriangleup_{C \to DE} 
(L(D_{a \to a}) \odot L(E_{a \to b})) \triangle (L(D_{a \to b}) \odot L(E_{b \to b}))
\end{equation}
	
Here, the summation happens over all rules $C \to DE$ for each nonterminal $C$ of the original
grammar, and $\fin(C_{a \to b})$ is either $\{b\}$ or $\varnothing$, depending on whether or not
there is a rule $C_{a \to b} \to b$ in the new grammar.

Look more closely at the system~\eqref{LangSystem_old}.
In all GF(2)-concatenations that appear in its right-hand side
either the first language is a subset of $a^*$, or the second
language is a subset of $b^*$. Hence, we can apply the Lemma~\ref{comm-concat}.

Denote $\Dual(L(C_{a \to b}))$ by $\Center(C)$, $\Dual(L(C_{a \to a}))$ by $\Left(C)$,
$\Dual(L(C_{b \to b}))$ by $\Right(C)$ and $\Dual(\fin(C_{a \to b}))$ by $\final(C)$ for brevity.

Applying $\Dual$ to the both sides of~\eqref{LangSystem_old} gives us the following
system of equations over formal power series:

\begin{equation}\label{SeriesSystem_old} \Center(C) = \final(C) + \sum\limits_{C \to DE}
\Left(D) \Center(E) + \Center(D) \Right(E)
\end{equation}

Let us look at this system as a system of $\F[[a, b]]$-linear equations over variables $\Center(C) = \Dual(L(C_{a \to b}))$ for
every nonterminal $C$ of the original GF(2)-grammar.

We will consider $\final(C)$, $\Left(C)$ and $\Right(C)$ to be the coefficients of the system. 
While we do not know their \emph{exact} values, the following is known: $\final(C)$ is $0$ or $b$, $\Left(C) \in \A$ as a formal power series that
corresponds to a 2-automatic language over an alphabet $\{a\}$ and, similarly, $\Right(C) \in \B$.

Denote the number of nonterminals in the original GF(2)-grammar by $n$,
(so there are $n$ nonterminals of type $a \to b$ in the new GF(2)-grammar),
a column vector of values $\Center(C)$ by $x$ and a column vector of values 
$\final(C)$ in the same order by $f$.  
Let us fix the numeration of nonterminals $C$ of the old GF(2)-grammar.
After that, we can use them as the ``indices'' of rows and columns of matrices.

Let $I$ be an identity matrix of dimension $n \times n$, $A$ be a $n \times n$ matrix
with the sum of $\Left(D)$ over all rules $C \to DE$ of the original grammar standing on the intersection of $C$-th row and $E$-th column:
\begin{equation}\label{Adef_old}
 A_{C, E} := \sum\limits_{C \to DE} \Left(D) 
\end{equation}

Similarly, let $B$ be a $n \times n$ matrix with
\begin{equation}\label{Bdef_old}
 B_{C, D} := \sum\limits_{C \to DE} \Right(E)
\end{equation}

Then the equation system~\eqref{SeriesSystem_old} can be rewritten as $x = f + (A + B) x$ in the
matrix form. In other words, $(A + B + I)x = f$.

We have already proven earlier that $\Center(C)$ is a solution of this system. Our plan is to
prove that there is exactly one solution to this system and express it in some form.
Then, in particular, we will find some expression for $\Dual(L(S_{a \to b})) = \Center(S)$.

This system has exactly one solution if and only if $\det(A + B + I) \neq 0$. If $\det (A + B + I) \neq 0$,
then, by Cramer's formula, every entry of the solution, including $\Center(S)$ can be written as
\begin{equation*}
\dfrac{\det(\mbox{$A + B + I$, but with one of the columns replaced by $f$})}{\det(A + B + I)}
\end{equation*}

It remains to establish three things: that $\det(A + B + I)$ matches $\sum \A\B$, that
$\det(\mbox{$A + B + I$, but with one of the columns replaced by $f$})$ matches $\sum \A\B$, independently of the replaced column and that $\det (A + B  + I) \neq 0$.

Let us prove the first two statements at the same time. 
Every entry of $A + B + I$ matches $\A + \B$
because of the equations~\eqref{Adef_old}--\eqref{Bdef_old}.
Indeed, every entry of $A$ matches $\sum \A = \A$,
every entry of $B$ matches $\B$,
and entries of $I$ are ones and zeroes that lie in both $\A$ and $\B$.
This property will not disappear, if you replace every column of the matrix
by $f$: all entries of $f$ are equal to $0$ or $b$, so they match $\B$, let alone $\A + \B$.

Now, let us prove that the determinant of the matrix with every entry matching $\A + \B$
matches $\sum \A\B$. Indeed, by expressing the determinant through the explicit formula
with $n!$ summands, we get that the determinant matches 
\begin{equation*}
\sum \underbrace{(\A + \B) \cdot \ldots \cdot (\A + \B)}_{n \text{ times}}.
\end{equation*}
By expanding the brackets and using the fact that $\A \A = \A$ and $\B \B = \B$, we see
that the determinant match $\sum \A\B$.

It remains to prove that $\det(A + B + I) \neq 0$. Let us prove a stronger statement: that the
power series $\det(A + B + I) \in \F[[a, b]]$ is invertible, that is, its coefficient at $a^0 b^0$
is equal to $1$.

Notice that finite product of power series is invertible if and 
only if each factor is invertible. Also a finite sum of power series 
with exactly one invertible summand  is invertible. 

Because the new GF(2)-grammar is also in Chomsky's normal form,
all languages $L(C_{a \to a})$ and $L(C_{b \to b})$ do not contain the empty string.
Therefore, all series $\Left(C)$ and $\Right(C)$ are invertibe. Therefore, by
equations~\eqref{Adef_old}--\eqref{Bdef_old}, all entries of $A + B$ are invertible.
It follows that exactly the diagonal entries of $A + B + I$ are invertible: 
they are obtained by adding one to invertible series, and other entries of $A + B + I$
coincide with the same entries of $A + B$.

Let us use the formula for $\det(A + B + I)$ with $n!$ summands again. Exactly
one summand is invertible: the one that corresponds to the identity permutation.
Indeed, all other summands have at least one nondiagonal, therefore, non-invertible,
element. And the summand that corresponds to the identity permutation is 
a product of the diagonal entries of $A + B + I$. Hence, said summand is invertible as power series.

We have just proved that $\det(A + B + I)$ is invertible. In particular, $\det(A + B + I) \neq 0$.

Now we can use the Cramer's formula and conclude that  $\Dual(L(S_{a \to b}))$ match $\dfrac{\sum \A\B}{\sum \A\B}$.
Then $\Dual(K) = \Dual(L(S')) = \Dual(L(S_{a \to a}) + \Dual(L(S_{a \to b}))$ match
$\A + \dfrac{\sum \A\B}{\sum \A\B} = \dfrac{\sum \A\B}{\sum \A\B}$.
The last equivalence holds, because we can
find a common denominator for the summands
and obtain $\A \sum \A\B + \sum \A\B = \sum \A \B$ in the numerator. 

\end{proof}

\subsection{Algebraic manipulations}

The purpose of this section is to prove
 the following theorem:

\begin{thm}\label{Main_ab_old} If $L \subseteq a^* b^*$ is described by a GF(2)-grammar. Then,
the corresponding power series $\Dual(L)$ match $\dfrac{\sum \A\B}{\poly(a, b)}$.
\end{thm}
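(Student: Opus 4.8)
The plan is to combine Lemma~\ref{Translation} with a purely algebraic equivalence of expressions. By Lemma~\ref{Translation} we already know that $\Dual(L)$ matches $\frac{\sum \A\B}{\sum \A\B}$, so it suffices to prove the equivalence $\frac{\sum \A\B}{\sum \A\B} = \frac{\sum \A\B}{\poly(a,b)}$ promised earlier in this section. In fact only the inclusion ``every series matching the left-hand expression matches the right-hand one'' is logically needed for the theorem, but I would record both directions, since the equivalence was announced as a goal in its own right.

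The easy direction, that every series matching $\frac{\sum\A\B}{\poly(a,b)}$ also matches $\frac{\sum\A\B}{\sum\A\B}$, reduces to a single observation: any nonzero $p \in \poly(a,b)$ itself matches $\sum\A\B$. Indeed, writing $p$ as a finite sum of monomials and noting that each monomial factors as $a^s b^t = (a^s)(b^t)$ with $a^s \in \A$ and $b^t \in \B$, we see that $p$ matches $\sum\A\B$. Hence any representation $\frac{N}{p}$ with $N$ matching $\sum\A\B$ and $p \in \poly(a,b)$ is already a representation whose denominator matches $\sum\A\B$.

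The hard direction is the crux, and I expect it to be the main obstacle. Let $f$ match $\frac{\sum\A\B}{\sum\A\B}$, say $f = N/D$ with $N, D \in \sum\A\B$ and $D \neq 0$. Since $\sum\A\B$ is closed under multiplication (using $\A\A = \A$, $\B\B = \B$, and the fact that $a^s A \in \A$, $b^t B \in \B$), it is enough to produce some $G \in \sum\A\B$ together with a nonzero $p \in \poly(a,b)$ such that $DG = p$: then $f = \frac{NG}{DG} = \frac{NG}{p}$ with $NG \in \sum\A\B$, exactly as required. To produce $G$, I would consider the finitely many factors $A_1, \dots, A_m \in \A$ and $B_1, \dots, B_r \in \B$ occurring in $D$, and form the subring $S := \rat(a,b)[A_1, \dots, A_m, B_1, \dots, B_r]$ of the field $\F((a,b))$. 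Each generator is algebraic over $\F(a)$ or over $\F(b)$, hence over $\rat(a,b)$, so $S$ is a finite-dimensional $\rat(a,b)$-vector space; being simultaneously an integral domain, multiplication by the nonzero element $D$ is an injective $\rat(a,b)$-linear endomorphism of $S$, hence surjective, so $1/D \in S$. This step is precisely the concrete incarnation of Lemma~\ref{lemma_rab} (that $R_{a,b}$ is a field), and one could alternatively just cite that lemma rather than reprove it by the finite-dimensionality argument.

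Finally I would expand $1/D \in S$ as a $\rat(a,b)$-combination of products of the $A_i$ and the $B_j$ --- each such product lying in $\A$ or $\B$ respectively --- and then clear the common polynomial denominator of the rational-function coefficients. This yields $1/D = N_1/p$ with $N_1 \in \sum\A\B$ and $p \in \poly(a,b)$ nonzero, equivalently $D N_1 = p$, which is the desired $G = N_1$. Chaining this inclusion with Lemma~\ref{Translation} gives that $\Dual(L)$ matches $\frac{\sum\A\B}{\poly(a,b)}$, completing the proof.
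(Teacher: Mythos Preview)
Your proposal is correct and follows the paper's two-step architecture exactly: invoke Lemma~\ref{Translation}, then prove the equivalence $\dfrac{\sum\A\B}{\sum\A\B} = \dfrac{\sum\A\B}{\poly(a,b)}$ (the paper's Lemma~\ref{Manipulation}).

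The only genuine difference is in how you handle the hard direction of the equivalence. You argue abstractly: the subring $S = \rat(a,b)[A_1,\dots,A_m,B_1,\dots,B_r]$ is a finite-dimensional $\rat(a,b)$-vector space and an integral domain, so multiplication by the nonzero $D$ is injective hence surjective, giving $1/D \in S$; then clear denominators. This is precisely the field-extension viewpoint of Lemma~\ref{lemma_rab} from the main body, and you correctly note the connection. The appendix, by contrast, is deliberately elementary: it proves a chain of technical lemmas showing directly that the monomials in the $A_i, B_j$ span a finite-dimensional $\rat(a,b)$-space, deduces a nontrivial linear dependence among the nonnegative powers of $D$, and then explicitly solves that dependence for $D^{-1}$ as a $\rat(a,b)$-combination of nonnegative powers before clearing denominators. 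Your route is shorter and conceptually cleaner; the appendix's route avoids any black-box appeal to ``finite-dimensional integral domain over a field is a field'' and makes every step explicit, which is its stated purpose. One small phrasing point: when you write ``each such product lying in $\A$ or $\B$ respectively'', you presumably mean that $\prod_i A_i^{k_i} \in \A$ and $\prod_j B_j^{l_j} \in \B$, so the full product lies in $\A\B$; this is right, but worth stating more carefully.
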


In the previous section, we have already moved to this goal, by dealing with the language-theoretic details.
Now, we want to use some algebraic manipulations. Theorem~\ref{Main_ab_old} 
would follow from the Lemma~\ref{Translation} and the following lemma:

\begin{lm}\label{Manipulation} Algebraic expressions $\dfrac{\sum \A\B}{\sum \A\B}$ and
$\dfrac{\sum \A\B}{\poly(a, b)}$ are equivalent.
\end{lm}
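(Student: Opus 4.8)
The plan is to prove the two inclusions between the sets of Laurent series matching the two expressions. One direction is trivial: every monomial $a^k b^\ell$ equals $a^k \cdot b^\ell$ with $a^k \in \A$ and $b^\ell \in \B$, so $\poly(a, b)$ is contained in the set of series matching $\sum \A\B$; hence any fraction matching $\dfrac{\sum \A\B}{\poly(a, b)}$ is in particular of the form $\dfrac{\sum \A\B}{\sum \A\B}$, giving one inclusion. Note also that the set matching $\dfrac{\sum \A\B}{\poly(a, b)}$ is \emph{exactly} the set $R_{a, b}$ of Definition~\ref{ab-def}, which is a commutative ring, and that $\sum \A\B \subseteq R_{a, b}$ (take denominator $1$). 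So for the reverse inclusion it suffices to show that every nonzero $D \in \sum \A\B$ has its inverse $D^{-1}$ inside $R_{a, b}$: then for any numerator $N \in \sum \A\B \subseteq R_{a, b}$ the quotient $N/D = N \cdot D^{-1}$ lies in $R_{a, b}$ because $R_{a, b}$ is a ring, and $N/D$ therefore matches $\dfrac{\sum \A\B}{\poly(a, b)}$.

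It remains to invert a single nonzero $D = \sum_j A'_j B'_j$, with $A'_j \in \A$ and $B'_j \in \B$. First I would form $R \coloneqq \rat(a, b)[A'_1, \ldots, A'_m, B'_1, \ldots, B'_m]$, the $\rat(a, b)$-subalgebra of $\F((a, b))$ generated by these finitely many factors. Each generator is algebraic over $\rat(a)$ or over $\rat(b)$, hence over $\rat(a, b)$, so $R$ is a finite-dimensional $\rat(a, b)$-vector space. Since $R$ is an integral domain and $D \neq 0$, multiplication by $D$ is an injective $\rat(a, b)$-linear endomorphism of $R$, hence bijective; in particular, writing its characteristic polynomial as $t^k + c_{k-1} t^{k-1} + \ldots + c_0$ with $c_i \in \rat(a, b)$, we have $c_0 \neq 0$ (it is, up to sign, the determinant). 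By the Cayley--Hamilton theorem $D^k + c_{k-1} D^{k-1} + \ldots + c_0 = 0$, so $D^{-1} = c_0^{-1}(D^{k-1} + c_{k-1} D^{k-2} + \ldots + c_1)$. Each power $D^i$ lies in $\sum \A\B$ (that class is closed under products), and multiplying by the rational coefficients $c_0^{-1} c_i \in \rat(a, b)$ only introduces polynomial denominators, so the whole expression lands in $R_{a, b}$, as required.

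The genuine content is that one must invert a \emph{sum} $\sum_j A'_j B'_j$, not merely a product: products invert termwise, but a sum entangles the two variables. I expect the main obstacle to be that the tempting ``rationalize by conjugates'' or ``take a field norm'' approach fails here, because conjugates of an algebraic power series need not themselves be power series, and separability may break down in characteristic two, so the would-be cofactor might escape the class $\sum \A\B$. The Cayley--Hamilton argument sketched above sidesteps this entirely by never leaving $R$ and producing $D^{-1}$ as a $\rat(a, b)$-polynomial in $D$. I would remark that this is precisely Lemma~\ref{lemma_rab} specialized to the present setting: if one is content to cite that lemma, the reverse inclusion is immediate, since $R_{a, b}$ being a field yields $D^{-1} \in R_{a, b}$ at once; the self-contained argument merely reproves that special case by hand, which is exactly what this appendix sets out to do.
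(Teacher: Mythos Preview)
Your proposal is correct and follows essentially the same route as the paper: both reduce the nontrivial inclusion to inverting a nonzero $D = \sum_j A'_j B'_j$ inside $R_{a,b}$, and both do this by observing that the $\rat(a,b)$-algebra generated by the finitely many $A'_j, B'_j$ is finite-dimensional, so $D$ satisfies a polynomial relation over $\rat(a,b)$ from which $D^{-1}$ can be extracted as a $\rat(a,b)$-combination of nonnegative powers of $D$. The only cosmetic difference is packaging: you invoke Cayley--Hamilton for the multiplication-by-$D$ map (so the constant term $c_0 = \det \neq 0$ comes for free), whereas the paper writes down an arbitrary linear dependence $\sum_i p_i D^i = 0$ among powers of $D$ and then divides through by the lowest nonvanishing power to force a $D^{-1}$ term; both yield the same conclusion.
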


\begin{rem} It is immediately apparent that Theorem~\ref{Main_ab_old} is exactly Theorem~\ref{Main_ab}, but stated in terms of algebraic expressions. What can be more difficult to see is the fact that Lemma~\ref{Manipulation} is also, essentially, a restatement of Lemma~\ref{lemma_rab}. Indeed, representing $\dfrac{\sum \A\B}{\sum \A\B}$ as $\dfrac{\sum \A\B}{\poly(a, b)}$ is exactly the same as proving that a ratio of two elements of $R_{a, b}$ is still an element of $R_{a, b}$. And the proof of Lemma~\ref{Manipulation} is, essentially, a roundabout way to prove the following well-known fact: if we adjoin an arbitrary number of algebraic elements to a field, the result will still be a field. 
\end{rem} 

\begin{proof}[Proof of Lemma~\ref{Manipulation}]
It is evident that the algebraic expression $\dfrac{\sum \A\B}{\sum \A\B}$ is not weaker than 
$\dfrac{\sum \A\B}{\poly(a, b)}$, because $\sum \A \B$ is not weaker than $\poly(a, b)$. 
Indeed, every polynomial is a finite sum of monomials of type $a^n b^m$, and
each such monomial match $\A \B$, because $a^n \in \A$
and $b^m \in \B$. 

The converse implication, namely that
$\dfrac{\sum \A\B}{\sum \A\B}$ is not \emph{stronger} than $\dfrac{\sum \A\B}{\poly(a, b)}$,
is more interesting.
Suppose that some formal power series $f$ match $\dfrac{\sum \A\B}{\sum \A\B}$. Then it also match $\sum \dfrac{\A\B}{\sum \A\B}$. Let us show that each of the summands match
$\dfrac{\sum \A \B}{\poly(a, b)}$, then the whole sum match
$\sum \dfrac{\sum \A \B}{\poly(a, b)} = \dfrac{\sum \A \B}{\poly(a, b)}$, as intended.

Indeed, suppose that we have some Laurent series matching the expression 
$\dfrac{\A\B}{\sum \A\B}$. Then, by definition
of ``matching algebraic expression'', these series are of the type 
$\frac{A_0 B_0}{A_1 B_1 + A_2 B_2 + \ldots + A_n B_n}$, where $n$ is a positive integer, and 
$A_i \in \A, B_i \in \B$ for every $i$ from $0$ to $n$ inclusive. Moreover, this expression makes sense, 
meaning that $A_1 B_1 + \ldots + A_n B_n \neq 0$.

We still have not used that $\A$ and $\B$ are exactly the sets of algebraic power series, and not just some
subsets of $\F[[a]]$ and $\F[[b]]$ that are closed under addition. Let us use that.

More exactly, we want to get rid of difficult expression in the numerator by rewriting 
$\frac{1}{A_1 B_1 + \ldots + A_n B_n}$, that is, $(A_1 B_1 + \ldots + A_n B_n)^{-1}$,
as a finite $\rat(a, b)$-linear combination of \emph{nonnegative} powers of $A_1 B_1 + \ldots + A_n B_n$.

The least painful way to do so is to find a nontrivial $\rat(a, b)$-linear dependence between nonnegative
powers of $A_1 B_1 + \ldots + A_n B_n$ and then get the required expression from it.
It still is not very easy, see below for details.

Because every $A_i$ is an algebraic power series in variable $a$ over the ring $\F[a]$, it also is
an algebraic power series in variables $a$ and $b$ over the field $\rat(a, b)$: the same polynomial
equation will suffice to show that.

We will need a few technical lemmas:

\begin{lm} Suppose that Laurent series $f \in \F((a,b))$ is a solution to a polynomial equation of degree $d$ with coefficients
from $\rat(a, b)$. Then, for every $m \geqslant d$, the power series $f^m$ can be represented as a $\rat(a, b)$-linear
combination of $f^{m - 1}, f^{m - 2}, \ldots, f^{m - d}$.
\end{lm}

\begin{proof} Indeed, by conditions of the lemma, $\sum_{i=0}^d p_i f^i = 0$ for some $p_i \in \rat(a, b)$.
Moreover, $p_d \neq 0$, because the degree of the equation is exactly $d$. Divide both sides by $p_d$
and move $f^d$ to the right-hand side: $\sum_{i=0}^{d - 1} \dfrac{p_i}{p_d} f^i = f^d$.
Multiply both sides by $f^{m - d}$: 
$\sum_{j=m-d}^{m-1} \frac{p_{j - (m - d)}}{p_d} f^j = f^m$, exactly a representation of $f^m$
as a $\rat(a, b)$-linear combination of $f^{m-1}, f^{m-2}, \ldots, f^{m-d}$.
\end{proof}
 
\begin{lm} Suppose that Laurent series $f \in \F((a,b))$ is a root of a polynomial equation of degree $d$ with coefficients
from $\rat(a, b)$. Then, for every $m \geqslant 0$, $f^m$ can be represented as a $\rat(a, b)$-linear combination
of $f^{d - 1}, f^{d-2}, \ldots, f^0$. In other words, all nonnegative powers of $f$ are in the $\rat(a,b)$-linear space
generated by $f^0, f^1, \ldots, f^{d-1}$.
\end{lm}
\begin{proof} Induction over $m$. Denote the $\rat(a,b)$-linear space, generated by $f^0, f^1, \ldots, f^{d-1}$ by $L$.
The statement is trivially true for $m < d$, because $f^m$ is one of generators of $L$.

Now, suppose that we want to prove the statement of the lemma for some $m \geqslant d$. By induction hypothesis,
$f^{m-1}, f^{m-2}, \ldots, f^{m-d}$ all lie in $L$. By the previous lemma, $f^m$ can be represented as a 
$\rat(a,b)$-linear combination of $f^{m-1}, f^{m-2}, \ldots, f^{m-d}$. Therefore, $f^m$ lies in $L$ as
a finite $\rat(a,b)$-linear combination of elements of $L$.
\end{proof}

\begin{lm} 
There is no infinite subset of monomials in variables $A_i$ and $B_i$, that is 
 linearly independent over $\rat(a,b)$.
In other words, the $\rat(a,b)$-linear space generated by all values of polynomials in variables
$A_1, A_2, \ldots, A_n$ and $B_1, B_2, \ldots, B_n$ with coefficients from $\rat(a,b)$ is finite-dimensional.
\end{lm}
\begin{proof} Because $A_i \in \A$ for every $i$ from $1$ to $n$ inclusive, there are some $\ell_i$ such that
$A_i$ is a root of degree-$d$ polynomial equation with coefficients from $\rat(a,b)$. Similarly, denote
by $r_i$ the degrees of polynomial equations for $B_i$.

Let us try to represent expression $A_1^{j_1} \ldots A_n^{j_n} \cdot B_1^{k_1} \ldots B_n^{k_n}$ 
for some nonnegative $j_s$ and $k_s$ as a $\rat(a,b)$-linear combination of similar expressions
with \emph{small} degrees.

Indeed, by previous lemma, every $A_s^{j_s}$ is a $\rat(a,b)$-linear combination of $A_s^0, A_s^1, \ldots, A_s^{\ell_s - 1}$.
Similarly, every $B_s^{k_s}$ is a $\rat(a,b)$-linear combination of $B_s^0, B_s^1, \ldots, B_s^{r_s - 1}$.
Represent $A_1^{j_1} \ldots A_n^{j_n} \cdot B_1^{k_1} \ldots B_n^{k_n}$ as a product of such linear combination
and expand all brackets. The result is some $\rat(a,b)$-linear combination of expressions $A_1^{x_1} \ldots A_n^{x_n} \cdot B_1^{y_1} \ldots B_n^{y_n}$, but with $0 \leqslant x_s < \ell_s$ and $0 \leqslant y_s < r_s$.

Let $L$ be the $\rat(a,b)$-linear space generated by all products of type
$A_1^{x_1} A_2^{x_2} \ldots A_n^{x_n} \cdot B_1^{y_1} \ldots B_n^{y_n}$, where $0 \leqslant x_s < \ell_s$ and 
$0 \leqslant y_s < r_s$ for all $s$ from $1$ to $n$ inclusive. This space is generated by $\ell_1 \ell_2 \ldots \ell_n \cdot
r_1 r_2 \ldots r_n$ elements and therefore is finite-dimensional.

We have already established that every monomial $A_1^{j_1} \ldots A_n^{j_n} \cdot B_1^{k_1} \ldots B_n^{k_n}$ is
a $\rat(a,b)$-linear combination of elements of $L$ (moreover, exactly the elements that were $L$'s generators),
therefore it lies in $L$. Then, every polynomial expression in variables $A_s$ and $B_s$ lies in $L$, as a linear combination of monomials that lie in $L$.
\end{proof}

Because the space of \emph{all} polynomial expression of $A_i$ and $B_i$ is finite-dimensional, the space
generated by nonnegative powers of $A_1 B_1 + \ldots + A_n B_n$ also is. Therefore, there exists a nontrivial
$\rat(a,b)$-linear dependence between nonnegative powers of $A_1 B_1 + \ldots + A_n B_n$. In other words,
there is some nonnegative integer $d$ and rational functions $p_0, p_1, \ldots, p_d \in \rat(a,b)$, not all equal to zero,
such that $\sum_{i=0}^d p_i (A_1 B_1 + \ldots + A_n B_n)^i = 0$. Let us find an expression of
$(A_1 B_1 + \ldots + A_n B_n)^{-1}$ through nonnegative powers of $A_1 B_1 + \ldots + A_n B_n$ with that
knowledge.

Indeed, let us take the smallest such $j$ that $p_j \neq 0$. It exists, because not all $p_i$ are equal to zero.
Then, our equation can be rewritten as $\sum_{i=j}^d p_i (A_1 B_1 + \ldots + A_n B_n)^i = 0$, because
$p_0 = p_1 = \ldots = p_{j-1} = 0$ anyways. By dividing both sides by $p_j (A_1 B_1 + \ldots + A_n B_n)^{j + 1} \neq 0$,
we obtain 
\begin{equation*}
\sum_{i = j}^d  \dfrac{p_i}{p_j} (A_1 B_1 + \ldots + A_n B_n)^{i - j - 1} = 0.
\end{equation*}
All powers of 
$A_1 B_1 + \ldots + A_n B_n$ from $(-1)$-st to $(d-j-1)$-st are here with some coefficients, the coefficient 
before $(-1)$-st power is $p_j / p_j = 1$. By moving all powers, except $(-1)$-st to the right-hand side, we obtain
\begin{equation*}
(A_1 B_1 + \ldots + A_n B_n)^{-1} = \sum_{i=j+1}^d \dfrac{p_i}{p_j} (A_1 B_1 + \ldots + A_n B_n)^{i - j - 1} =  \sum_{i=0}^{d-j-1} \dfrac{p_{i+j+1}}{p_j} (A_1 B_1 + \ldots + A_n B_n)^{i}.
\end{equation*}

Therefore,  $(A_1 B_1 + \ldots + A_n B_n)^{-1}$ match $\sum \rat(a,b) \A \B$ (to understand that, 
expand all brackets in the right-hand side). Therefore, $\frac{A_0 B_0}{A_1 B_1 + \ldots + A_n B_n}$ match $\A \B \cdot \sum \rat(a,b) \A \B = \sum \rat(a, b) \A \B$. We are almost done!

\begin{rem} 
Generally speaking, $\rat(a,b)$ cannot be split into two parts with the first being ``absorbed'' by $\A$ and
the second being ``absorbed'' by $\B$. Keep the following example in the head: $1 + ab$. It is not hard
to prove that $1 + ab$ is not a product of a factor depending only on $a$ and a factor depending only on $b$.
\end{rem}

As we understood earlier, every Laurent series matching $\dfrac{\A\B}{\sum \A\B}$
also match
$\sum \rat(a,b) \A\B$. Then all Laurent series matching $\dfrac{\sum \A\B}{\sum \A\B}$
also match $\sum \sum \rat(a,b) \A\B = \sum\rat(a, b) \A \B$. 
Finally, by adding the fractions up,
every Laurent series matching $\sum \rat(a,b) \A \B$ match
$\dfrac{\sum \poly(a, b) \A \B}{\poly(a, b)} = \dfrac{\sum \A\B}{\poly(a, b)}$.
\end{proof}

\section{Subsets of \texorpdfstring{$a^*b^*c^*$}{a*b*c*} through Algebraic Expressions}\label{alg_expr_abc}

The language $\set{a^n b^n c^n}{n \geqslant 0}$ is, probably, the most famous 
example of a simple language that is not described by any ordinary grammar. 
It is reasonable to assume that it is not described by a GF(2)-grammar as well.
Let us prove that.

We will do more than that and will actually establish some property that
all GF(2)-grammatical subsets of $a^* b^* c^*$ have, but 
$\set{a^n b^n c^n}{n \geqslant 0}$ does not. 
Most steps of the proof will be analogous to the two-letter case.

There is a natural one-to-one correspondence between subsets of $a^* b^* c^*$ 
and formal power series in variables $a, b$ and $c$ over field $\F$. Indeed, for every
set $S \subseteq \N_0^3$, we can identify the language $\set{a^n b^m c^k}{(n, m, k) \in S}
\subseteq a^* b^* c^*$ with the formal power series 
$\sum\limits_{(n, m, k) \in S} a^n b^m c^k$. Denote this corres`pondence 
by $\Dual \colon 2^{a^* b^* c^*} \to \F[[a, b, c]]$. Then, $\Dual(L \triangle K)
= \Dual(L) + \Dual(K)$. In other words, the symmetric difference of languages
corresponds to the sum of formal power series.

Similarly to the Lemma~\ref{comm-concat}, $\Dual(K \odot L) = \Dual(K) \cdot \Dual(L)$
in the following important special cases: when $K$ is a subset of $a^*$, 
when $K$ is a subset of $a^* b^*$ and $L$ is a subset of $b^* c^*$, and, finally, when $L$
is a subset of $c^*$.	 Indeed, in each of these three cases, characters ``are in the correct order'':
if $u \in K$ and $v \in L$, then $uv \in a^* b^* c^*$.

However, we cannot insert character $b$ in the middle of the string: if $K$ is a subset 
of $b^*$ and $L$ is a subset of $a^* b^* c^*$, then $K \odot L$ and $K \odot_\mathrm{comm} L$ do not have to coincide, because $K \odot L$ does not even have to be a subset of $a^* b^* c^*$. 

The ``work plan'' will remain the same as in the previous section: we will switch to 
algebraic track first and then we simplify the expression obtained.

An attentive reader may ask two questions:
\begin{enumerate}
\item Why is it logical to expect that the language $\set{a^n b^n c^n}{n \geqslant 0}$
is not described by a GF(2)-grammar, but a similar language $\set{a^n b^n}{n \geqslant 0}$
is?
\item Why will the proof work out for $\set{a^n b^n c^n}{n \geqslant 0}$, but not
for a regular language $\set{(abc)^n}{n \geqslant 0}$, despite these languages
having the same ``commutative image''?
\end{enumerate}

They can be answered in the following way:

\begin{enumerate}
\item Simply speaking, the reason is the same as for the ordinary grammars. On a
intuitive level, both ordinary grammars and GF(2)-grammars permit a natural
way to ``capture'' the events that happen with any two letters in subsets of $a^* b^* c^*$,
but not all three letters at the same time. A rigourous result that corresponds to this 
intuitive limitation of ordinary grammars was proven by Ginsburg and Spanier~\cite[Theorem 2.1]{bounded-original}. Theorem~\ref{Main_abc} is an analogue for GF(2)-grammars.

\item This argument only implies that any proof that relies \emph{solely} on commutative
images is going to fail. The real proof is more subtle. For example, it will also use the fact
that $\set{a^n b^n c^n}{n \geqslant 0}$ is a subset of $a^* b^* c^*$.

While the proof uses commutative images, it uses them very carefully, always making sure
that the letters ``appear in the correct order''. In particular, we will never consider
GF(2)-concatenations $K \odot L$, where $K$ is a subset of $b^*$
and $L$ is an arbitrary subset of $a^* b^* c^*$, in the proof, 
because in this case $K \odot L$ is not a subset of $a^* b^* c^*$.

Avoiding this situation is impossible for language $\set{(abc)^n}{n \geqslant 0}$, because
in the string $abcabc$ from this language the letters ``appear in the wrong order''.
\end{enumerate}

Denote the set of algebraic power series of variable $c$ by $\C$, the set of polynomials
in variables $a$ and $c$ by $\poly(a, c)$, et cetera.
The definition of an algebraic expression stays the same for the most part, but now the new symbols
$\C$, $\poly(a, c)$, $\poly(b, c)$, $\poly(a, b, c)$, $\rat(a, c)$, $\rat(b, c)$ and $\rat(a, b, c)$
may appear alongside the old symbols $\A$, $\B$ and $\poly(a, b)$.

\subsection{Switching to the algebraic track}

Our goal for this subsection is to establish the following lemma:
\begin{lm}\label{Translation_abc} Suppose that $K \subseteq a^* b^* c^*$ is described by a GF(2)-grammar.
Then the corresponding formal power series $\Dual(K)$ match algebraic expression
$\dfrac{\sum \A\B\C}{\poly(a, b) \poly(b, c) \cdot \sum \A\C}$.
\end{lm}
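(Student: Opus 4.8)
The plan is to follow the two-letter argument of Lemma~\ref{Translation} almost verbatim, using the already-proven two-letter result (Theorem~\ref{Main_ab_old}) as the ``stepping stone'' that plays the role Christol's theorem played in the base case. First I would put the grammar in Chomsky normal form, assume that $K$ omits the empty word, and formally intersect it with the DFA for $a^* b^* c^*$. Each nonterminal $C$ now splits into six pieces $C_{a\to a}, C_{a\to b}, C_{a\to c}, C_{b\to b}, C_{b\to c}, C_{c\to c}$, with $L(C_{a\to a}) = L(C)\cap a^*$, $L(C_{a\to c}) = L(C)\cap a^* b^* c^+$, and so on. As in the two-letter case, $L(C_{a\to a})$ is $2$-automatic over $\{a\}$, so $\Left(C) \coloneqq \Dual(L(C_{a\to a})) \in \A$, and likewise $\Right(C) \coloneqq \Dual(L(C_{c\to c})) \in \C$. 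The genuinely new input is that $L(C_{a\to b}) \subset a^* b^*$ and $L(C_{b\to c}) \subset b^* c^*$ are themselves described by GF(2)-grammars, so Theorem~\ref{Main_ab_old} tells us that $\Dual(L(C_{a\to b}))$ matches $\dfrac{\sum \A\B}{\poly(a,b)}$ and $\Dual(L(C_{b\to c}))$ matches $\dfrac{\sum \B\C}{\poly(b,c)}$.

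Next I would write down the linear system governing the central values $\Center(C) \coloneqq \Dual(L(C_{a\to c}))$. Every GF(2)-concatenation that occurs satisfies the hypotheses of Lemma~\ref{comm-concat}, so after applying $\Dual$ the system becomes
\begin{equation*}
\Center(C) = \final(C) + \sum\limits_{(C\to DE)\in R} \Left(D)\Center(E) + \Center(D)\Right(E),
\end{equation*}
where $\final(C) = \Dual(\fin(C))$ collects the products $\Dual(L(D_{a\to b}))\cdot\Dual(L(E_{b\to c}))$ together with the optional term $\Dual(\{c\}) = c$. This rewrites as $(A+B+I)x = f$ with $A_{C,E} = \sum_{(C\to DE)}\Left(D)$ and $B_{C,E} = \sum_{(C\to DE)}\Right(E)$. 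Exactly as before, all entries of $A+B$ have zero constant term (Chomsky normal form forbids empty subwords), so the constant term of $\det(A+B+I)$ equals $\det(I)=1$; hence $\det(A+B+I)$ is an invertible, in particular nonzero, power series, and Cramer's rule expresses $\Center(S)$ as a ratio of two determinants.

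The remaining work is pure algebraic-expression bookkeeping, and I would carry it out using the idempotency relations $\A\A = \A$, $\B\B = \B$, $\C\C = \C$ together with the fact that every polynomial in two variables matches the corresponding $\sum$-expression. Since each entry of $A+B+I$ matches $\A + \C$, expanding the determinant shows the denominator $\det(A+B+I)$ matches $\sum \A\C$. Each component $\final(C)$ is a finite sum of products of something matching $\dfrac{\sum \A\B}{\poly(a,b)}$ and something matching $\dfrac{\sum \B\C}{\poly(b,c)}$; collapsing the middle $\B$-factors via $\B\B = \B$ shows $\final(C)$ matches $\dfrac{\sum \A\B\C}{\poly(a,b)\poly(b,c)}$. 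Expanding the numerator determinant along the replaced column, each term is such a $\final(C)$ times an $(n-1)\times(n-1)$ cofactor matching $\sum \A\C$; using $\A\A = \A$ and $\C\C = \C$ this again matches $\dfrac{\sum \A\B\C}{\poly(a,b)\poly(b,c)}$. Dividing numerator by denominator, $\Center(S)$ matches $\dfrac{\sum \A\B\C}{\poly(a,b)\poly(b,c)\cdot\sum \A\C}$. Finally $\Dual(K) = \Dual(L(S_{a\to a})) + \Dual(L(S_{a\to b})) + \Center(S)$, and since $\Dual(L(S_{a\to a})) \in \A$ and $\Dual(L(S_{a\to b}))$ matches $\dfrac{\sum \A\B}{\poly(a,b)}$ are both absorbed into the target expression (setting the unused classes to $1$), the whole sum matches it as well.

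The conceptually novel step, and the only one that is not a mechanical transcription of the two-letter proof, is the bootstrapping in the first paragraph: the control over the ``off-diagonal'' nonterminals $C_{a\to b}$ and $C_{b\to c}$ comes not from Christol's theorem but from the previously established two-letter Theorem~\ref{Main_ab_old}. Everything afterwards is routine, but it does demand care to keep the $b$-dependence confined to the $\poly(a,b)$ and $\poly(b,c)$ factors and never to introduce a $\poly(a,c)$ denominator, so that the final expression has precisely the claimed shape, with $\sum \A\C$ (rather than $\poly(a,c)$) sitting in the denominator.
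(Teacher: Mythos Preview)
Your proposal is correct and follows essentially the same approach as the paper's own proof: the intersection with the three-state DFA, the six-way split of nonterminals, the bootstrapping via Theorem~\ref{Main_ab_old} for the $a\to b$ and $b\to c$ pieces, the Cramer's-rule analysis of the linear system $(A+B+I)x=f$, and the algebraic-expression bookkeeping all match the paper almost line for line. The only cosmetic difference is that you expand the numerator determinant by cofactors along the replaced column, whereas the paper uses the $n!$-term Leibniz formula directly; these are equivalent.
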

\begin{proof} The proof is mostly the same as the proof of Lemma~\ref{Translation}.

Without loss of generality, GF(2)-grammar $G$ that describes $K$ is in Chomsky's normal form.
Also we can assume that $K$ does not contain the empty string.

The language $a^* b^* c^*$ is accepted by the following incomplete deterministic finite
automaton $M$. Firstly, $M$ has three states $q_a$, $q_b$ and $q_c$, all accepting.
Secondly, its transition function $\delta$ is defined as $\delta(q_a, a) = q_a, \delta(q_a, b) = q_b,
\delta(q_a, c) = q_c, \delta(q_b, b) = q_b, \delta(q_b, c) = q_c, \delta(q_c, c) = q_c$.

Intersect the GF(2)-grammar $G$ formally with regular language $a^* b^* c^*$, recognized by $M$. Because $L(G) = K$ was a subset of $a^* b^* c^*$ anyway, the described language will not
change. Each nonterminal $C$ of the original grammar will split into six nonterminals 
$C_{a \to a}$, $C_{a \to b}$, $C_{a \to c}$, $C_{b \to b}$, $C_{b \to c}$, $C_{c \to c}$.
Also, a new starting nonterminal $S'$ will appear.	 

Every ``normal'' rule $C \to DE$
will split into rules $C_{a \to a} \to D_{a \to a} E_{a \to a}$, $C_{a \to b} \to 
D_{a \to a} E_{a \to b}$, $C_{a \to b} \to D_{a \to b} E_{b \to b}$, $C_{a \to c}
\to D_{a \to a} E_{a \to c}$, $C_{a \to c} \to D_{a \to b} E_{b \to c}$, $C_{a \to c}
\to D_{a \to c} E_{c \to c}$, $C_{b \to c} \to D_{b \to b} E_{b \to c}$, $C_{b \to c}
\to D_{b \to c} E_{c \to c}$, and $C_{c, c} \to D_{c \to c} E_{c \to c}$.
Less horrifying than it looks, because most of these rules will not be interesting 
to us in the slightest.

A ``final'' rule $C \to a$ will turn into a rule $C_{a \to a} \to a$. Similarly, rule $C \to b$
will split into two rules $C_{a \to b} \to b$ and $C_{b \to b} \to b$, and rule $C \to c$
will split into three rules $C_{a \to c} \to c$, $C_{b \to c} \to c$ and $C_{c \to c} \to c$. 

Finally, three new rules will appear: $S' \to S_{a \to a}$, $S' \to S_{b \to b}$, $S' \to S_{c \to c}$.

The nonterminal $C_{x \to y}$ of the new GF(2)-grammar, where $x, y \in \{a, b, c\}$, 
corresponds to exactly such strings from $L(C)$ that move the automaton $M$ from the state
$q_x$ to the state $q_y$.

By looking more closely at the transitions of the automaton $M$, we can see that any string
that makes $M$ go from $q_a$ to $q_c$ is from $a^* b^* c^+$, any string that makes
$M$ go from $q_b$ to $q_c$ is from $b^* c^+$, et cetera. In particular, in all new
``normal'' rules GF(2)-concatenations happen ``in the correct order''.

As already mentioned, most of the new rules are not interesting, because we already
know, how the languages $L(C_{a \to a})$, $L(C_{a \to b})$, $L(C_{b \to b})$,
$L(C_{b \to c})$ and $L(C_{c \to c})$ look like. More specifically, the corresponding
formal power series match algeebraic expressions $\A$, $\dfrac{\sum \A\B}{\poly(a, b)}$,
$\B$, $\dfrac{\sum \B\C}{\poly(b, c)}$ and $\C$ respectively.

Therefore, we are only interested in nonterminals of the type $a \to c$. The rule $X \to YZ$
of the original GF(2)-grammar $G$ produces three rules for $X_{a \to c}$:
$X_{a \to c} \to Y_{a \to c} Z_{c \to c}$, $X_{a \to c} \to Y_{a \to b} Z_{b \to c}$
and $X_{a \to c} \to Y_{a \to a} Z_{a \to c}$. The first and last rule relate $L(X_{a \to c})$
to other nonterminals of type $a \to c$, and the second rule just outright tells us that
we can replace $X_{a \to c}$ with a language matching $\dfrac{\sum \A\B}{\poly(a, b)} \cdot
\dfrac{\sum \B\C}{\poly(b, c)}$. Finally, there may be a final rule $X_{a \to c} \to c$ for nonterminal $X_{a \to c}$.

We can conclude that the languages $L(C_{a \to c})$ satisfy the following system of 
language equations.
\begin{equation}\label{Lang_abc} L(C_{a \to c}) = \fin(C_{a \to c}) \triangle
\bigtriangleup\limits_{C \to DE} (L(D_{a \to a}) \odot L(E_{a \to c})) \triangle
(L(D_{a \to c}) \odot L(E_{c \to c}))
\end{equation}

Here, the summation happens over all rules $C \to DE$ for the nonterminal $C$ of the 
original GF(2)-grammar, and $\fin(C_{a \to c})$ is defined as:
\begin{equation}\label{end_abc_def} \fin(C_{a \to c}) = 
(\{c\} \text { or } \varnothing) \triangle \bigtriangleup\limits_{C \to DE} L(D_{a \to b}) \odot L(E_{b \to c})
\end{equation}

Here, the first ``summand'' depends on whether or not there is a rule $C_{a \to c} \to c$ in the new
GF(2)-grammar. 

Consider the equations from System~\eqref{Lang_abc} more closely. For all GF(2)-concatenations
that appear in their right-hand sides, either the first factor is a subset of $a^*$, or the 
second is a subset of $c^*$. Therefore, we can replace all GF(2)-concatenations here satisfy the conditions of Lemma~\ref{comm-concat}
\begin{equation}\label{Lang_abc_comm}
L(C_{a \to c}) = \fin(C_{a \to c}) \triangle
\bigtriangleup\limits_{C \to DE} (L(D_{a \to a}) \odot_{\mathrm{comm}} L(E_{a \to c})) \triangle 
(L(D_{a \to c}) \odot_{\mathrm{comm}} L(E_{c \to c}))
\end{equation}

Denote $\Dual(L(C_{a \to c}))$ by $\Center(C)$, $\Dual(L(C_{a \to a}))$ by 
$\Left(C)$, $\Dual(L(C_{c \to c}))$ by $\Right(C)$ and $\Dual(\fin(C_{a \to c}))$ by
$\final(C)$. By applying the correspondence $\Dual$ to the both sides of each equation of the
System~\eqref{Lang_abc_comm}, 
\begin{equation}\label{series_abc}
\Center(C) = \final(C) + \sum\limits_{C \to DE} \Left(D) \Center(E) + \Center(D) \Right(E)
\end{equation}

This system of equation can be interpeted as a system $\F[[a, b, c]]$-linear equations
over the variables $\Center(C) = \Dual(L(C_{a \to c}))$ for every nonterminal $C$
of the original GF(2)-grammar.

We will consider $\final(C)$, $\Left(C)$ and $\Right(C)$ to be the coefficients of said system.
While we do not know their exact values, we know that $\final(C)$ match the expression
$\sum \left( \dfrac{\sum \A\B}{\poly(a, b)} \right) \cdot 
\left( \dfrac{\sum \B\C}{\poly(b, c)} \right) = \dfrac{\sum \A\B\C}{\poly(a, b) \poly(b,c)}$  by 
formula~\eqref{end_abc_def} and Theorem~\ref{Main_ab_old},  $\Left(C)$ is in $\A$, because it 
corresponds to a 2-automatic language over an alphabet $\{a\}$ and, similarly,
$\Right(C)$ is in $\C$.

Let us say that the original GF(2)-grammar has $n$ nonterminals. Then, the new
GF(2)-grammar has $n$ nonterminals of type $a \to c$. Denote the column-vector
of values $\Center(C)$ by $x$, and the column-vector of values of $\final(C)$,
listed in the same order, by $f$. Fix such numeration of nonterminals of the original
GF(2)-grammar. Now, we can indice both rows and columns of $n \times n$ matrices
by the nonterminals of the original GF(2)-grammar.

Let $I$ be an identity $n \times n$ matrix and $A$ be a $n \times n$ matrix, where
the cell on the intersection of $C$-th row and $E$-th column contains the sum 
$\Left(D)$ over all rules $C \to DE$ of the original grammar:	
\begin{equation}\label{abc_matrix_A} A_{C, E} := \sum\limits_{C \to DE} \Left(D)
\end{equation}
	
Similarly, let $B$ be $n \times n$ matrix with sum of $\Right(E)$ over all rules $C \to DE$
of the original grammar standing on the intersection of $C$-th row and $D$-th column
(it would make more sense to call this matrix $C$ rather than $B$, but we have already used the letter
$C$ for a different purpose):
\begin{equation}\label{abc_matrix_B} B_{C, E} := \sum\limits_{C \to DE} \Right(E)
\end{equation}

Then, System~\eqref{series_abc} can be stated in the following
compact matrix form: $x = f + (A + B)x$, or $(A + B + I)x = f$, which is the same.

As we have shown above, the column-vector of $\Center(C)$ values indeed is a solution
to such a system. If we somehow establish that this system has only one solution, and
the said solution can be expressed in relatively simple algebraic terms, we will get an expression for
for $\Dual(L(S_{a \to c})) = \Center(S)$.

This system has exactly one solution if and only if $\det (A + B + I) \neq 0$.
If $\det (A + B + I) \neq 0$, then, by Cramer's formula, each component of the solution,
$\Center(S)$, in particular, can be represented in the following form:
\begin{equation*} 
\dfrac{\det (A + B + I, \text{ but one of the columns was replaced by $f$})}{\det (A + B + I)}
\end{equation*}

Now, we still need to prove three things: that $\det (A + B + I)$ matches $\sum \A\C$,
that 
\begin{equation*}
\det (A + B + I, \text{ but one of the columns was replaced by $f$})
\end{equation*}
 matches
$\dfrac{\sum \A\B\C}{\poly(a, b) \poly(b, c)}$, independently of the replaced column,
and, finally, that $\det (A + B + I)$ is not zero. 

Each entry of $A + B + I$ matches $\A + \C$, because of Equations~\eqref{abc_matrix_A}
and \eqref{abc_matrix_B}. Indeed, each entry of $A$ match $\sum \A = \A$,
similarly, each component of $B$ matches $\C$, and entry of $I$ are zeroes and ones,
which match both $\A$ and $\C$.

The determinant of the matrix with all entries matching $\A + \C$, matches $\sum \A\C$.
We have already proven that during the proof of Lemma~\ref{Translation}. The proof
of $\det (A + B + I)$ being non-zero also comes from the same place verbatim.

Finally, determinant of $A + B + I$ with one column replaced by $f$ is something new, 
because entries of $f$ match rather complicated expression $\sum \left( \dfrac{\sum\A\B}{\poly(a, b)} \right) \cdot \left( \dfrac{\sum B\C}{\poly(b, c)} \right) = \dfrac{\sum \A\B\C}{\poly(a,b) \poly(b, c)}$. By using the formula
for determinant with $n!$ summands, we see that the determinant matches
\begin{equation*}
\sum (\A + \C) \ldots (\A + \C) \left( \dfrac{\sum \A\B\C}{\poly(a,b)\poly(b, c)} \right) (\A + \C)
\ldots (\A + \C). 
\end{equation*}
Here, in each summand, exactly one factor is complicated and the others
are very simple. By expanding all brackets, the determinant matches 
$\sum \dfrac{\sum \A\B\C}{\poly(a,b)\poly(b,c)}$. By taking the lowest common denominator
of all fractions in the sum, the determinant matches $\dfrac{\sum \A\B\C}{\poly(a,b)\poly(b,c)}$.

Now, $\Center(S) = L(S_{a \to c})$ is not exactly the language described by the new GF(2)-grammar, $L(S') = L(S_{a \to a}) \triangle L(S_{a \to b}) \triangle L(S_{a \to c})$ is. 
However, $\Dual(L(S')) = \Dual(L(S_{a \to a})) + \Dual(L(S_{a \to b})) + \Dual(L(S_{a \to c}))$.
Therefore, series $\Dual(K) = \Dual(L(S'))$ match $\A + \dfrac{\sum \A\B}{\poly(a,b)} + \dfrac{\sum \A\B\C}{\poly(a, b)\poly(b,c) \sum \A\C} = \dfrac{\sum \A\B\C}{\poly(a, b)\poly(b,c) \sum \A\C}$. The last equivalence 
holds, because the first two summands are simple and are ``absorbed'' by complicated third summand. 
\end{proof}

\subsection{Algebraic manipulations}

We will establish the following theorem:
\begin{thm}\label{Main_ab_oldc} Let $L$ be a subset of $a^* b^* c^*$ described
by a GF(2)-grammar. Then, the formal power series $\Dual(L)$ match algebraic 
expression $\dfrac{\sum \A\B\C}{\poly(a, b) \poly (b, c) \sum \A\C}$.
\end{thm}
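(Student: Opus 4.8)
The conclusion asserted here is word for word the conclusion of Lemma~\ref{Translation_abc} (with the language merely renamed from $K$ to $L$): both state that $\Dual$ of a GF(2)-grammatical subset of $a^* b^* c^*$ matches $\dfrac{\sum \A\B\C}{\poly(a, b)\,\poly(b, c)\,\sum \A\C}$. So, in contrast to the two-letter development --- where ``Switching to the algebraic track'' produced the denominator $\sum \A\B$ and a genuine manipulation step (Lemma~\ref{Manipulation}) was then needed to shrink it to $\poly(a, b)$, yielding Theorem~\ref{Main_ab_old} --- here the translation lemma already lands on exactly the claimed expression. The plan is therefore to invoke Lemma~\ref{Translation_abc} directly. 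All of the substance lives in that lemma: the formal intersection with the DFA for $a^* b^* c^*$, the reduction of the ``central'' nonterminals $C_{a \to c}$ to the linear system $(A + B + I)x = f$, the appeal to Theorem~\ref{Main_ab_old} controlling each $\final(C)$, Cramer's rule, and the expression bookkeeping that places $\det(A + B + I)$ in $\sum \A\C$ and the Cramer numerator in $\dfrac{\sum \A\B\C}{\poly(a, b)\,\poly(b, c)}$.

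Since nothing separates the theorem from the lemma, there is no genuine obstacle to overcome for the statement as worded. The only delicate points --- the invertibility of $\det(A + B + I)$, which guarantees a unique solution, and the $n!$-term expansion of that determinant landing in $\sum \A\C$ --- already reside inside the proof of Lemma~\ref{Translation_abc}, and indeed are inherited verbatim from the two-letter argument of Lemma~\ref{Translation}. Thus the ``hard part'' has been discharged upstream, and the present proof amounts to a one-line citation.

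For orientation I note the single further manipulation that this subsection \emph{could} perform, although it is not required for the stated form: applying Lemma~\ref{Manipulation} to the variable pair $(a, c)$ shows that the inverse of any nonzero element of $\sum \A\C$ matches $\sum \rat(a, c)\,\A\C$. Absorbing that inverse into the numerator would collapse the $\sum \A\C$ factor of the denominator to a polynomial $\poly(a, c)$, giving the tidier $\dfrac{\sum \A\B\C}{\poly(a, b)\,\poly(b, c)\,\poly(a, c)}$ that matches membership in $R_{a, b, c}$. Because the theorem deliberately keeps $\sum \A\C$ in the denominator, this refinement is omitted here.
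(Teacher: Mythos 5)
Your proposal is correct for the statement as printed, and in substance it coincides with the paper's own proof. One point is worth making explicit: the printed conclusion of Theorem~\ref{Main_ab_oldc} is evidently a typo. As stated it is verbatim the conclusion of Lemma~\ref{Translation_abc}, which would make the theorem vacuous as a separate claim; indeed, the paper's own proof says ``by Lemma~\ref{Translation_abc}, it is enough to prove'' an equivalence between $\dfrac{\sum \A\B\C}{\poly(a, b) \poly(b, c) \sum \A\C}$ and $\dfrac{\sum \A\B\C}{\poly(a, b) \poly(b, c) \poly(a, c)}$, so the intended conclusion is the latter expression, consistent with Theorem~\ref{Main_abc} (membership in $R_{a, b, c}$) and with the closing remark of the appendix. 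You diagnosed exactly this situation: your main argument (a direct citation of Lemma~\ref{Translation_abc}) discharges the literal statement, and your final ``orientation'' paragraph --- inverting a nonzero element of $\sum \A\C$ into $\sum \rat(a, c)\, \A\C$ by the mechanism of Lemma~\ref{Manipulation} applied to the pair $(a, c)$, then absorbing the result into the numerator --- is precisely the content of the paper's unnamed final lemma, which notes that $\dfrac{\sum \A\C}{\sum \A\C}$ is no stronger than $\dfrac{\sum \A\C}{\poly(a, c)}$ and then absorbs the extra factor $\sum \A\C$ into $\sum \A\B\C$. The only difference is bookkeeping: you treat that last simplification as optional because the statement as worded keeps $\sum \A\C$ in the denominator, whereas the paper performs it because its intended theorem requires the polynomial denominator.
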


By Lemma~\ref{Translation_abc}, it is enough to prove the following lemma:
\begin{lm} The algebraic expressions $\dfrac{\sum \A\B\C}{\poly(a, b) \poly (b, c) \sum \A\C}$ 
and $\dfrac{\sum \A\B\C}{\poly(a, b) \poly (b, c) \poly (a, c)}$ are equivalent. 
\end{lm}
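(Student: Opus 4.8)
The plan is to reduce this to the already-established Lemma~\ref{Manipulation}. Observe that the two algebraic expressions differ in exactly one place: the factor $\sum \A\C$ in the denominator of the first is replaced by $\poly(a,c)$ in the second, while the numerator $\sum \A\B\C$ and the remaining denominator factors $\poly(a,b)\poly(b,c)$ are identical. Thus the whole statement is the exact analogue of Lemma~\ref{Manipulation}, but with the variable pair $(a,c)$ playing the role of $(a,b)$, and carrying the inert factors $\B$ (in the numerator) and $\poly(a,b)\poly(b,c)$ (in the denominator) along for the ride. I would first record that the proof of Lemma~\ref{Manipulation} used only that $\A$ and $\B$ are sets of algebraic power series in two distinct variables over the common field $\rat(a,b)$; the same argument, verbatim, gives $\dfrac{\sum \A\C}{\sum \A\C} = \dfrac{\sum \A\C}{\poly(a,c)}$, and in particular yields the key fact that the inverse of any nonzero $g \in \sum\A\C$ matches $\dfrac{\sum \A\C}{\poly(a,c)}$.

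For the easy inclusion, every polynomial in $\poly(a,c)$ is a finite sum of monomials $a^i c^j$, each of which matches $\A\C$ since $a^i \in \A$ and $c^j \in \C$; hence $\poly(a,c)$ is ``not weaker than'' $\sum\A\C$, and so any Laurent series matching $\dfrac{\sum \A\B\C}{\poly(a,b)\poly(b,c)\poly(a,c)}$ also matches $\dfrac{\sum \A\B\C}{\poly(a,b)\poly(b,c)\sum\A\C}$ by reusing the same denominator.

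For the nontrivial inclusion, suppose a Laurent series $f$ matches the first expression, so that $f = \dfrac{\sum_{i=1}^n A_i B_i C_i}{p\,q\,g}$ with $p \in \poly(a,b)$, $q\in\poly(b,c)$, and $g = \sum_{k=1}^m A'_k C'_k \in \sum\A\C$ nonzero. By the $(a,c)$-analogue of Lemma~\ref{Manipulation} quoted above, $g^{-1}$ matches $\dfrac{\sum\A\C}{\poly(a,c)}$, i.e. $g^{-1} = h/s$ for some $h \in \sum\A\C$ and some nonzero $s \in \poly(a,c)$. Substituting back gives $f = \dfrac{\left(\sum_{i=1}^n A_i B_i C_i\right) h}{p\,q\,s}$. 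The denominator $p\,q\,s$ lies in $\poly(a,b)\poly(b,c)\poly(a,c)$, and the numerator lies in $\sum\A\B\C$: each product $(A_i B_i C_i)(A'_k C'_k)$ equals $(A_i A'_k)\,B_i\,(C_i C'_k)$ with $A_i A'_k \in \A$ and $C_i C'_k \in \C$, using $\A\A=\A$ and $\C\C=\C$. Hence $f$ matches $\dfrac{\sum\A\B\C}{\poly(a,b)\poly(b,c)\poly(a,c)}$, as required.

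The only real work is the fact that $g^{-1}$ can be pulled back into $\dfrac{\sum\A\C}{\poly(a,c)}$, and this is precisely the algebraicity argument of Lemma~\ref{Manipulation}: the $\rat(a,c)$-span of all monomials in the finitely many algebraic series $A'_k, C'_k$ is finite-dimensional, so the nonnegative powers of $g$ satisfy a nontrivial $\rat(a,c)$-linear dependence, from which $g^{-1}$ is expressed as a $\rat(a,c)$-linear combination of nonnegative powers of $g$, and then as an element of $\dfrac{\sum\A\C}{\poly(a,c)}$. I expect this importation step to be the main (indeed the only) obstacle; everything else is bookkeeping with the inert $\B$, $\poly(a,b)$ and $\poly(b,c)$ factors.
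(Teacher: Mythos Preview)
Your argument is correct and follows exactly the paper's route: both directions are handled identically, and the nontrivial inclusion is obtained by invoking the $(a,c)$-instance of Lemma~\ref{Manipulation} to rewrite $(\sum\A\C)^{-1}$ as $\dfrac{\sum\A\C}{\poly(a,c)}$, then absorbing the extra $\sum\A\C$ factor into $\sum\A\B\C$ via $\A\A=\A$ and $\C\C=\C$. The only quibble is terminological: where you write ``$\poly(a,c)$ is `not weaker than' $\sum\A\C$'' the paper's convention is the opposite (``not stronger than''), but your surrounding sentences make clear you mean the correct inclusion.
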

\begin{proof} The proof is much simpler than the proof of Theorem~\ref{Main_ab_old}, because
we can use it now.

The second expression is not stronger than the first, because $\poly(a,c)$ is not stronger
than $\sum \A\C$.

On the other hand, we already know that the expression $\dfrac{\sum\A\C}{\sum\A\C}$
is not stronger than $\dfrac{\sum \A\C}{\poly(a,c)}$, because we needed that to prove
Theorem~\ref{Main_ab_old}. Therefore, the expression $\dfrac{\sum \A\B\C}{\poly(a, b) \poly (b, c) \sum \A\C}$ is not stronger than $\dfrac{\sum \A\B\C \cdot \sum \A\C}{\poly(a, b) \poly (b, c) \poly(a,c)}$. In the last expression, the factor $\sum \A\C$ can be ``absorbed'' into
$\sum \A\B\C$, giving us exactly the expression $\dfrac{\sum \A\B\C}{\poly(a, b) \poly (b, c) \poly(a,c)}$.
\end{proof}

\begin{rem} In a very similar way, with induction over the number $k$ of letters in the alphabet, we can prove
the following result: for $K \subseteq a_1^* a_2^* \ldots a_k^*$, the corresponding 
power series $\Dual(K)$ match the expression 
$\dfrac{\sum \prod\limits_{i=1}^n \A_i}{\prod\limits_{1 \leqslant i < j \leqslant n} \poly(a_i, a_j)}$, where $\A_i$ is the set of algebraic formal power series of variable $a_i$. 
\end{rem}

\end{appendices}

\end{document}